\numberwithin{equation}{section}
\newtheorem{lemma}{Lemma}
\newtheorem{theorem}{Theorem}
\newtheorem{prop}{Proposition}
\theoremstyle{remark}
\newtheorem{remark}{Remark}
\newcommand{\beq}{\begin{equation}}
\newcommand{\eeq}{\end{equation}}
\newcommand{\beqnn}{\begin{equation*}}
\newcommand{\eeqnn}{\end{equation*}}
\newcommand{\rd}{\partial}
\newcommand{\tp}[1]{\:{}^{\mathrm{t}}#1}
\newcommand{\CC}{\mathbb{C}}
\newcommand{\PP}{\mathbb{P}}
\newcommand{\ZZ}{\mathbb{Z}}
\newcommand{\GL}{\mathrm{GL}}
\newcommand{\gl}{\mathrm{gl}}
\newcommand{\bst}{\boldsymbol{t}}
\newcommand{\bsx}{\boldsymbol{x}}
\newcommand{\bsT}{\boldsymbol{T}}
\newcommand{\bszero}{\boldsymbol{0}}
\newcommand{\calO}{\mathcal{O}}
\newcommand{\calP}{\mathcal{P}}
\newcommand{\frakL}{\mathfrak{L}}
\newcommand{\bstbar}{\bar{\bst}}
\newcommand{\tbar}{\bar{t}}
\newcommand{\ubar}{\bar{u}}
\newcommand{\wbar}{\bar{w}}
\newcommand{\Bbar}{\bar{B}}
\newcommand{\Hbar}{\bar{H}}
\newcommand{\Tbar}{\bar{T}}
\newcommand{\Wbar}{\bar{W}}
\newcommand{\Lbar}{\bar{L}}
\newcommand{\bsTbar}{\bar{\bsT}}
\begin{document}

\title{\bf Orbifold melting crystal models\\ 
and reductions of Toda hierarchy}
\author{Kanehisa Takasaki\thanks{E-mail: takasaki@math.h.kyoto-u.ac.jp}\\
{\normalsize Department of Mathematics, Kinki University}\\ 
{\normalsize 3-4-1 Kowakae, Higashi-Osaka, Osaka 577-8502, Japan}}
\date{}
\maketitle 

\begin{abstract}
Orbifold generalizations of the ordinary and modified 
melting crystal models are introduced.  They are labelled 
by a pair $a,b$ of positive integers, and geometrically related 
to $\ZZ_a\times\ZZ_b$ orbifolds of local $\CC\PP^1$ geometry 
of the $\calO(0)\oplus\calO(-2)$ and $\calO(-1)\oplus\calO(-1)$ types. 
The partition functions have a fermionic expression in terms 
of charged free fermions. With the aid of shift symmetries 
in a fermionic realization of the quantum torus algebra, 
one can convert these partition functions to tau functions 
of the 2D Toda hierarchy.  The powers $L^a,\Lbar^{-b}$ 
of the associated Lax operators turn out to take a special 
factorized form that defines a reduction of the 2D Toda hierarchy.  
The reduced integrable hierarchy for the orbifold version 
of the ordinary melting crystal model is the bi-graded Toda hierarchy 
of bi-degree $(a,b)$. That of the orbifold version 
of the modified melting crystal model is the rational reduction 
of bi-degree $(a,b)$.  This result seems to be in accord with 
recent work of Brini et al. on a mirror description 
of the genus-zero Gromov-Witten theory on a $\ZZ_a\times\ZZ_b$ 
orbifold of the resolved conifold. 
\end{abstract}

\begin{flushleft}
2010 Mathematics Subject Classification: 
17B65, 35Q55, 81T30, 82B20\\
Key words: melting crystal, orbifold model, free fermion, 
quantum torus, shift symmetry, Toda hierarchy, 
rational reduction
\end{flushleft}

\newpage 

\section{Introduction}

Recently, we extended our previous work \cite{Takasaki12,Takasaki13} 
on the integrable structure of a modified melting crystal model 
to open string amplitudes of topological string theory 
on a generalized conifold \cite{Takasaki14}. 
The modified meting crystal model is essentially 
a reformulation of open string amplitudes, 
or local Gromov-Witten invariants, on the resolved conifold.  
Brini \cite{Brini10} conjectured, and partially proved at low orders 
of genus expansion, a mirror-theoretic correspondence 
between the local Gromov-Witten theory of the resolved conifold 
and the Ablowitz-Ladik hierarchy \cite{AL75,Vekslerchik97}. 
In the course of refining this observation, 
Brini and his collaborators \cite{BCR11} reformulated 
this integrable hierarchy as a special ``rational reduction'' 
of the 2D Toda hierarchy \cite{UT84,TT95}.  
We considered this issue from a different route 
once developed for the ordinary melting crystal model 
\cite{NT07,NT08}.  Technical clues therein are --- 
\begin{itemize}
\item[(i)] a fermionic expression of the partition function, 
\item[(ii)] an associated fermionic realization 
of the quantum torus algebra, 
\item[(iii)] algebraic relations called ``shift symmetries'' 
in this algebra, 
\item[(iv)] a matrix factorization problem that solves 
the initial value problem in the Lax formalism. 
\end{itemize}
Armed with these tools, we proved that 
the Ablowitz-Ladik hierarchy indeed underlies 
the modified melting crystal model \cite{Takasaki12,Takasaki13}, 
and extended this result to a generalized conifold  
\cite{Takasaki14}.  This generalized conifold 
is of the {\it bubble} type, namely, its web diagram 
is a linear chain of repetition of the web diagram 
of the resolved conifold.  The relevant integrable hierarchy 
in this case is a kind of generalized Ablowitz-Ladik hierarchy 
that is realized as a reduction of the 2D Toda hierarchy. 

In this paper, we present yet another extension of our study 
on the melting crystal models.  This work is motivated 
by a very recent paper of Brini et al. \cite{BCRR14} 
on a general scheme of rational reductions 
of the 2D Toda hierarchy and its application 
to the mirror theory of a toric Calabi-Yau {\it orbifold}.  
They proposed a large class of rational reductions, 
and proved that the dispersionless limit 
of a particular reduction captures 
the genus-zero Gromov-Witten theory 
of a $\ZZ_a\times\ZZ_b$ orbifold of the resolved conifold.  
Remarkably, the reduced integrable hierarchy therein, 
too, is a generalized Ablowitz-Ladik hierarchy, 
but totally different from ours.  
In the case of the orbifold, the $a$-th and $b$-th powers 
of the Lax operators $L,\Lbar^{-1}$ of the 2D Toda hierarchy 
are expected to take the ``rational'' form 
\beqnn
  L^a = BC^{-1},\quad \Lbar^{-b} = DCB^{-1}, 
\eeqnn
where $D$ is a non-zero constant, normalized to $1$ 
in the setting of Brini et al., and $B$ and $C$ 
are difference operators of the form
\beqnn
\begin{gathered}
  B = e^{a\rd_s} + \beta_1(s)e^{(a-1)\rd_s} + \cdots + \beta_a(s),\\
  C = 1 + \gamma_1(s)e^{-\rd_s} + \cdots + \gamma_b(s)e^{-b\rd_s} 
\end{gathered}
\eeqnn
on the 1D lattice with coordinate $s$.  
In the case of the bubble type \cite{Takasaki14}, 
the Lax operators themselves take the rational form 
\beqnn
  L = Be^{(1-N)\rd_s}C^{-1},\quad 
  \Lbar^{-1} = DCe^{(N-1)\rd_s}B^{-1}, 
\eeqnn
where $N$ is the number of repetition 
of the conifold diagram, $D$ is a non-zero constant, 
and $B$ and $C$ are difference operators of the form 
\beqnn
\begin{gathered}
  B = e^{N\rd_s} + \beta_1(s)e^{(N-1)\rd_s} + \cdots + \beta_N(s),\\
  C = 1 + \gamma_1(s)e^{-\rd_s} + \cdots + \gamma_N(s)e^{-N\rd_s}. 
\end{gathered}
\eeqnn
One of our goals is to derive the former 
--- ``the rational reduction of bi-degree $(a,b)$'' 
in the terminology of Brini et al. \cite{BCRR14} --- 
from a generalized melting crystal model.  

Such a melting crystal model can be found in the work 
of Bryan et al. \cite{BCY10} on an orbifold version  
of the method of topological vertex \cite{AKMV03}.  
They illustrated their method with two examples. 
One of them is local $\CC\PP^1_{a,b}$ geometry, namely, 
the total space of the bundle $\calO(-1)\oplus\calO(-1)$ 
over a $\ZZ_a\times\ZZ_b$ orbifold $\CC\PP^1_{a,b}$ of $\CC\PP^1$. 
This, too, is an orbifold generalization of the resolved conifold.  
The orbifold topological vertex construction 
gives a generating function of Donaldson-Thomas invariants 
of this Calabi-Yau threefold.  Its main part may be 
thought of as a generalized melting crystal model.  
We call this model {\it an orbifold melting crystal model\/}. 
In the non-orbifold case where $a = b = 1$, 
this model reduces to the modified melting crystal model.  
We shall show that this orbifold version 
of the modified melting crystal model is indeed 
accompanied by the rational reduction of bi-degree $(a,b)$ 
of the 2D Toda hierarchy.  

In the same sense, one can construct an orbifold version 
of the ordinary melting crystal model.  Geometrically, 
this amounts to local $\CC\PP^1_{a,b}$ geometry 
of the $\calO(0)\oplus\calO(-2)$ type.  We shall show, 
as another goal, that this orbifold model corresponds 
to the bi-graded Toda hierarchy of bi-degree $(a,b)$ 
\cite{Kuperschmidt85,Carlet06}.  
This integrable hierarchy is a reduction 
of the 2D Toda hierarchy for which the Lax operators 
satisfy the reduction condition 
\beqnn
  L^a = D^{-1}\Lbar^{-b}, 
\eeqnn
where $D$ is a non-zero constant that is usually normalized 
to $1$.  Both sides of this condition become  
a difference operator of the form 
\beqnn
  \frakL = e^{a\rd_s} + \alpha_1(s)e^{(a-1)\rd_s} + \cdots 
           + \alpha_{a+b}(s)e^{-b\rd_s}. 
\eeqnn
When $a = b = 1$, $\frakL$ reduces to the well known 
Lax operator of the 1D Toda hierarchy.  
Just like the 1D Toda hierarchy \cite{CDZ04,Milanov05,Takasaki10}, 
the bi-graded Toda hierarchy can be extended 
by extra logarithmic flows \cite{Carlet06}, 
and has been applied to the Gromov-Witten theory 
of an orbifold of $\CC\PP^1$ \cite{MT06,CvdL13}. 
In the case of the orbifold melting crystal model, 
both sides of the reduction condition are further factorized as 
\beqnn
  L^a = D^{-1}\Lbar^{-b} = BC, 
\eeqnn
where $B$ and $C$ are difference operators of the same form 
as those in the rational reduction of bi-degree $(a,b)$.  
Our previous melting crystal model with ``two $q$-parameters'' 
\cite{Takasaki09} 
\footnote{Note that this model is not directly related 
to the so called ``refinement''. }
turns out to be a special case of this orbifold model.  

The most prominent characteristic of the orbifold cases 
is that the powers $L^a,\Lbar^{-b}$ of $L,\Lbar^{-1}$ 
rather than $L,\Lbar^{-1}$ themselves show up 
in the reduction condition.  This affects 
the correspondence between the coupling constants 
$t_k,\tbar_k$ of the models and the time variables 
of the 2D Toda hierarchy as well.  
From a technical point of view, this characteristic 
is related to {\it fractional framing factors\/}  
in the orbifold topological vertex construction \cite{BCY10}. 
We shall encounter their avatars in our calculations. 
They are responsible for the emergence of the powers 
$L^a,\Lbar^{-b}$ of the Lax operators.  

This paper is organized as follows.  
The orbifold version of the ordinary model 
is referred to as {\it the first orbifold model\/}. 
The orbifold version of the modified model 
is distinguished as {\it the second orbifold model\/}.  
Throughout the paper, these models are treated 
in a fully parallel way.  Section 2 is devoted 
to formulation of these models. 
Sections 3 and 4 are focused on implications 
of the fermionic expression of the partition functions. 
Section 5 and 6 present the aforementioned results 
on the Lax operators.   Let us explain the contents 
in more detail. 

In Section 2, these two orbifold models are defined 
as models of random partitions, and translated 
to the language of charged free fermions.  
The Boltzmann weights of the partition functions 
are built from special values of the infinite-variable 
Schur functions $s_\lambda(\bsx)$, $\bsx = (x_1,x_2,\ldots)$.  
These weights are deformed by external potentials 
$\Phi_k(\lambda,s)$, where $k = 1,2,\ldots$ 
in the first orbifold model and $k = \pm 1,\pm 2,\ldots$. 
in the second orbifold model.  
Note that these potentials depend 
on the lattice coordinate $s$ as well. 
The partition function $Z_{a,b}(s,\bst)$ 
of the first orbifold model is a function of $s$ 
and a set of coupling constants $\bst = (t_1,t_2,\ldots)$.  
The partition function $Z'_{a,b}(s,\bst,\bstbar)$ 
of the second orbifold model depends on $s$ and two sets 
of coupling constants $\bst = (t_1,t_2,\ldots)$ 
and $\bstbar  = (\tbar_1,\tbar_2,\ldots)$. 

In Sections 3 and 4, these partition functions 
are converted to tau functions of the 2D Toda hierarchy. 
This is a somewhat lengthy and complicated generalization 
of the calculations for the ordinary \cite{NT07,NT08} and 
modified \cite{Takasaki12,Takasaki13} melting crystal models.   
We start from the fermionic expression 
of the partition functions  that does not look like 
tau functions. This expression contains the generators 
$e^{H(\bst)}$ and $e^{\Hbar(\bstbar)}$ of time evolutions.  
$H(\bst)$ and $\Hbar(\bstbar)$ are linear combinations 
of fermion bilinears $H_k$ and $H_{-k}$ that are elements 
of a fermionic realization of the quantum torus algebra. 
We use the aforementioned shift symmetries  
of this algebra to transform $H_{\pm k}$'s 
to elements of the $U(1)$ current algebra, 
namely, infinitesimal generators of time evolutions 
of the 2D Toda hierarchy.   By these calculations, 
the partition functions turn out to be, 
up to a relatively simple prefactor, tau functions 
of the 2D Toda hierarchy (Theorems 1 and 2).  

In Sections 5 and 6, these special solutions 
of the 2D Toda hierarchy are re-interpreted 
in the Lax formalism.  To this end, we make use 
of a matrix factorization problem that determines 
the dressing operators $W,\Wbar$ and 
the Lax operators $L,\Lbar$.  (These operators are 
translated to $\ZZ\times\ZZ$ matrices in advance.) 
Fortunately, the factorization problem 
in the present setting can be solved explicitly 
at the initial time $\bst = \bstbar = \bszero$. 
This enables us, after some algebra, to derive 
the factorized form of $L^a$ and $\Lbar^{-b}$ 
at the initial time.  Since the factorized form 
is known to be preserved by all time evolutions 
of the 2D Toda hierarchy, this is enough 
to deduce the conclusion (Theorem 3).

\section{Orbifold melting crystal models}

\subsection{Partition functions}

The ordinary melting crystal model \cite{ORV03,MNTT04} 
and its resolved conifold version \cite{BP04,CGMPS06} 
are defined, respectively, by the partition functions 
\begin{gather}
  Z = \sum_{\lambda\in\calP}s_\lambda(q^{-\rho})^2Q^{|\lambda|},
    \label{Z}\\
  Z' = \sum_{\lambda\in\calP}
       s_\lambda(q^{-\rho})s_{\tp{\lambda}}(q^{-\rho})Q^{|\lambda|}, 
    \label{Z'}
\end{gather}
where $\calP$ denotes the set of all partitions 
$\lambda = (\lambda_i)_{i=1}^\infty$, $\tp{\lambda}$ 
the conjugate (or transposed) partition of $\lambda$, 
$|\lambda|$ the sum of all parts $\lambda_i$, 
and $s_\lambda(q^{\-\rho})$ the special value of 
the infinite-variable Schur function $s_\lambda(\bsx)$, 
$\bsx = (x_1,x_2,\ldots)$ \cite{Macdonald-book}, at 
\beqnn
  q^{-\rho} = \left(q^{1/2},q^{3/2},\ldots,q^{n-1/2},\ldots\right). 
\eeqnn

The orbifold models are obtained by replacing 
the Boltzmann weights as 
\beqnn
  s_\lambda(q^{-\rho})^2 \to 
  s_\lambda(p_1q^{-\rho},\ldots,p_aq^{-\rho})
  s_\lambda(r_1q^{-\rho},\ldots,r_bq^{-\rho})
\eeqnn
and 
\beqnn
  s_\lambda(q^{-\rho})s_{\tp{\lambda}}(q^{-\rho}) \to 
  s_\lambda(p_1q^{-\rho},\ldots,p_aq^{-\rho})
  s_{\tp{\lambda}}(r_1q^{-\rho},\ldots,r_bq^{-\rho}) 
\eeqnn
in the foregoing partition functions $Z$ and $Z'$.  
Here $p_1,p_2,\ldots,p_a$ and $r_1,r_2,\ldots,p_b$ 
are two sets of parameters of these models, and 
$(p_1q^{-\rho},\ldots,p_aq^{-\rho})$ and 
$(r_1q^{-\rho},\ldots,r_bq^{-\rho})$ stand for the union of 
\beqnn
  p_iq^{-\rho} 
  = \left(p_iq^{1/2},p_iq^{3/2},\ldots,p_iq^{n-1/2},\ldots\right),\quad
  i = 1,\ldots,a
\eeqnn
and 
\beqnn
  r_jq^{-\rho}
  = \left(r_jq^{1/2},r_jq^{3/2},\ldots,r_jq^{n-1/2},\ldots\right),\quad 
  j = 1,\ldots,b, 
\eeqnn
respectively. Since the Schur functions are symmetric functions, 
one can do such regrouping and reordering of arguments freely. 
The partition functions of the orbifold models 
are sums of these Boltzmann weights over all partitions: 
\begin{gather}
  Z_{a,b} = \sum_{\lambda\in\calP}
    s_\lambda(p_1q^{-\rho},\ldots,p_aq^{-\rho})
    s_\lambda(r_1q^{-\rho},\ldots,r_bq^{-\rho})Q^{|\lambda|}, 
    \label{Zab}\\
  Z'_{a,b} = \sum_{\lambda\in\calP}
    s_\lambda(p_1q^{-\rho},\ldots,p_aq^{-\rho})
    s_{\tp{\lambda}}(r_1q^{-\rho},\ldots,r_bq^{-\rho})Q^{|\lambda|}. 
    \label{Z'ab}
\end{gather}
$Z'_{a,b}$ is essentially the Donaldson-Thomas 
partition function of local $\CC\PP^1_{a,b}$ geometry 
presented by Bryan et al. \cite{BCY10}.  

By the Cauchy identities 
\beq
\begin{gathered}
  \sum_{\lambda\in\calP}
    s_\lambda(x_1,x_2,\ldots)s_\lambda(y_1,y_2,\ldots)
  = \prod_{m,n=1}^\infty (1 - x_my_n)^{-1},\\
  \sum_{\lambda\in\calP}
    s_{\lambda}(x_1,x_2,\ldots)s_{\tp{\lambda}}(y_1,y_2,\ldots)
  = \prod_{m,n=1}^\infty (1 + x_my_n),
\end{gathered}
\eeq
the partition functions turn into the product form 
\begin{gather}
  Z_{a,b} = \prod_{i=1}^a\prod_{j=1}^b M(p_ir_jQ,q),
    \label{Zab-Mac}\\
  Z'_{a,b} = \prod_{i=1}^a\prod_{j=1}^b M(-p_ir_jQ,q)^{-1},
    \label{Z'ab-Mac} 
\end{gather}
where $M(x,q)$ denotes the MacMahon function 
\beqnn
  M(x,q) =\prod_{n=1}^\infty(1 - xq^n)^{-n}.
\eeqnn

\begin{remark}
Since homogeneity of the Schur functions imply the identities 
\beq
\begin{gathered}
  s_\lambda(p_1q^{-\rho},\ldots,p_aq^{-\rho}) 
  = s_\lambda\left(\frac{p_1}{p_a}q^{-\rho},\ldots,
      \frac{p_{a-1}}{p_a}q^{-\rho},q^{-\rho}\right)p_a^{|\lambda|},\\
  s_\lambda(r_1q^{-\rho},\ldots,r_bq^{-\rho}) 
  = s_\lambda\left(\frac{r_1}{r_b}q^{-\rho},\ldots,
      \frac{r_{b-1}}{r_b}q^{-\rho},q^{-\rho}\right)r_b^{|\lambda|}, 
\end{gathered}
\label{normalize-pr1}
\eeq
one can normalize the parameters of the models as 
\beq
  p_a = r_b = 1 
\label{normalize-pr2}
\eeq
by replacing $Q \to Q(p_ar_b)^{-1}$.  
\end{remark}

\begin{remark}
\label{remark-2q}
Since 
\beqnn
\begin{gathered}
  s_\lambda(q^{-\rho/a}) 
  = s_\lambda(q^{(1-a)/2a}q^{-\rho},q^{(3-a)/2a}q^{-\rho},
      \ldots,q^{(a-1)/2a}q^{-\rho}),\\
  s_\lambda(q^{-\rho/b}) 
  = s_\lambda(q^{(1-b)/2b}q^{-\rho},q^{(3-b)/2b}q^{-\rho},
      \ldots,q^{(b-1)/2b}q^{-\rho}),
\end{gathered}
\eeqnn
these orbifold models reduce to the modifications 
\begin{gather}
  \tilde{Z}_{a,b} = \sum_{\lambda\in\calP}
    s_\lambda(q^{\rho/a})s_\lambda(q^{\rho/b})Q^{|\lambda|},
    \label{Ztilde-ab}\\
  \tilde{Z'}_{a,b} = \sum_{\lambda\in\calP}
    s_\lambda(q^{\rho/a})s_{\tp{\lambda}}(q^{\rho/b})Q^{|\lambda|}
    \label{Z'tilde-ab}
\end{gather}
of the ordinary models (\ref{Z}) and (\ref{Z'}) 
when the parameters are specialized as 
\beqnn
\begin{gathered}
  p_1 = q^{(1-a)/2a},\; p_2 = q^{(3-a)/2a},\;\ldots\; p_a = q^{(a-1)/2a},\\
  r_1 = q^{(1-b)/2b},\; r_2 = q^{(3-b)/2b},\;\ldots\; r_b = q^{(b-1)/2}. 
\end{gathered}
\eeqnn
This is an unexpected link with our previous work 
on a melting crystal model with two $q$-parameters 
\cite{Takasaki09}.  The previous model coincides 
with $\tilde{Z}_{a,b}$ when the $q$-parameters $q_1,q_2$ 
therein are specialized as 
\beq
  q_1 = q^{1/a},\quad q_2 = q^{1/b}. 
\label{q1q2-special}
\eeq

\end{remark}

\subsection{Fermionic formulation}

The setup for the fermionic formulation \cite{MJD-book} 
is the same as our earlier work \cite{NT07,NT08}: 
\begin{itemize}
\item[(i)] Fourier modes $\psi_n,\psi^*_n$ of the 2D free fermion 
fields $\psi(z),\psi^*(z)$ are labelled by integers $n\in\ZZ$, 
and satisfy the anti-commutation relations 
\beqnn
  \psi_m\psi^*_n + \psi^*_n\psi_m = \delta_{m+n,0}, \quad
  \psi_m\psi_n + \psi_n\psi_m = 0, \quad 
   \psi^*_m\psi^*_n + \psi^*_n\psi^*_m = 0. 
\eeqnn
\item[(ii)] The Fock space and its dual space are decomposed 
to the charge-$s$ sectors for $s \in \ZZ$.  An orthonormal basis 
of the charge-$s$ sector is given by the ground states 
\beqnn
\begin{gathered}
  \langle s| = \langle -\infty|\cdots\psi^*_{s-1}\psi^*_s,\\
  |s\rangle = \psi_{-s}\psi_{-s+1}\cdots|-\infty\rangle 
\end{gathered}
\eeqnn
and the excited states 
\beqnn
\begin{gathered}
  \langle s,\lambda| 
  = \langle -\infty|\cdots\psi^*_{\lambda_i+s-i+1}
    \cdots\psi^*_{\lambda_2+s-1}\psi^*_{\lambda_1+s},\\
  |s,\lambda\rangle 
  = \psi_{-\lambda_1-s}\psi_{-\lambda_2-s+1}\cdots
    \psi_{-\lambda_i-s+i-1}\cdots|-\infty\rangle 
\end{gathered}
\eeqnn
labelled by partitions $\lambda = (\lambda_i)_{i=1}^\infty \in \calP$. 
\item[(iii)] The action of the fermion bilinears 
\beqnn
\begin{gathered}
  L_0 = \sum_{n\in\ZZ}n{:}\psi_{-n}\psi^*_n{:},\quad
  W_0 = \sum_{n\in\ZZ}n^2{:}\psi_{-n}\psi^*_n{:},\\
  H_k = \sum_{n\in\ZZ}q^{kn}{:}\psi_{-n}\psi^*_n{:},\quad
  J_k = \sum_{n\in\ZZ}{:}\psi_{-n}\psi^*_{n+k}{:} 
\end{gathered}
\eeqnn
on the Fock space, where $:\quad:$ denotes 
the normal ordering with respect to 
the vacuum states $\langle 0|$, $|0\rangle$, 
preserves the charge-$s$ sector.  
$J_0,L_0,W_0$ and $H_k$'s are diagonal with respect to 
the basis $\{|\lambda,s\rangle\}_{\lambda\in\calP}$: 
\begin{align}
  \langle\lambda,s|J_0|\mu,s\rangle 
  &= \delta_{\lambda\mu}s,\\
  \langle\lambda,s|L_0|\mu,s\rangle 
  &= \delta_{\lambda\mu}\left(|\lambda|+\frac{s(s+1)}{2}\right), \\
  \langle\lambda,s|W_0|\mu,s\rangle
  &= \delta_{\lambda\mu}\left(\kappa(\lambda) + (2s+1)|\lambda| 
     + \frac{s(s+1)(2s+1)}{6}\right), \\
  \langle\lambda,s|H_k|\mu,s\rangle 
  &= \delta_{\lambda\mu}\Phi_k(\lambda,s), 
\end{align}
where 
\beqnn
  \kappa(\lambda) 
  = \sum_{i=1}^\infty 
    \left(\left(\lambda_i-i+\frac{1}{2}\right)^2 
        - \left(-i+\frac{1}{2}\right)^2\right)
\eeqnn
and 
\beqnn
  \Phi_k(\lambda,s) 
  = \sum_{i=1}^\infty(q^{k(\lambda_i+s-i+1)} - q^{k(s-i+1)}) 
   + \frac{1-q^{ks}}{1-q^k}q^k. 
\eeqnn
\item[(vi)] The vertex operators \cite{OR01,BY08} 
\beqnn
  \Gamma_{\pm}(z) 
  = \exp\left(\sum_{k=1}^\infty\frac{z^k}{k}J_{\pm k}\right),\quad
  \Gamma'_{\pm}(z) 
  = \exp\left(- \sum_{k=1}^\infty\frac{(-z)^k}{k}J_{\pm k}\right) 
\eeqnn
and the multi-variable extensions 
\beqnn
  \Gamma_{\pm}(\bsx) = \prod_{i\ge 1}\Gamma_{\pm}(x_i),\quad 
  \Gamma'_{\pm}(\bsx) = \prod_{i\ge 1}\Gamma'_{\pm}(x_i) 
\eeqnn
to $\bsx = (x_1,x_2,\ldots)$, too, preserve the charge-$s$ sector. 
The matrix elements become skew-Schur functions 
\cite{Macdonald-book,MJD-book}
\beq
\begin{gathered}
  \langle\lambda,s|\Gamma_{-}(\bsx)|\mu,s\rangle
  = \langle\mu,s|\Gamma_{+}(\bsx)|\lambda,s\rangle
  = s_{\lambda/\mu}(\bsx),\\
  \langle\lambda,s|\Gamma'_{-}(\bsx)|\mu,s\rangle 
  = \langle\mu,s|\Gamma'_{+}(\bsx)|\lambda,s\rangle
  = s_{\tp{\lambda}/\tp{\mu}}(\bsx). 
\end{gathered}
\eeq
\end{itemize}

A fermionic expression of $Z_{a,b}$ can be derived as follows. 
Note that the special values of the Schur functions 
in the Boltzmann weights of $Z_{a,b}$ can be expressed 
in a fermionic form as 
\beqnn
\begin{gathered}
  s_\lambda(p_1q^{-\rho},\cdots,p_aq^{-\rho}) 
  = \langle 0|\Gamma_{+}(p_1q^{-\rho})
    \cdots\Gamma_{+}(p_aq^{-\rho})|\lambda,0\rangle, \\
  s_\lambda(r_1q^{-\rho},\cdots,r_bq^{-\rho}) 
  = \langle\lambda,0|\Gamma_{-}(r_bq^{-\rho})
    \cdots\Gamma_{-}(r_1q^{-\rho})|0\rangle. 
\end{gathered}    
\eeqnn
Normalize $p_a$ and $r_b$ as shown in (\ref{normalize-pr1}) 
and (\ref{normalize-pr2}), and introduce 
new parameters $P_1,P_2,\ldots,P_{a-1}$ 
and $R_1,R_2,\ldots,R_{b-1}$ as 
\beq
\begin{gathered}
  p_i = P_i\cdots P_{a-1}\quad\text{for $i = 1,2,\ldots,a-1$},\quad 
  p_a = 1,\\
  r_j = R_j\cdots R_{b-1}\quad\text{for $j = 1,2,\ldots,b-1$},\quad 
  r_b = 1. 
\end{gathered}
\label{pr-PR}
\eeq
Recall that setting $p_a = r_b = 1$ does not lead 
to loss of generality.  The products of $\Gamma_{\pm}$'s 
in the expression of the Schur functions thereby become 
\begin{align*}
  &\Gamma_{+}(p_1q^{-\rho})\cdots\Gamma_{+}(p_{a-1}q^{-\rho})
   \Gamma_{+}(q^{-\rho})\nonumber\\
  &= \Gamma_{+}(P_1\cdots P_{a-1}q^{-\rho})\cdots
     \Gamma_{+}(P_{a-1}q^{-\rho})\Gamma_{+}(q^{-\rho})\nonumber\\
  &= (P_1\cdots P_{a-1})^{-L_0}\Gamma_{+}(q^{-\rho})
     P_1^{L_0}\Gamma_{+}(q^{-\rho})P_2^{L_0}\cdots
     \Gamma_{+}(q^{-\rho})P_{a-1}^{L_0}\Gamma_{+}(q^{-\rho}) 
\end{align*}
and 
\begin{align*}
  &\Gamma_{-}(q^{-\rho})\Gamma_{-}(r_{b-1}q^{-\rho})
   \cdots\Gamma_{-}(r_1q^{-\rho})\nonumber\\
  &= \Gamma_{-}(q^{-\rho})\Gamma_{-}(R_{b-1}q^{-\rho})\cdots
     \Gamma_{-}(R_{b-1}\cdots R_2q^{-\rho})
     \Gamma_{-}(R_{b-1}\cdots R_1q^{-\rho})\nonumber\\
  &= \Gamma_{-}(q^{-\rho})R_{b-1}^{L_0}\Gamma_{-}(q^{-\rho})
     \cdots R_2^{L_0}\Gamma_{-}(q^{-\rho})R_1^{L_0}
     \Gamma_{-}(q^{-\rho})(R_1\cdots R_{b-1})^{-L_0}. 
\end{align*}
The scaling property 
\beq
  u^{L_0}J_ku^{-L_0} = u^{-k}J_k, \quad k\in\ZZ, 
\label{J-scaling}
\eeq
of $J_k$'s and its consequences 
\beq
  \Gamma_{+}(vq^{-\rho})u^{L_0} = u^{L_0}\Gamma_{+}(uvq^{-\rho}),\quad 
  u^{L_0}\Gamma_{-}(vq^{-\rho}) = \Gamma_{-}(uvq^{-\rho})u^{L_0}
\label{Gamma-scaling}
\eeq
have been used here.  Since the leftmost factor 
$(P_1\cdots P_{a-1})^{-L_0}$ and the rightmost factor 
$(R_1\cdots R_{b-1})^{-L_0}$ disappear as they hit 
the vacuum vector, the fermionic expression 
of the special values of the Schur functions 
can be rewritten as 
\begin{multline*}
  s_\lambda(p_1q^{-\rho},\ldots,p_{a-1}q^{-\rho},q^{-\rho}) \\
  = \langle 0|\Gamma_{+}(q^{-\rho})P_1^{L_0}\Gamma_{+}(q^{-\rho})P_2^{L_0}\cdots
    \Gamma_{+}(q^{-\rho})P_{a-1}^{L_0}\Gamma_{+}(q^{-\rho})|\lambda,0\rangle
\end{multline*}
and 
\begin{multline*}
  s_\lambda(r_1q^{-\rho},\ldots,r_{b-1}q^{-\rho},q^{-\rho}) \\
  = \langle\lambda,0|\Gamma_{-}(q^{-\rho})R_{b-1}^{L_0}\Gamma_{-}(q^{-\rho})
    \cdots R_2^{L_0}\Gamma_{-}(q^{-\rho})R_1^{L_0}\Gamma_{-}(q^{-\rho})|0\rangle. 
\end{multline*}
Thus the partition function $Z_{a,b}$ can be cast into 
the fermionic expression 
\begin{multline}
  Z_{a,b} = \langle 0|\Gamma_{+}(q^{-\rho})P_1^{L_0}
            \Gamma_{+}(q^{-\rho})P_2^{L_0}\cdots 
            \Gamma_{+}(q^{-\rho})P_{a-1}^{L_0}\Gamma_{+}(q^{-\rho})Q^{L_0}\\
  \mbox{}\times 
            \Gamma_{-}(q^{-\rho})R_{b-1}^{L_0}\Gamma_{-}(q^{-\rho})
            \cdots R_2^{L_0}\Gamma_{-}(q^{-\rho})
            P_1^{L_0}\Gamma_{-}(q^{-\rho})|0\rangle.
\label{Zab-fermion}
\end{multline}

In the same way, using the primed versions 
\begin{multline*}
  s_{\tp{\lambda}}(r_1q^{-\rho},\ldots,r_{b-1}q^{-\rho},q^{-\rho}) \\
  = \langle\lambda,0|\Gamma'_{-}(q^{-\rho})R_{b-1}^{L_0}\Gamma'_{-}(q^{-\rho})
    \cdots R_2^{L_0}\Gamma'_{-}(q^{-\rho})R_1^{L_0}\Gamma'_{-}(q^{-\rho})|0\rangle 
\end{multline*}
and 
\beq
  \Gamma'_{+}(vq^{-\rho})u^{L_0} = u^{L_0}\Gamma'_{+}(uvq^{-\rho}),\quad
  u^{L_0}\Gamma'_{-}(vq^{-\rho}) = \Gamma'_{-}(vq^{-\rho})u^{L_0}
\label{Gamma'-scaling}
\eeq
of the foregoing relations as well, one can derive 
the following fermionic expression of $Z'_{a,b}$: 
\begin{multline}
  Z'_{a,b} = \langle 0|\Gamma_{+}(q^{-\rho})P_1^{L_0}
            \Gamma_{+}(q^{-\rho})P_2^{L_0}\cdots 
            \Gamma_{+}(q^{-\rho})P_{a-1}^{L_0}\Gamma_{+}(q^{-\rho})Q^{L_0}\\
  \mbox{}\times 
            \Gamma'_{-}(q^{-\rho})R_{b-1}^{L_0}\Gamma'_{-}(q^{-\rho})
            \cdots R_2^{L_0}\Gamma'_{-}(q^{-\rho})
            P_1^{L_0}\Gamma'_{-}(q^{-\rho})|0\rangle.
\label{Z'ab-fermion}
\end{multline}
Note that the $\Gamma_{-}$'s in the expression of $Z_{a,b}$ 
are replaced by $\Gamma'_{-}$'s.

\subsection{Deformations by external potentials}

The deformed partition functions $Z_{a,b}(s,\bst)$ 
and $Z'_{a,b}(s,\bst,\bstbar)$, 
$\bst = (t_1,t_2,\ldots)$, 
$\bstbar = (\tbar_1,\tbar_2,\ldots)$, 
are defined as 
\begin{multline}
  Z_{a,b}(s,\bst)
  = \sum_{\lambda\in\calP}
    s_\lambda(p_1q^{-\rho},\ldots,p_aq^{-\rho})
    s_\lambda(r_1q^{-\rho},\ldots,r_bq^{-\rho})\\
  \mbox{}\times
    Q^{|\lambda|+s(s+1)/2}e^{\Phi(\lambda,s,\bst)} 
\label{Zab(s,t)}
\end{multline}
and 
\begin{multline}
  Z'_{a,b}(s,\bst,\bstbar)
  = \sum_{\lambda\in\calP}
    s_\lambda(p_1q^{-\rho},\ldots,p_aq^{-\rho})
    s_{\tp{\lambda}}(r_1q^{-\rho},\ldots,r_bq^{-\rho})\\
  \mbox{}\times 
    Q^{|\lambda|+s(s+1)/2}e^{\Phi(\lambda,s,\bst,\bstbar)},
\label{Z'ab(s,t,tbar)}
\end{multline}
where 
\beqnn
\begin{gathered}
  \Phi(\lambda,s,\bst) = \sum_{k=1}^\infty t_k\Phi_k(\lambda,s),\\
  \Phi(\lambda,s,\bst,\bstbar) 
    = \sum_{k=1}^\infty t_k\Phi_k(\lambda,s) 
      + \sum_{k=1}^\infty\tbar_k\Phi_{-k}(\lambda,s).
\end{gathered}
\eeqnn
Under the normalization of parameters as shown in (\ref{pr-PR}), 
\footnote{To absorb the extra factors $p_a^{|\lambda|}$ 
and $r_b^{|\lambda|}$ in (\ref{normalize-pr1}) 
into redefinition of $Q$ in this $s$-dependent setting, 
we have to correct these partition functions 
by the overall factor $(p_ar_b)^{s(s+1)/2}$.  
$Z_{a,b}(s,\bst)$ and $Z'_{a,b}(s,\bst,\bstbar)$ 
in the fermionic expression stand for those corrected 
partition functions.}
these partition functions 
can be converted to a fermionic form as 
\begin{multline}
  Z_{a,b}(s,\bst) 
    = \langle s|\Gamma_{+}(q^{-\rho})P_1^{L_0}\Gamma_{+}(q^{-\rho})P_2^{L_0}\cdots 
      \Gamma_{+}(q^{-\rho})P_{a-1}^{L_0}\Gamma_{+}(q^{-\rho})
      Q^{L_0}e^{H(\bst)}\\
  \mbox{}\times \Gamma_{-}(q^{-\rho})R_{b-1}^{L_0}\Gamma_{-}(q^{-\rho})\cdots 
      R_2^{L_0}\Gamma_{-}(q^{-\rho})P_1^{L_0}\Gamma_{-}(q^{-\rho})|s\rangle 
\label{Zab(s,t)-fermion}
\end{multline}
and 
\begin{multline}
  Z'_{a,b}(s,\bst,\bstbar) 
    = \langle s|\Gamma_{+}(q^{-\rho})P_1^{L_0}\Gamma_{+}(q^{-\rho})P_2^{L_0}\cdots 
      \Gamma_{+}(q^{-\rho})P_{a-1}^{L_0}\Gamma_{+}(q^{-\rho})
      Q^{L_0}e^{H(\bst,\bstbar)} \\
  \mbox{}\times \Gamma'_{-}(q^{-\rho})R_{b-1}^{L_0}\Gamma'_{-}(q^{-\rho})\cdots 
        R_2^{L_0}\Gamma'_{-}(q^{-\rho})P_1^{L_0}\Gamma'_{-}(q^{-\rho})|s\rangle, 
\label{Z'ab(s,t,tbar)-fermion}
\end{multline}
where 
\beqnn
  H(\bst) = \sum_{k=1}^\infty t_kH_k,\quad 
  \Hbar(\bstbar) = \sum_{k=1}^\infty\tbar_kH_{-k},\quad 
  H(\bst,\bstbar) = H(\bst) + \Hbar(\bstbar). 
\eeqnn

\section{Partition function as tau function: First orbifold model}

\subsection{Shift symmetries of quantum torus algebra}

Let $V^{(k)}_n$, $k,m\in\ZZ$, denote the fermion bilinears 
\beqnn
  V^{(k)}_m 
  = q^{-km/2}\sum_{n\in\ZZ}q^{kn}{:}\psi_{m-n}\psi^*_{n}{:} 
\eeqnn
that includes $H_k$'s and $J_k$'s as particular cases: 
\beqnn
  H_k = V^{(k)}_0,\quad J_k = V^{(0)}_k. 
\eeqnn
These fermion bilinears give a realization of 
(a central extension of) the 2D quantum torus algebra. 
Namely, they satisfy the commutation relations 
\begin{multline}
  [V^{(k)}_m, V^{(l)}_n]\\
  = \begin{cases}
    (q^{(lm-kn)/2} - q^{(kn-lm)/2})
    (V^{(k+l)}_{m+n} - \delta_{m+n,0}\frac{q^{k+l}}{1-q^{k+l}}) 
    &\text{if $k+l \not= 0$},\\
    (q^{-k(m+n)}-q^{k(m+n)})V^{(0)}_{m+n} + m\delta_{m+n,0} 
    &\text{if $k+l = 0$}.
    \end{cases}
\end{multline}
The case where $k = l = 0$ reduces to the commutation relations 
\beq
  [J_m,J_n] = m\delta_{m+n,0}
\eeq
of the $U(1)$ current algebra, whereas $H_k$'s are commutative.  

The shift symmetries \cite{NT07,NT08,Takasaki12,Takasaki13} 
act on these fermion bilinears.   Actually, 
there are three different types of shift symmetries: 
\begin{itemize}
\item[\rm (i)]For $k > 0$ and $m \in \ZZ$, 
\begin{multline}
  \Gamma_{+}(q^{-\rho})\left(
    V^{(k)}_m - \frac{q^k}{1-q^k}\delta_{m,0}
    \right)\Gamma_{+}(q^{-\rho})^{-1}\\
  = (-1)^k\Gamma_{-}(q^{-\rho})^{-1}\left(
        V^{(k)}_{m+k} - \frac{q^k}{1-q^k}\delta_{m+k,0}
        \right)\Gamma_{-}(q^{-\rho}). 
\label{SSi}
\end{multline}
\item[\rm (ii)] For $k > 0$ and $m \in \ZZ$, 
\begin{multline}
  \Gamma'_{+}(q^{-\rho})\left(
    V^{(-k)}_m + \frac{1}{1-q^k}\delta_{m,0}
    \right)\Gamma'_{+}(q^{-\rho})^{-1} \\
  = \Gamma'_{-}(q^{-\rho})^{-1}\left(
        V^{(-k)}_{m+k} + \frac{1}{1-q^k}\delta_{m+k,0}
        \right)\Gamma'_{-}(q^{-\rho}). 
\label{SSii}
\end{multline}
\item[\rm (iii)] For $k,m \in \ZZ$, 
\beq
  q^{W_0/2}V^{(k)}_mq^{-W_0/2} = V^{(k-m)}_m. 
\label{SSiii}
\eeq
\end{itemize}

These algebraic relations are used to convert 
the fermionic expression (\ref{Zab(s,t)-fermion}), 
(\ref{Z'ab(s,t,tbar)-fermion}) of the partition functions 
to tau functions.  This comprises two parts: 
\begin{itemize}
\item[1.] In the first part, $e^{H(\bst)}$ and $e^{\Hbar(\bstbar)}$ 
are moved to the leftmost or rightmost position 
of the operator product.  Note that $e^{H(\bst)}$ 
and $e^{\Hbar(\bstbar)}$ commute with $Q^{L_0}$. 
The first two sets (\ref{SSi}) and (\ref{SSii}) 
of the shift symmetries are used in this procedure. 
\item[2.] When the first part is finished, $H_k = V^{(k)}_0$ 
and $H_{-k} = V^{(-k)}_0$ in $e^{H(\bst)}$ and $e^{\Hbar(\bstbar)}$ 
turn into $V^{(k)}_{ak}$ and $V^{(-k)}_{-bk}$.  
In the second part, $V^{(k)}_{ak}$ and $V^{(-k)}_{-bk}$ 
are transformed to $V^{(0)}_{ak} = J_{ak}$ and $V^{(0)}_{-bk} = J_{-bk}$.
The operator $q^{W_0/2}$ in (\ref{SSiii}) is an avatar 
of ``framing factors'' in the topological vertex construction 
\cite{AKMV03}.  In the orbifold case, those framing factors 
are replaced by ``fractional'' ones \cite{BCY10}. 
We shall use such a variant of (\ref{SSiii}) 
in the subsequent calculations.  
\end{itemize}
Although mostly parallel to the case of the previous models 
\cite{NT07,NT08,Takasaki12,Takasaki13}, this procedure 
in the present setting becomes considerably complicated.  
This section is focused on the partition function 
$Z_{a,b}(s,\bst)$ of the first orbifold model. 
The partition function $Z'_{a,b}(s,\bst,\bstbar)$ 
of the second orbifold model is considered 
in the next section.

\subsection{Moving $e^{H(\bst)}$ towards the left}

Let us explain how to move 
\beqnn
  e^{H(\bst)} = \exp\left(\sum_{k=1}^\infty t_kV^{(0)}_k\right)
\eeqnn
towards the left through the product of 
$P_i^{L_0}\Gamma_{+}(q^{-\rho})$'s in (\ref{Zab(s,t)-fermion}).  
This is a repetition of processes in which 
the exponential operator overtakes 
each unit $P_i^{L_0}\Gamma_{+}(q^{-\rho})$ 
of the operator product.  In these processes, 
the exponent itself as well as the product 
of $P_i^{L_0}\Gamma_{+}(q^{-\rho})$'s are altered.  

At the first stage where 
$\exp\left(\sum_{k=1}^\infty t_kV^{(k)}_0\right)$ 
overtakes $P_{a-1}^{L_0}\Gamma_{+}(q^{-\rho})$, 
$\Gamma_{+}(q^{-\rho})$ turns into 
$\Gamma_{-}(q^{-\rho})\Gamma_{+}(q^{-\rho})$, 
and several $c$-number factors and an operator 
of the form $\Gamma_{-}(\cdots)^{-1}$ are generated. 

\begin{prop}
\begin{multline}
  \Gamma_{+}(q^{-\rho})P_1^{L_0}\cdots
    \Gamma_{+}(q^{-\rho})P_{a-1}^{L_0}\Gamma_{+}(q^{-\rho})
    \exp\left(\sum_{k=1}^\infty t_kV^{(k)}_0\right) \\
  = \exp\left(\sum_{k=1}^\infty \frac{t_kq^k}{1-q^k}\right) 
    \prod_{i=1}^{a-1} M(P_i\cdots P_{a-1},q)^{-1}
    \cdot\Gamma_{-}(P_1\cdots P_{a-1}q^{-\rho})^{-1}\\
  \mbox{}\times
    \Gamma_{+}(q^{-\rho})P_1^{L_0}\cdots
    \Gamma_{+}(q^{-\rho})P_{a-2}^{L_0}\Gamma_{+}(q^{-\rho})
    \exp\left(\sum_{k=1}^\infty t'_kV^{(k)}_k\right)\\
  \mbox{}\times 
    P_{a-1}^{L_0}\Gamma_{-}(q^{-\rho})\Gamma_{+}(q^{-\rho}),  
\label{e^H-to-left1}
\end{multline}
where 
\beqnn
  t'_k = (-1)^kP_{a-1}^{-k}t_k. 
\eeqnn
\end{prop}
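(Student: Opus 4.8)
The plan is to push the exponential $\exp\bigl(\sum_{k\ge 1}t_kV^{(k)}_0\bigr)$ leftward through the product $\Gamma_+(q^{-\rho})P_1^{L_0}\cdots P_{a-1}^{L_0}\Gamma_+(q^{-\rho})$, one factor at a time. At this first stage only the rightmost $\Gamma_+(q^{-\rho})$ and the adjacent $P_{a-1}^{L_0}$ are actually crossed; crossing $\Gamma_+(q^{-\rho})$ produces an extra operator of the shape $\Gamma_-(\,\cdot\,q^{-\rho})^{-1}$, which I would then carry back across the remaining $\Gamma_+$'s all the way to the left, and it is this return trip that generates the MacMahon prefactors. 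The only tools needed are: the shift symmetry \eqref{SSi} with $m=0$; the $L_0$-scaling relations \eqref{Gamma-scaling} together with the companion relation $u^{L_0}V^{(k)}_mu^{-L_0}=u^{-m}V^{(k)}_m$, which extends \eqref{J-scaling}; and the vertex-operator commutation law $\Gamma_+(x)\Gamma_-(y)=(1-xy)^{-1}\Gamma_-(y)\Gamma_+(x)$, which in multi-variable form and specialized to $q^{-\rho}$ becomes $\Gamma_+(q^{-\rho})\Gamma_-(zq^{-\rho})=M(z,q)\,\Gamma_-(zq^{-\rho})\Gamma_+(q^{-\rho})$, since $\prod_{i,j\ge 1}(1-zq^{i+j-1})^{-1}=\prod_{n\ge 1}(1-zq^n)^{-n}=M(z,q)$.

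First I would set $m=0$ in \eqref{SSi} (so $\delta_{0,0}=1$, $\delta_{k,0}=0$), giving $\Gamma_+(q^{-\rho})V^{(k)}_0\Gamma_+(q^{-\rho})^{-1}=\frac{q^k}{1-q^k}+(-1)^k\Gamma_-(q^{-\rho})^{-1}V^{(k)}_k\Gamma_-(q^{-\rho})$. The $V^{(k)}_0=H_k$ commute with one another, hence so do their $\Gamma_+(q^{-\rho})$-conjugates, and the quantum torus relations give $[V^{(k)}_k,V^{(l)}_l]=0$ for $k,l>0$; so I may exponentiate and pull out the scalar part, obtaining
\[
  \Gamma_+(q^{-\rho})\exp\Bigl(\sum_{k\ge 1}t_kV^{(k)}_0\Bigr)
  =\exp\Bigl(\sum_{k\ge 1}\frac{t_kq^k}{1-q^k}\Bigr)\,
   \Gamma_-(q^{-\rho})^{-1}\exp\Bigl(\sum_{k\ge 1}(-1)^kt_kV^{(k)}_k\Bigr)
   \Gamma_-(q^{-\rho})\Gamma_+(q^{-\rho}).
\]
Substituting this into the left-hand side of \eqref{e^H-to-left1} and sliding $P_{a-1}^{L_0}$ to the right through the newly created block, \eqref{Gamma-scaling} gives $P_{a-1}^{L_0}\Gamma_-(q^{-\rho})^{-1}=\Gamma_-(P_{a-1}q^{-\rho})^{-1}P_{a-1}^{L_0}$, while $u^{L_0}V^{(k)}_mu^{-L_0}=u^{-m}V^{(k)}_m$ converts $\exp\bigl(\sum(-1)^kt_kV^{(k)}_k\bigr)$ into $\exp\bigl(\sum t'_kV^{(k)}_k\bigr)$ with $t'_k=(-1)^kP_{a-1}^{-k}t_k$. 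The displaced $P_{a-1}^{L_0}$ then rejoins $\Gamma_-(q^{-\rho})\Gamma_+(q^{-\rho})$ at the far right, exactly as in the statement.

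It then remains to commute the operator $\Gamma_-(P_{a-1}q^{-\rho})^{-1}$ leftward through the shortened chain $\Gamma_+(q^{-\rho})P_1^{L_0}\cdots P_{a-2}^{L_0}\Gamma_+(q^{-\rho})$. By the vertex-operator law above, each passage through a $\Gamma_+(q^{-\rho})$ emits a scalar $M(\,\cdot\,,q)^{-1}$ evaluated at the current argument of $\Gamma_-$, and by \eqref{Gamma-scaling} each passage through a $P_i^{L_0}$ multiplies that argument by $P_i$. Starting from the argument $P_{a-1}$ and running through $\Gamma_+$, $P_{a-2}$, $\Gamma_+$, $P_{a-3}$, $\ldots$, $P_1$, $\Gamma_+$, the argument takes the successive values $P_{a-1}$, $P_{a-2}P_{a-1}$, $\ldots$, $P_1\cdots P_{a-1}$, so the accumulated scalar is $\prod_{i=1}^{a-1}M(P_i\cdots P_{a-1},q)^{-1}$ and the operator reaches the far left as $\Gamma_-(P_1\cdots P_{a-1}q^{-\rho})^{-1}$. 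Gathering the scalar from the first step, these MacMahon factors and the rearranged operators then reproduces \eqref{e^H-to-left1} verbatim.

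The hard part is not any individual computation but the exponentiation in the first step: I must check that the operators in each exponent really do commute, so that $\exp$ distributes over the splitting coming from \eqref{SSi}, and that the various infinite sums and products are interpreted in the appropriate formal sense. The auxiliary scaling relation $u^{L_0}V^{(k)}_mu^{-L_0}=u^{-m}V^{(k)}_m$ is a routine extension of \eqref{J-scaling}; once it and the exponentiation are in hand, the remainder is pure bookkeeping.
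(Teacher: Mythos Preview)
Your proof is correct and follows essentially the same route as the paper's: specialize the shift symmetry \eqref{SSi} at $m=0$ and exponentiate, then use the scaling relations \eqref{Gamma-scaling}, \eqref{V-scaling} and the commutation law \eqref{Gamma-CR} to reorganize the operators. The only cosmetic difference is the order of two commuting moves---the paper first pushes $\Gamma_-(q^{-\rho})^{-1}$ all the way left (through $P_{a-1}^{L_0}$ and the remaining chain) and only then slides $P_{a-1}^{L_0}$ past the exponential, whereas you first pass $P_{a-1}^{L_0}$ rightward through $\Gamma_-^{-1}$ and the exponential and then carry $\Gamma_-(P_{a-1}q^{-\rho})^{-1}$ left; both orderings produce the same MacMahon factors and the same final expression.
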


\begin{proof}
The proof consists of three steps.  
First, use the first shift symmetry (\ref{SSi}) 
in the specialized form 
\beqnn
  \Gamma_{+}(q^{-\rho})V^{(k)}_0\Gamma_{+}(q^{-\rho})^{-1} 
  - \frac{q^k}{1-q^k} 
  = (-1)^k\Gamma_{-}(q^{-\rho})^{-1}V^{(k)}_k\Gamma_{-}(q^{-\rho}). 
\eeqnn
This implies the operator identity 
\begin{multline*}
  \Gamma_{+}(q^{-\rho})\exp\left(\sum_{k=1}^\infty t_kV^{(k)}_0\right) 
  = \exp\left(\sum_{k=1}^\infty \frac{t_kq^k}{1-q^k}\right)\\
  \mbox{}\times 
    \Gamma_{-}(q^{-\rho})^{-1}
    \exp\left(\sum_{k=1}^\infty(-1)^kt_kV^{(k)}_k\right)
    \Gamma_{-}(q^{-\rho})\Gamma_{+}(q^{-\rho}), 
\end{multline*}
hence 
\begin{multline*}
  \Gamma_{+}(q^{-\rho})P_1^{L_0}\cdots 
  \Gamma_{+}(q^{-\rho})P_{a-1}^{L_0}\Gamma_{+}(q^{-\rho})
  \exp\left(\sum_{k=1}^\infty t_kV^{(k)}_0\right) \\
  = \exp\left(\sum_{k=1}^\infty \frac{t_kq^k}{1-q^k}\right) 
    \Gamma_{+}(q^{-\rho})P_1^{L_0}\cdots
    \Gamma_{+}(q^{-\rho})P_{a-1}^{L_0}\\
  \mbox{}\times
    \Gamma_{-}(q^{-\rho})^{-1}
    \exp\left(\sum_{k=1}^\infty(-1)^kt_kV^{(k)}_k\right)
    \Gamma_{-}(q^{-\rho})\Gamma_{+}(q^{-\rho}). 
\end{multline*}
Second, move the newly generated operator 
$\Gamma_{-}(q^{-\rho})^{-1}$ towards the left.  
This can be achieved by repeated use 
of the scaling property (\ref{Gamma-scaling}) 
and the commutation relation 
\beq
  \Gamma_{+}(uq^{-\rho})\Gamma_{-}(vq^{-\rho})^{-1} 
  = M(uv,q)^{-1}\Gamma_{-}(vq^{-\rho})^{-1}\Gamma_{+}(uq^{-\rho}) 
\label{Gamma-CR}
\eeq
as follows: 
\begin{multline*}
  \Gamma_{+}(q^{-\rho})P_1^{L_0}\cdots
    \Gamma_{+}(q^{-\rho})P_{a-1}^{L_0}\Gamma_{-}(q^{-\rho})^{-1}\\
  = \prod_{i=1}^{a-1} M(P_i\cdots P_{a-1},q)^{-1}
    \cdot\Gamma_{-}(P_1\cdots P_{a-1}q^{-\rho})^{-1}\\
  \mbox{}\times
    \Gamma_{+}(q^{-\rho})P_1^{L_0}\cdots
    \Gamma_{+}(q^{-\rho})P_{a-2}^{L_0}\Gamma_{+}(q^{-\rho})P_{a-1}^{L_0}. 
\end{multline*}
Lastly, use the scaling property 
\beq
  u^{L_0}V^{(k)}_mu^{-L_0} = u^{-m}V^{(k)}_m
\label{V-scaling}
\eeq
of $V^{(k)}_m$'s to move $P_{a-1}^{L_0}$ to the right 
of the exponential operator as 
\beqnn
  P_{a-1}^{L_0}\exp\left(\sum_{k=1}^\infty(-1)^kt_kV^{(k)}_k\right) 
  = \exp\left(\sum_{k=1}^\infty(-1)^kP_{a-1}^{-k}t_kV^{(k)}_k\right) 
    P_{a-1}^{L_0}. 
\eeqnn
\end{proof}

At the next stage, 
$\exp\left(\sum_{k=1}^\infty t'_kV^{(k)}_k\right)$ 
overtakes $P_{a-2}^{L_0}\Gamma_{+}(q^{-\rho})$. 
the calculations are fully parallel to the first stage. 
The first step is to use the operator identity 
\beqnn
  \Gamma_{+}(q^{-\rho})\exp\left(\sum_{k=1}^\infty t'_kV^{(k)}_k\right) 
  = \Gamma_{-}(q^{-\rho})^{-1}
    \exp\left(\sum_{k=1}^\infty (-1)^kt'_kV^{(k)}_{2k}\right) 
    \Gamma_{-}(q^{-\rho})\Gamma_{+}(q^{-\rho}) 
\eeqnn
that follows from the specialized shift symmetry 
\beqnn
  \Gamma_{+}(q^{-\rho})V^{(k)}_k\Gamma_{+}(q^{-\rho})^{-1} 
  = (-1)^k\Gamma_{-}(q^{-\rho})^{-1}V^{(k)}_{2k}\Gamma_{-}(q^{-\rho}). 
\eeqnn
The second and third steps are to move $\Gamma_{-}(q^{-\rho})^{-1}$ 
towards the left end and $P_{a-2}^{L_0}$ to the right 
of the exponential operator.  This leads to the algebraic relation 
\begin{multline*}
  \Gamma_{+}(q^{-\rho})P_1^{L_0}\cdots
  \Gamma_{+}(q^{-\rho})P_{a-2}^{L_0}\Gamma_{+}(q^{-\rho})
  \exp\left(\sum_{k=1}^\infty t'_kV^{(k)}_k\right) \\
  = \prod_{i=1}^{a-2} M(P_i\cdots P_{a-2},q)^{-1} 
    \cdot\Gamma_{-}(P_1\cdots P_{a-2}q^{-\rho})^{-1}\\
  \mbox{}\times
    \Gamma_{+}(q^{-\rho})P_1^{L_0}\cdots
    \Gamma_{+}(q^{-\rho})P_{a-3}^{L_0}\Gamma_{+}(q^{-\rho})
    \exp\left(\sum_{k=1}^\infty t''_kV^{(k)}_{2k}\right)\\
  \mbox{}\times 
    P_{a-2}^{L_0}\Gamma_{-}(q^{-\rho})\Gamma_{+}(q^{-\rho}), 
\end{multline*}
where
\beqnn
  t''_k = (-1)^kP_{a-2}^{-2k}t'_k. 
\eeqnn

The overtaking process can be repeated 
until the transformed exponential operator 
overtakes all $\Gamma_{+}(q^{-\rho})$'s and $P_i^{L_0}$'s. 
The net result reads as follows. 

\begin{prop}
\begin{multline}
  \Gamma_{+}(q^{-\rho})P_1^{L_0}\cdots
  \Gamma_{+}(q^{-\rho})P_{a-1}^{L_0}\Gamma_{+}(q^{-\rho})
  \exp\left(\sum_{k=1}^\infty t_kV^{(k)}_0\right) \\
  = \exp\left(\sum_{k=1}^\infty\frac{t_kq^k}{1-q^k}\right)
    \prod_{1\le i\le j\le a-1}M(P_i\cdots P_j,q)^{-1}\\
  \mbox{}\times 
    \Gamma_{-}(q^{-\rho})^{-1}
    \prod_{j=1}^{a-1}\Gamma_{-}(P_1\cdots P_jq^{-\rho})^{-1}\cdot
    \exp\left(\sum_{k=1}^\infty T_kV^{(k)}_{ak}\right)\\
  \mbox{}\times 
    \Gamma_{-}(q^{-\rho})\Gamma_{+}(q^{-\rho})P_1^{L_0}\cdots
    \Gamma_{-}(q^{-\rho})\Gamma_{+}(q^{-\rho})P_{a-1}^{L_0}
    \Gamma_{-}(q^{-\rho})\Gamma_{+}(q^{-\rho}), 
\label{e^H-to-left2}
\end{multline}
where 
\beq
  T_k = (-1)^{ak}P_1^{-(a-1)k}P_2^{-(a-2)k}\cdots P_{a-1}^{-k}t_k. 
\label{T_k-t_k}
\eeq
\end{prop}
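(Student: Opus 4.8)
The plan is to prove this proposition by iterating the one-stage move of Proposition 1. I would first formulate the appropriate induction hypothesis: for $0 \le r \le a-1$, after the exponential operator $\exp\left(\sum_{k\ge1}t_kV^{(k)}_0\right)$ has overtaken the $r$ rightmost of the $a$ factors $\Gamma_{+}(q^{-\rho})$ in the product on the left of (\ref{e^H-to-left1}) together with the $r$ intervening $P_i^{L_0}$'s, the left-hand side of (\ref{e^H-to-left1}) equals the c-number $\exp\left(\sum_{k\ge1}t_kq^k/(1-q^k)\right)$, times a partial MacMahon product $\prod M(P_i\cdots P_j,q)^{-1}$ over a suitable triangular index set, times a string of $\Gamma_{-}(P_1\cdots P_jq^{-\rho})^{-1}$'s standing at the far left, times the reduced product $\Gamma_{+}(q^{-\rho})P_1^{L_0}\cdots\Gamma_{+}(q^{-\rho})P_{a-r-1}^{L_0}\Gamma_{+}(q^{-\rho})$, times the shifted exponential $\exp\left(\sum_{k\ge1}t^{(r)}_kV^{(k)}_{rk}\right)$ with $t^{(r)}_k = (-1)^{rk}P_{a-1}^{-k}P_{a-2}^{-2k}\cdots P_{a-r}^{-rk}t_k$, times the accumulated right-hand blocks $P_{a-r}^{L_0}\Gamma_{-}(q^{-\rho})\Gamma_{+}(q^{-\rho})\cdots P_{a-1}^{L_0}\Gamma_{-}(q^{-\rho})\Gamma_{+}(q^{-\rho})$. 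The case $r=0$ is the identity, and $r=1$ is Proposition 1.

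For the inductive step $r \to r+1$ (with $r+1 \le a-1$), I would repeat verbatim the three steps in the proof of Proposition 1, now applied to the reduced operator product and the shifted exponential: (1) push the exponential past the rightmost surviving $\Gamma_{+}(q^{-\rho})$ using the specialized first shift symmetry $\Gamma_{+}(q^{-\rho})V^{(k)}_{rk}\Gamma_{+}(q^{-\rho})^{-1} = (-1)^k\Gamma_{-}(q^{-\rho})^{-1}V^{(k)}_{(r+1)k}\Gamma_{-}(q^{-\rho})$ --- crucially, for $r\ge 1$ the $\delta$-terms in (\ref{SSi}) are inert since $rk\ne 0$, so no further c-number appears; (2) carry the newly created $\Gamma_{-}(q^{-\rho})^{-1}$ to the far left through $\Gamma_{+}(q^{-\rho})P_1^{L_0}\cdots P_{a-r-1}^{L_0}$ by alternating the scaling relation (\ref{Gamma-scaling}) and the commutation relation (\ref{Gamma-CR}), which converts it into $\Gamma_{-}(P_1\cdots P_{a-r-1}q^{-\rho})^{-1}$ and extracts the factor $\prod_{i=1}^{a-r-1}M(P_i\cdots P_{a-r-1},q)^{-1}$; (3) use the scaling (\ref{V-scaling}) to push $P_{a-r-1}^{L_0}$ to the right of the exponential, multiplying its coefficient by $P_{a-r-1}^{-(r+1)k}$, i.e.\ $t^{(r+1)}_k=(-1)^kP_{a-r-1}^{-(r+1)k}t^{(r)}_k$. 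The one small point needing a remark is that the $\Gamma_{-}(q^{-\rho})^{-1}$ just produced slides freely past those already accumulated at the far left: this holds because $\Gamma_{-}(x)$ and $\Gamma_{-}(x')$ commute (as $[J_{-k},J_{-l}]=0$ for $k,l>0$), so the $\Gamma_{-}^{-1}$'s may be collected in any order; similarly the fresh right-hand block is simply deposited to the left of the earlier ones.

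Iterating $a-1$ times brings the exponential into the shape $\exp\left(\sum_{k\ge1}t^{(a-1)}_kV^{(k)}_{(a-1)k}\right)$ immediately to the right of the single surviving $\Gamma_{+}(q^{-\rho})$, with $t^{(a-1)}_k=(-1)^{(a-1)k}P_1^{-(a-1)k}\cdots P_{a-1}^{-k}t_k$. One last application of the shift symmetry pushes it past this factor --- raising the lower index to $ak$ and the coefficient by one more $(-1)^k$, yielding $T_k$ of (\ref{T_k-t_k}) --- and leaves a bare $\Gamma_{-}(q^{-\rho})^{-1}$ with no MacMahon factor, since nothing lies to its left. Assembling the pieces: the total c-number is the stage-one factor $\exp\left(\sum_{k\ge1}t_kq^k/(1-q^k)\right)$ only; the far-left $\Gamma_{-}^{-1}$'s are $\Gamma_{-}(q^{-\rho})^{-1}$ from the last stage together with $\Gamma_{-}(P_1\cdots P_jq^{-\rho})^{-1}$, $j=1,\ldots,a-1$, from the earlier stages; the MacMahon factors combine by the telescoping identity $\prod_{r=1}^{a-1}\prod_{i=1}^{a-r}M(P_i\cdots P_{a-r},q)^{-1}=\prod_{1\le i\le j\le a-1}M(P_i\cdots P_j,q)^{-1}$; and the right-hand blocks assemble into $\Gamma_{-}(q^{-\rho})\Gamma_{+}(q^{-\rho})P_1^{L_0}\cdots\Gamma_{-}(q^{-\rho})\Gamma_{+}(q^{-\rho})P_{a-1}^{L_0}\Gamma_{-}(q^{-\rho})\Gamma_{+}(q^{-\rho})$. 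This is exactly (\ref{e^H-to-left2}).

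The real work is organizational rather than conceptual: nothing in the argument goes beyond the three moves of Proposition 1, but one must track, across $a$ successive stages, the migration of the $\Gamma_{-}^{-1}$'s to the far left, the accretion of the $\Gamma_{-}\Gamma_{+}$ blocks at the far right, and the telescoping products of exponents and of MacMahon arguments, and verify that the latter collapse precisely to the closed forms $T_k$ and $\prod_{1\le i\le j\le a-1}M(P_i\cdots P_j,q)^{-1}$. The main obstacle, such as it is, is to state the induction hypothesis generally enough to be self-reproducing --- in particular with the shifted exponential $\exp\left(\sum_k t^{(r)}_kV^{(k)}_{rk}\right)$ in place of $\exp\left(\sum_k t_kV^{(k)}_0\right)$ --- and then to carry out the index bookkeeping without slips.
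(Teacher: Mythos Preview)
Your proposal is correct and follows essentially the same approach as the paper: the paper states Proposition 1, works out the second stage explicitly, and then simply says ``The overtaking process can be repeated until the transformed exponential operator overtakes all $\Gamma_{+}(q^{-\rho})$'s and $P_i^{L_0}$'s,'' leaving the bookkeeping to the reader. Your write-up makes that repetition precise via an explicit induction on the number $r$ of stages, and your tracking of the c-number, the MacMahon factors, the accumulated $\Gamma_{-}^{-1}$'s, the coefficients $t^{(r)}_k$, and the right-hand blocks is accurate and matches the paper's stated result.
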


Thus, as $e^{H(\bst)}$ in (\ref{Zab(s,t)-fermion}) 
is transferred to the left end of the operator product, 
the terms $t_kH_k$ in $H(\bst)$ are transformed to $T_kV^{(k)}_{ak}$.  
One can thereby rewrite the left half of the fermionic expression 
(\ref{Zab(s,t)-fermion}) as 
\begin{multline}
  \langle s|\Gamma_{+}(q^{-\rho})P_1^{L_0}\cdots
  \Gamma_{+}(q^{-\rho})P_{a-1}^{L_0}\Gamma_{+}(q^{-\rho})
  Q^{L_0}e^{H(\bst)}\\
  = \exp\left(\sum_{k=1}^\infty\frac{t_kq^k}{1-q^k}\right)
    \prod_{1\le i\le j\le a-1}M(P_i\cdots P_j,q)^{-1}\cdot
    \langle s|\exp\left(\sum_{k=1}^\infty T_kV^{(k)}_{ak}\right)\\
  \mbox{}\times 
    \Gamma_{-}(q^{-\rho})\Gamma_{+}(q^{-\rho})P_1^{L_0}\cdots 
    \Gamma_{-}(q^{-\rho})\Gamma_{+}(q^{-\rho})P_{a-1}^{L_0}
    \Gamma_{-}(q^{-\rho})\Gamma_{+}(q^{-\rho})Q^{L_0}. 
\label{e^H-to-left3}
\end{multline}
Note that $\Gamma_{-}^{-1}$'s disappear because 
$\langle s|J_{-k} = 0$ for $k > 0$.

\subsection{Converting $Z_{a,b}(s,\bst)$ to a tau function}

To interpret $Z_{a,b}(s,\bst)$ as a tau function, 
$V^{(k)}_{ak}$'s in (\ref{e^H-to-left3}) have to be 
transformed to $J_{ak}$'s.  This can be achieved 
by the following variant of the third shift symmetry (\ref{SSiii}). 

\begin{lemma}
For $k \in \ZZ$, 
\beq
  q^{W_0/2a}V^{(k)}_{ak}q^{-W_0/2a} = J_{ak}. 
  \label{modSSiii}
\eeq
\end{lemma}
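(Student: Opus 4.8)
The plan is to prove a ``fractional'' version of the third shift symmetry (\ref{SSiii}) by returning to the action of $W_0$ on the fermion modes. First I would record the elementary commutators that already underlie (\ref{SSiii}): from the anti-commutation relations and $W_0 = \sum_{n\in\ZZ}n^2{:}\psi_{-n}\psi^*_n{:}$ one gets $[W_0,\psi_j] = j^2\psi_j$ and $[W_0,\psi^*_j] = -j^2\psi^*_j$, hence, for any constant $c$,
\beqnn
  q^{cW_0}\psi_jq^{-cW_0} = q^{cj^2}\psi_j,\qquad
  q^{cW_0}\psi^*_jq^{-cW_0} = q^{-cj^2}\psi^*_j.
\eeqnn

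Next I would conjugate a single normal-ordered bilinear. Since normal ordering merely subtracts a $c$-number (the vacuum expectation value), which is invariant under conjugation by $q^{cW_0}$ and vanishes unless the two indices sum to zero, one obtains
\beqnn
  q^{cW_0}{:}\psi_{m-n}\psi^*_n{:}q^{-cW_0}
  = q^{c((m-n)^2-n^2)}{:}\psi_{m-n}\psi^*_n{:}
  = q^{c(m^2-2mn)}{:}\psi_{m-n}\psi^*_n{:},
\eeqnn
valid for every $m$ (for $m\neq 0$ the normal-ordering constant is already zero, and for $m=0$ the conjugation is the identity). Substituting this into $V^{(k)}_m = q^{-km/2}\sum_{n\in\ZZ}q^{kn}{:}\psi_{m-n}\psi^*_n{:}$ and collecting the powers of $q$, the sum over $n$ reassembles into another quantum-torus generator, giving the master identity
\beq
  q^{cW_0}V^{(k)}_mq^{-cW_0} = V^{(k-2cm)}_m,
  \label{genSSiii-plan}
\eeq
which is the general form of (\ref{SSiii}) and reduces to it at $c=1/2$.

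The lemma then follows from the single specialization $c = 1/(2a)$, $m = ak$ in (\ref{genSSiii-plan}): here $k-2cm = k - \tfrac{1}{a}ak = 0$, so $q^{W_0/2a}V^{(k)}_{ak}q^{-W_0/2a} = V^{(0)}_{ak} = J_{ak}$, as claimed; the case $k=0$ is trivial since $W_0$ commutes with $J_0$. The whole argument is routine: the only points requiring a little care are the bookkeeping of the $q$-exponents in the final reassembly step and the (standard) justification that conjugation by the diagonal operator $q^{cW_0}$ may be performed term by term inside the infinite sum defining $V^{(k)}_m$, so I do not anticipate a genuine obstacle. An essentially equivalent alternative would be to note that the proof of (\ref{SSiii}) in fact establishes (\ref{genSSiii-plan}) for all $c$ rather than only $c=1/2$, but the direct verification above seems cleanest.
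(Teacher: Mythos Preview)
Your proof is correct and follows essentially the same route as the paper's: conjugate each normal-ordered bilinear ${:}\psi_{m-n}\psi^*_n{:}$ using the diagonal action of $W_0$ on the fermion modes, collect the resulting $q$-exponents, and read off the answer. The paper carries out exactly this computation directly for the specific values $c=1/(2a)$, $m=ak$, while you first record the general identity $q^{cW_0}V^{(k)}_m q^{-cW_0}=V^{(k-2cm)}_m$ and then specialize; this is a cosmetic difference, not a different method.
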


\begin{proof}
\begin{align*}
  q^{W_0/2a}V^{(k)}_{ak}q^{-W_0/2a} 
  &= q^{-ak^2/2}\sum_{n\in\ZZ}q^{kn}
       q^{W_0/2a}{:}\psi_{ak-n}\psi^*_n{:}q^{-W_0/2a}\\
  &= q^{-ak^2/2}\sum_{n\in\ZZ}
       q^{kn+(ak-n)^2/2a - n^2/2a}{:}\psi_{ak-n}\psi^*_n{:}\\
  &= J_{ak}. 
\end{align*}
\end{proof}

By (\ref{modSSiii}), one can substitute 
\beqnn
  V^{(k)}_{ak} = q^{-W_0/2a}J_{ak}q^{W_0/2a}
\eeqnn
in (\ref{e^H-to-left3}).  Since $q^{-W_0/2a}$ turns into 
a c-number factor as 
\beqnn
  \langle s|q^{-W_0/2a} = q^{-s(s+1)(2s+1)/12a}\langle s|, 
\eeqnn
one can rewrite (\ref{e^H-to-left3}) as 
\begin{multline}
  \langle s|\Gamma_{+}(q^{-\rho})P_1^{L_0}\cdots 
  \Gamma_{+}(q^{-\rho})P_{a-1}^{L_0}\Gamma_{+}(q^{-\rho})
  Q^{L_0}e^{H(\bst)}\\
  = \exp\left(\sum_{k=1}^\infty\frac{t_kq^k}{1-q^k}\right)q^{-s(s+1)(2s+1)/12a}
    \prod_{1\le i\le j\le a-1}M(P_i\cdots P_j,q)^{-1}\\
  \mbox{}\times 
    \langle s|\exp\left(\sum_{k=1}^\infty T_kJ_{ak}\right)
    q^{W_0/2a}\Gamma_{-}(q^{-\rho})\Gamma_{+}(q^{-\rho})P_1^{L_0}\cdots\\
  \mbox{}\times
    \Gamma_{-}(q^{-\rho})\Gamma_{+}(q^{-\rho})P_{a-1}^{L_0}
    \Gamma_{-}(q^{-\rho})\Gamma_{+}(q^{-\rho})Q^{L_0}. 
\label{e^H-to-left4} 
\end{multline}

This expression of the left half of (\ref{Zab(s,t)-fermion}) 
shows that $Z_{a,b}(s,\bst)$ is essentially a tau function 
of the KP hierarchy.  To see a link with the bi-graded Toda hierarchy, 
however, the right half of (\ref{Zab(s,t)-fermion}), too, 
has to be rewritten as follows.  

\begin{lemma}
\begin{multline}
  \Gamma_{-}(q^{-\rho})R_{b-1}^{L_0}\Gamma_{-}(q^{-\rho})\cdots 
  R_1^{L_0}\Gamma_{-}(q^{-\rho})|s\rangle\\
  = \Gamma_{-}(q^{-\rho})\Gamma_{+}(q^{-\rho})
    R_{b-1}^{L_0}\Gamma_{-}(q^{-\rho})\Gamma_{+}(q^{-\rho})\cdots 
    R_1^{L_0}\Gamma_{-}(q^{-\rho})\Gamma_{+}(q^{-\rho})q^{W_0/2b}|s\rangle\\
  \mbox{}\times 
    q^{-s(s+1)(2s+1)/12b}\prod_{1\le i\le j\le b-1}M(R_i\cdots R_j,q)^{-1}. 
\label{e^H-to-left5}
\end{multline}
\end{lemma}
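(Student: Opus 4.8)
The plan is to verify the asserted identity by starting from the right-hand side and absorbing, one at a time, the extra copies of $\Gamma_{+}(q^{-\rho})$ that have been inserted next to the $\Gamma_{-}(q^{-\rho})$'s, checking that each absorption produces exactly the scalar prefactors displayed on the right. Unlike the computations that move $e^{H(\bst)}$ in the previous subsection, no shift symmetry is needed here: there is no exponential operator to transform, so the inserted $\Gamma_{+}(q^{-\rho})$'s are merely transported through to the ket.

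First I would dispose of the innermost insertion. Since $W_{0}|s\rangle = \frac{s(s+1)(2s+1)}{6}|s\rangle$ and $\Gamma_{+}(q^{-\rho})|s\rangle = |s\rangle$ (because $\Gamma_{+}(q^{-\rho})$ is built from the modes $J_{k}$, $k>0$, which annihilate $|s\rangle$), we have $\Gamma_{+}(q^{-\rho})q^{W_{0}/2b}|s\rangle = q^{s(s+1)(2s+1)/12b}|s\rangle$; this removes the rightmost $\Gamma_{+}(q^{-\rho})$ and cancels the prefactor $q^{-s(s+1)(2s+1)/12b}$. It then remains to absorb the $b-1$ remaining copies of $\Gamma_{+}(q^{-\rho})$ --- one immediately to the left of each $R_{j}^{L_{0}}$ --- and to produce the MacMahon product. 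For this I would establish, by a short induction on $m$ that carries a free scalar $w$ in the leading argument, the auxiliary identity
\[
  \Gamma_{+}(wq^{-\rho})\,u_{m}^{L_{0}}\Gamma_{-}(q^{-\rho})\,u_{m-1}^{L_{0}}\Gamma_{-}(q^{-\rho})\cdots u_{1}^{L_{0}}\Gamma_{-}(q^{-\rho})|s\rangle = \Bigl(\prod_{i=1}^{m}M(wu_{i}u_{i+1}\cdots u_{m},q)\Bigr)\, u_{m}^{L_{0}}\Gamma_{-}(q^{-\rho})\cdots u_{1}^{L_{0}}\Gamma_{-}(q^{-\rho})|s\rangle,
\]
proved by pushing $\Gamma_{+}$ to the right: each passage past a factor $u_{i}^{L_{0}}$ rescales the argument by $u_{i}$ through (\ref{Gamma-scaling}), each passage past $\Gamma_{-}(q^{-\rho})$ produces a factor $M(\,\cdot\,,q)$ through the non-inverse form $\Gamma_{+}(uq^{-\rho})\Gamma_{-}(vq^{-\rho}) = M(uv,q)\Gamma_{-}(vq^{-\rho})\Gamma_{+}(uq^{-\rho})$ of (\ref{Gamma-CR}) (which follows from the Cauchy identity and the principal specialization $\prod_{m,n\ge 1}(1-uvq^{m+n-1})^{-1}=M(uv,q)$), and at the end the rescaled $\Gamma_{+}(\,\cdot\,q^{-\rho})$ is killed by $|s\rangle$.

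With this auxiliary identity in hand, I would absorb the remaining $\Gamma_{+}(q^{-\rho})$'s from right to left: once the preceding ones have been removed, the $\Gamma_{+}(q^{-\rho})$ standing just before $R_{j}^{L_{0}}$ faces the tail $R_{j}^{L_{0}}\Gamma_{-}(q^{-\rho})R_{j-1}^{L_{0}}\Gamma_{-}(q^{-\rho})\cdots R_{1}^{L_{0}}\Gamma_{-}(q^{-\rho})|s\rangle$, so applying the auxiliary identity with $w=1$ and $u_{i}=R_{i}$ shows it contributes the scalar $\prod_{i=1}^{j}M(R_{i}R_{i+1}\cdots R_{j},q)$ and then disappears. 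Summing over $j=1,\ldots,b-1$, the total scalar produced is $\prod_{1\le i\le j\le b-1}M(R_{i}\cdots R_{j},q)$, which cancels the corresponding inverse product in the statement, and what is left is precisely the left-hand side. The whole computation is mechanical once the auxiliary identity is set up; the only point that needs care is the bookkeeping of the rescaled arguments $R_{i}\cdots R_{j}$ of the transported $\Gamma_{+}$'s, so that the emerging MacMahon factors match the stated product index for index.
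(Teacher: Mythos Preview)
Your argument is correct and follows essentially the same route as the paper's proof: both push $\Gamma_{+}$-type operators rightward through the chain of $R_j^{L_0}\Gamma_{-}(q^{-\rho})$'s using the scaling property (\ref{Gamma-scaling}) and the commutation relation underlying (\ref{Gamma-CR}), letting the transported operators vanish against $|s\rangle$ and collecting the MacMahon factors $M(R_i\cdots R_j,q)$. The only cosmetic difference is direction: the paper starts from the left-hand side, inserts $\Gamma_{+}(q^{-\rho})\Gamma_{+}(q^{-\rho})^{-1}$ after each $\Gamma_{-}$, and moves the $\Gamma_{+}(q^{-\rho})^{-1}$'s to the right, whereas you start from the right-hand side and move the already-present $\Gamma_{+}(q^{-\rho})$'s to the right; the bookkeeping of the indices $1\le i\le j\le b-1$ and the handling of $q^{W_0/2b}|s\rangle$ are identical.
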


\begin{proof}
Insert $\Gamma_{+}(q^{-\rho})\Gamma_{+}(q^{-\rho})^{-1}$ 
to the right of each of $\Gamma_{-}(q^{-\rho})$'s as 
\begin{multline*}
  \Gamma_{-}(q^{-\rho})R_{b-1}^{L_0}\Gamma_{-}(q^{-\rho})
  \cdots R_1^{L_0}\Gamma_{-}(q^{-\rho})|s\rangle\\
  = \Gamma_{-}(q^{-\rho})\Gamma_{+}(q^{-\rho})\Gamma_{+}(q^{-\rho})^{-1}R_{b-1}^{L_0}
    \Gamma_{-}(q^{-\rho})\Gamma_{+}(q^{-\rho})\Gamma_{+}(q^{-\rho})^{-1}\cdots\\
  \mbox{}\times 
    R_1^{L_0}\Gamma_{-}(q^{-\rho})\Gamma_{+}(q^{-\rho})\Gamma_{+}(q^{-\rho})^{-1}
   |s\rangle. 
\end{multline*}
The rightmost $\Gamma_{+}(q^{-\rho})^{-1}$ disappears 
as it hits $|s\rangle$.  By the commutation relation (\ref{Gamma-CR}), 
one can move the other $\Gamma_{+}(q^{-\rho})^{-1}$'s 
towards the right end of the operator product. 
The transformed operators $\Gamma_{+}(R_1\cdots R_jq^{-\rho})^{-1}$, 
$j = 1,\ldots,b-1$, hit $|s\rangle$ and disappear.  
The outcome reads 
\begin{multline*}
  \Gamma_{-}(q^{-\rho})R_{b-1}^{L_0}\Gamma_{-}(q^{-\rho})\cdots 
  R_1^{L_0}\Gamma_{-}(q^{-\rho})|s\rangle\\
  = \Gamma_{-}(q^{-\rho})\Gamma_{+}(q^{-\rho})
    R_{b-1}^{L_0}\Gamma_{-}(q^{-\rho})\Gamma_{+}(q^{-\rho})\cdots
    R_1^{L_0}\Gamma_{-}(q^{-\rho})\Gamma_{+}(q^{-\rho})|s\rangle\\
  \mbox{}\times
    \prod_{1\le i\le j\le b-1}M(R_i\cdots R_j,q)^{-1}. 
\end{multline*}
Lastly, rewriting $|s\rangle$ as 
\beqnn
  |s\rangle = q^{W_0/2b}|s\rangle q^{-s(s+1)(2s+1)/12b}, 
\eeqnn
one obtains (\ref{e^H-to-left5}). 
\end{proof}

The inner product of (\ref{e^H-to-left4}) and (\ref{e^H-to-left5}) 
yields the following final expression of $Z_{a,b}(s,\bst)$. 
\begin{prop}
\beq
  Z_{a,b}(s,\bst) = f_{a,b}(s,\bst)
    \langle s|\exp\left(\sum_{k=1}^\infty T_kJ_{ak}\right)g|s\rangle, 
\label{Zab(s,t)-tau1}
\eeq
where 
\begin{multline}
  f_{a,b}(s,\bst) 
  = \exp\left(\sum_{k=1}^\infty\frac{t_kq^k}{1-q^k}\right)
    q^{-s(s+1)(2s+1)(1/12a+1/12b)}\\
  \mbox{}\times 
    \prod_{1\le i\le j\le a-1}M(P_i\cdots P_j,q)^{-1}\cdot 
    \prod_{1\le i\le j\le b-1}M(R_i\cdots R_j,q)^{-1} 
\label{fab(s,t)}
\end{multline}
and 
\begin{multline}
  g = q^{W_0/2a}\Gamma_{-}(q^{-\rho})\Gamma_{+}(q^{-\rho})P_1^{L_0}\cdots 
    \Gamma_{-}(q^{-\rho})\Gamma_{+}(q^{-\rho})
    P_{a-1}^{L_0}\Gamma_{-}(q^{-\rho})\Gamma_{+}(q^{-\rho})Q^{L_0}\\
  \mbox{}\times 
    \Gamma_{-}(q^{-\rho})\Gamma_{+}(q^{-\rho})
    R_{b-1}^{L_0}\Gamma_{-}(q^{-\rho})\Gamma_{+}(q^{-\rho})\cdots 
    R_1^{L_0}\Gamma_{-}(q^{-\rho})\Gamma_{+}(q^{-\rho})q^{W_0/2b}. 
\label{g}
\end{multline}
\end{prop}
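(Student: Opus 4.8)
The plan is to start from the fermionic expression (\ref{Zab(s,t)-fermion}) of $Z_{a,b}(s,\bst)$, read it as a matrix element $\langle s|(\text{left half})(\text{right half})|s\rangle$ in which the ``left half'' is the operator product ending in $Q^{L_0}e^{H(\bst)}$ and the ``right half'' is the product of the $\Gamma_{-}(q^{-\rho})$'s and $R_j^{L_0}$'s acting on $|s\rangle$, then rewrite each half separately by means of the identities already proved, multiply the two rewritten halves, and finally separate the emerging $c$-number factors from the operator content. The $c$-numbers should assemble into $f_{a,b}(s,\bst)$ and the residual operator string into $g$.

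For the left half I would simply invoke (\ref{e^H-to-left4}), recalling its provenance. Transporting $e^{H(\bst)}=\exp(\sum_{k\ge1}t_kV^{(k)}_0)$ to the far left through the factors $P_i^{L_0}\Gamma_{+}(q^{-\rho})$ by iterated use of the first shift symmetry (\ref{SSi}) — i.e.\ the chain (\ref{e^H-to-left1})--(\ref{e^H-to-left3}) — generates the $c$-number $\exp(\sum_{k\ge1}t_kq^k/(1-q^k))$, the MacMahon product $\prod_{1\le i\le j\le a-1}M(P_i\cdots P_j,q)^{-1}$, a string of operators $\Gamma_{-}(\cdots)^{-1}$ which annihilate $\langle s|$ since $\langle s|J_{-k}=0$ for $k>0$, and, crucially, converts $V^{(k)}_0$ into $V^{(k)}_{ak}$ with the renormalized coefficients $T_k$ of (\ref{T_k-t_k}). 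Then the modified third shift symmetry (\ref{modSSiii}) lets me substitute $V^{(k)}_{ak}=q^{-W_0/2a}J_{ak}q^{W_0/2a}$; pulling $q^{-W_0/2a}$ leftward against $\langle s|$ produces the $c$-number $q^{-s(s+1)(2s+1)/12a}$ and leaves $q^{W_0/2a}$ at the head of the surviving operator string, which is exactly the content of (\ref{e^H-to-left4}).

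For the right half I would invoke (\ref{e^H-to-left5}): inserting $\Gamma_{+}(q^{-\rho})\Gamma_{+}(q^{-\rho})^{-1}$ just after each $\Gamma_{-}(q^{-\rho})$ and then moving the $\Gamma_{+}(\cdots)^{-1}$'s rightward past the $R_j^{L_0}$'s and $\Gamma_{-}(q^{-\rho})$'s by the scaling relation (\ref{Gamma-scaling}) and the commutation relation (\ref{Gamma-CR}) produces $\prod_{1\le i\le j\le b-1}M(R_i\cdots R_j,q)^{-1}$, after which each transported $\Gamma_{+}(R_i\cdots R_jq^{-\rho})^{-1}$ annihilates $|s\rangle$; rewriting $|s\rangle=q^{W_0/2b}|s\rangle\,q^{-s(s+1)(2s+1)/12b}$ supplies the remaining $c$-number and places $q^{W_0/2b}$ at the tail of the operator string. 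The point requiring care here is the order in which the auxiliary factors $\Gamma_{+}(\cdots)^{\pm1}$ are carried through, so that the MacMahon arguments come out precisely as the products $R_i\cdots R_j$.

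Finally I would take the inner product of the rewritten left half (\ref{e^H-to-left4}) with the rewritten right half (\ref{e^H-to-left5}). The $c$-number factors combine: $\exp(\sum_{k\ge1}t_kq^k/(1-q^k))$ from the left, the two powers $q^{-s(s+1)(2s+1)/12a}$ and $q^{-s(s+1)(2s+1)/12b}$ merging into $q^{-s(s+1)(2s+1)(1/12a+1/12b)}$, and the two MacMahon products — which together are exactly $f_{a,b}(s,\bst)$ of (\ref{fab(s,t)}). The operator strings concatenate: the left half terminates in $q^{W_0/2a}\Gamma_{-}(q^{-\rho})\Gamma_{+}(q^{-\rho})P_1^{L_0}\cdots Q^{L_0}$ and the right half begins with $\Gamma_{-}(q^{-\rho})\Gamma_{+}(q^{-\rho})R_{b-1}^{L_0}\cdots q^{W_0/2b}$, so their product is precisely the operator $g$ of (\ref{g}), and what remains is $\langle s|\exp(\sum_{k\ge1}T_kJ_{ak})\,g\,|s\rangle$. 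The main obstacle is not any single hard estimate but the sustained bookkeeping: collecting every $c$-number prefactor without omission, and checking that the flanking operators $q^{W_0/2a}$ and $q^{W_0/2b}$ land at the correct ends so that the concatenated operator matches (\ref{g}) verbatim and the coefficients $T_k$ are exactly (\ref{T_k-t_k}).
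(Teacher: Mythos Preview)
Your proposal is correct and follows exactly the paper's approach: the paper's own proof of this proposition is the single sentence ``The inner product of (\ref{e^H-to-left4}) and (\ref{e^H-to-left5}) yields the following final expression of $Z_{a,b}(s,\bst)$,'' and you have simply unpacked that sentence with the appropriate bookkeeping. If anything, your account is more explicit than the paper's, since the derivations of (\ref{e^H-to-left4}) and (\ref{e^H-to-left5}) you recapitulate are already established as separate results preceding the proposition.
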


Actually, one can derive another expression of $Z_{ab}(s,\bst)$ 
based on the same operator $g$ as follows.  This is a reason 
why we modify the right half of (\ref{Zab(s,t)-fermion}) 
as shown in (\ref{e^H-to-left5}).  

\begin{prop}
\beq
  Z_{a,b}(s,\bst) = f_{a,b}(s,\bst) 
    \langle s|g\exp\left(\sum_{k=1}^\infty \Tbar_kJ_{-bk}\right)|s\rangle, 
\label{Zab(s,t)-tau2}
\eeq
where 
\beq
  \Tbar_k = (-1)^{bk}R_1^{-(b-1)k}R_2^{-(b-2)k}\cdots R_{b-1}^{-k}t_k. 
\label{Tbar_k-t_k}
\eeq
\end{prop}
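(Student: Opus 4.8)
The plan is to rerun the calculation that produced (\ref{Zab(s,t)-tau1}), but this time pushing $e^{H(\bst)}$ in (\ref{Zab(s,t)-fermion}) to the \emph{right}, through $\Gamma_{-}(q^{-\rho})R_{b-1}^{L_0}\Gamma_{-}(q^{-\rho})\cdots R_1^{L_0}\Gamma_{-}(q^{-\rho})$, rather than to the left through the $\Gamma_{+}$-part. No shift symmetry moves $V^{(k)}_m$ past a lone $\Gamma_{-}(q^{-\rho})$, so the first move is to replace the right half acting on $|s\rangle$ by the expression supplied by (\ref{e^H-to-left5}): it equals $q^{-s(s+1)(2s+1)/12b}\prod_{1\le i\le j\le b-1}M(R_i\cdots R_j,q)^{-1}$ times $\Bbar\,q^{W_0/2b}|s\rangle$, where $\Bbar=\Gamma_{-}(q^{-\rho})\Gamma_{+}(q^{-\rho})R_{b-1}^{L_0}\Gamma_{-}(q^{-\rho})\Gamma_{+}(q^{-\rho})\cdots R_1^{L_0}\Gamma_{-}(q^{-\rho})\Gamma_{+}(q^{-\rho})$ is the operator built from the $b$ blocks $\Gamma_{-}(q^{-\rho})\Gamma_{+}(q^{-\rho})$ that also occurs in $g$. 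These blocks do carry a usable symmetry, which is exactly why the right half was put in this form.

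Indeed, multiplying the first shift symmetry (\ref{SSi}) on the left by $\Gamma_{-}(q^{-\rho})$ and on the right by $\Gamma_{-}(q^{-\rho})^{-1}$ turns it into the conjugation identity
\begin{multline*}
  \bigl(\Gamma_{-}(q^{-\rho})\Gamma_{+}(q^{-\rho})\bigr)^{-1}
  \Bigl(V^{(k)}_m-\tfrac{q^k}{1-q^k}\delta_{m,0}\Bigr)
  \bigl(\Gamma_{-}(q^{-\rho})\Gamma_{+}(q^{-\rho})\bigr)\\
  =(-1)^k\Bigl(V^{(k)}_{m-k}-\tfrac{q^k}{1-q^k}\delta_{m-k,0}\Bigr),\qquad k>0,
\end{multline*}
so moving $V^{(k)}_m$ to the right past one block lowers the index by $k$ and produces a sign $(-1)^k$. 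I would then write $e^{H(\bst)}=\exp\bigl(\sum_{k\ge1}\tfrac{t_kq^k}{1-q^k}\bigr)\exp\bigl(\sum_{k\ge1}t_k(V^{(k)}_0-\tfrac{q^k}{1-q^k})\bigr)$, split off the central factor once, and carry the remaining exponential to the right through $\Bbar$: the lower index descends $0\to-k\to\cdots\to-bk$ over the $b$ blocks, the intervening factors $R_{b-1}^{L_0},\dots,R_1^{L_0}$ rescale it by $R_{b-1}^{-k},R_{b-2}^{-2k},\dots,R_1^{-(b-1)k}$ respectively via (\ref{V-scaling}), and because the lower index never returns to $0$ no new central term is generated. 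This gives
\beqnn
  \exp\Bigl(\sum_{k\ge1}t_k\bigl(V^{(k)}_0-\tfrac{q^k}{1-q^k}\bigr)\Bigr)\Bbar
  =\Bbar\,\exp\Bigl(\sum_{k\ge1}\Tbar_k\,V^{(k)}_{-bk}\Bigr),
\eeqnn
with $\Tbar_k=(-1)^{bk}R_1^{-(b-1)k}R_2^{-(b-2)k}\cdots R_{b-1}^{-k}\,t_k$, which is (\ref{Tbar_k-t_k}).

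Finally, the computation proving (\ref{modSSiii}) applies verbatim to give its analogue $q^{-W_0/2b}V^{(k)}_{-bk}q^{W_0/2b}=J_{-bk}$, hence $\exp(\sum_{k\ge1}\Tbar_kV^{(k)}_{-bk})\,q^{W_0/2b}=q^{W_0/2b}\exp(\sum_{k\ge1}\Tbar_kJ_{-bk})$. Combining this with the left half $\langle s|\Gamma_{+}(q^{-\rho})P_1^{L_0}\cdots\Gamma_{+}(q^{-\rho})Q^{L_0}$, which by (\ref{e^H-to-left4}) taken at $\bst=\bszero$ equals $q^{-s(s+1)(2s+1)/12a}\prod_{1\le i\le j\le a-1}M(P_i\cdots P_j,q)^{-1}$ times $\langle s|q^{W_0/2a}\Gamma_{-}(q^{-\rho})\Gamma_{+}(q^{-\rho})P_1^{L_0}\cdots\Gamma_{-}(q^{-\rho})\Gamma_{+}(q^{-\rho})Q^{L_0}$, and gathering the scalar factors, one reads off exactly the prefactor $f_{a,b}(s,\bst)$ of (\ref{fab(s,t)}) and the operator $g$ of (\ref{g}), so that $Z_{a,b}(s,\bst)=f_{a,b}(s,\bst)\langle s|g\exp(\sum_{k\ge1}\Tbar_kJ_{-bk})|s\rangle$, which is (\ref{Zab(s,t)-tau2}). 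The only genuine labour is the sign-and-power bookkeeping behind $\Tbar_k$; the point worth emphasizing — that the \emph{same} $f_{a,b}(s,\bst)$ and the \emph{same} $g$ recur — then comes for free, since the central contribution $\exp(\sum_{k\ge1}\tfrac{t_kq^k}{1-q^k})$ is extracted at the very first step and the passage through $\Bbar$, being a conjugation, leaves behind neither new central terms nor new MacMahon factors.
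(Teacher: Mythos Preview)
Your proof is correct, but it takes a genuinely different route from the paper's own argument. The paper mirrors the derivation of (\ref{Zab(s,t)-tau1}) step by step: it pushes $e^{H(\bst)}$ to the right through the \emph{original} string $\Gamma_{-}(q^{-\rho})R_{b-1}^{L_0}\cdots R_1^{L_0}\Gamma_{-}(q^{-\rho})$ using the shift symmetry in the form $\exp(\sum t_kV^{(k)}_0)\Gamma_{-}=\exp(\sum t_kq^k/(1-q^k))\,\Gamma_{-}\Gamma_{+}\exp(\sum(-1)^kt_kV^{(k)}_{-k})\Gamma_{+}^{-1}$; at each stage a fresh $\Gamma_{+}(q^{-\rho})^{-1}$ is generated and must be shuttled to the right end, and it is this shuttling that produces both the MacMahon factors and the $\Gamma_{-}\Gamma_{+}$ blocks simultaneously. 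Only afterwards does the paper invoke the counterparts of (\ref{e^H-to-left3})--(\ref{e^H-to-left5}). You instead invoke (\ref{e^H-to-left5}) \emph{first}, so that the right half is already in block form and the MacMahon factors are already accounted for; then your conjugation identity $(\Gamma_{-}\Gamma_{+})^{-1}(V^{(k)}_m-\tfrac{q^k}{1-q^k}\delta_{m,0})(\Gamma_{-}\Gamma_{+})=(-1)^k(V^{(k)}_{m-k}-\tfrac{q^k}{1-q^k}\delta_{m-k,0})$ moves the exponential through the blocks without generating any auxiliary operators or scalars. The upshot is the same, but your argument is more economical: it reuses already-proved lemmas rather than redoing the parallel computation, and it makes transparent why no new MacMahon factors appear and why the central term is extracted exactly once. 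The paper's approach, by contrast, has the virtue of being a literal mirror image of Propositions~1--2, which may make the parallelism with the left-moving case more visible.
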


\begin{proof}
Derivation of (\ref{Zab(s,t)-tau2}) is fully parallel 
to the case of (\ref{Zab(s,t)-tau1}) except that 
$e^{H(\bst)}$ is moved {\it towards the right}.  
Start from the specialization 
\beqnn
  \Gamma_{+}(q^{-\rho})V^{(k)}_{-k}\Gamma_{+}(q^{-\rho})^{-1}
  = (-1)^k\left(\Gamma_{-}(q^{-\rho})^{-1}V^{(k)}_0\Gamma_{-}(q^{-\rho})
           - \frac{q^k}{1-q^k}\right) 
\eeqnn
of the first shift symmetry (\ref{SSi}).  This implies that 
\begin{multline*}
  \exp\left(\sum_{k=1}^\infty t_kV^{(k)}_0\right)\Gamma_{-}(q^{-\rho}) 
  = \left(\sum_{k=1}^\infty\frac{t_kq^k}{1-q^k}\right)\\
  \mbox{}\times
    \Gamma_{-}(q^{-\rho})\Gamma_{+}(q^{-\rho})
    \exp\left(\sum_{k=1}^\infty(-1)^kt_kV^{(k)}_{-k}\right)
    \Gamma_{+}(q^{-\rho})^{-1}. 
\end{multline*}
Sending $\Gamma_{+}(q^{-\rho})^{-1}$ towards the right 
beyond the product of $R_i^{L_0}$'s and $\Gamma_{-}(q^{-\rho})$'s, 
one obtains the algebraic relation 
\begin{multline*}
    \exp\left(\sum_{k=1}^\infty t_kV^{(k)}_0\right) 
    \Gamma_{-}(q^{-\rho})R_{b-1}^{L_0}\Gamma_{-}(q^{-\rho})
    \cdots R_1^{L_0}\Gamma_{-}(q^{-\rho})\\
  = \exp\left(\sum_{k=1}^\infty \frac{t_kq^k}{1-q^k}\right) 
    \prod_{i=1}^{a-1} M(R_i\cdots R_{b-1},q)^{-1}\cdot 
    \Gamma_{-}(q^{-\rho})\Gamma_{+}(q^{-\rho})R_{b-1}^{L_0}\\
  \mbox{}\times
    \exp\left(\sum_{k=1}^\infty\tbar'_kV^{(k)}_{-k}\right)
    \Gamma_{-}(q^{-\rho})R_{b-2}^{L_0}\Gamma_{-}(q^{-\rho})
    \cdots R_1^{L_0}\Gamma_{-}(q^{-\rho})\\
  \mbox{}\times
    \Gamma_{+}(R_1\cdots R_{b-1}q^{-\rho})^{-1}, 
\end{multline*}
where 
\beqnn
  t'_k = (-1)^kR_{b-1}^{-k}t_k. 
\eeqnn
This relation amounts to (\ref{e^H-to-left1}).  
Repeating this transfer procedure leads 
to the following counterpart of (\ref{e^H-to-left2}): 
\begin{multline*}
  \exp\left(\sum_{k=1}^\infty t_kV^{(k)}_0\right) 
  \Gamma_{-}(q^{-\rho})R_{b-1}^{L_0}\Gamma_{-}(q^{-\rho})
  \cdots R_1^{L_0}\Gamma_{-}(q^{-\rho})\\
  = \exp\left(\sum_{k=1}^\infty\frac{t_kq^k}{1-q^k}\right)
    \prod_{1\le i\le j\le b-1}M(R_i\cdots R_j,q)^{-1}\\
  \mbox{}\times 
    \Gamma_{-}(q^{-\rho})\Gamma_{+}(q^{-\rho})
    R_{b-1}^{L_0}\Gamma_{-}(q^{-\rho})\Gamma_{+}(q^{-\rho})\cdots
    R_1^{L_0}\Gamma_{-}(q^{-\rho})\Gamma_{+}(q^{-\rho})\\
  \mbox{}\times 
    \exp\left(\sum_{k=1}^\infty \Tbar_kV^{(k)}_{-bk}\right)
    \Gamma_{+}(q^{-\rho})^{-1}
    \prod_{j=1}^{b-1}\Gamma_{+}(R_1\cdots R_jq^{-\rho})^{-1}. 
\end{multline*}
In much the same way, one can derive counterparts 
of (\ref{e^H-to-left3}), (\ref{e^H-to-left4}), 
(\ref{e^H-to-left5}).   Equation (\ref{Zab(s,t)-tau2}) 
is an immediate consequence of these algebraic relations. 
\end{proof}

Equations (\ref{Zab(s,t)-tau1}) and (\ref{Zab(s,t)-tau2}) 
show that the partition function $Z_{a,b}(s,\bst)$ 
of the first orbifold model is related to the tau function 
\beq
  \tau(s,\bst,\bstbar) 
  = \langle s|\exp\left(\sum_{k=1}^\infty t_kJ_k\right)g
    \exp\left(- \sum_{k=1}^\infty\tbar_kJ_{-k}\right)|s\rangle
\label{tau(g)}
\eeq
of the 2D Toda hierarchy \cite{Takebe91} 
restricted to a subspace of the full time evolutions.  
The existence of the two expressions (\ref{Zab(s,t)-tau1}) 
and (\ref{Zab(s,t)-tau2}) implies the identity 
\beq
    \langle s|\exp\left(\sum_{k=1}^\infty T_kJ_{ak}\right)g|s\rangle 
  = \langle s|g\exp\left(\sum_{k=1}^\infty \Tbar_kJ_{-bk}\right)|s\rangle 
\label{<e^Tg>=<ge^T>}
\eeq
of the restricted tau functions or, equivalently, 
the operator identities 
\begin{multline}
  (-1)^{ak}P_1^{-(a-1)k}P_2^{-(a-2)k}\cdots P_{a-1}^{-k}J_{ak}g \\
  = (-1)^{bk}R_1^{-(b-1)k}R_2^{-(b-2)k}\cdots R_{b-1}^{-k} gJ_{-bk}, 
  \quad k = 1,2,\ldots. 
\label{Jg=gJ}
\end{multline}
These algebraic relations characterize tau functions 
of the bi-graded Toda hierarchy of bi-degree $(a,b)$ 
\cite{Kuperschmidt85,Carlet06}.  
We shall reconfirm this fact in the Lax formalism. 

In summary, we have proved the following relation 
to the bi-graded Toda hierarchy. 

\begin{theorem}
The partition function $Z_{a,b}(s,\bst)$ is related 
to the tau function (\ref{tau(g)}) of the 2D Toda hierarchy 
in two ways as 
\beq
  Z_{a,b}(s,\bst) 
    = f_{a,b}(s,\bst)\tau(s,\bsT,\bszero) 
    = f_{a,b}(s,\bst)\tau(s,\bszero,-\bsTbar), 
\label{Zab(s,t)-tau3}
\eeq
where 
\beqnn
\begin{gathered}
  \bsT = (\underbrace{0,\ldots,0}_{a-1},T_1,
          \underbrace{0,\ldots,0}_{a-1},T_2,\ldots
          \underbrace{0,\ldots,0}_{a-1},T_k,\ldots),\\
  \bsTbar = (\underbrace{0,\cdots,0}_{b-1},\Tbar_1,
             \underbrace{0,\cdots,0}_{b-1},\Tbar_2,\ldots,
             \underbrace{0,\cdots,0}_{b-1},\Tbar_k,\ldots).  
\end{gathered}
\eeqnn
$T_k$'s and $\Tbar_k$'s are obtained from $t_k$'s as shown 
in (\ref{T_k-t_k}) and (\ref{Tbar_k-t_k}).   
The prefactor $f_{a,b}(s,\bst)$ is built from exponential 
and MacMahon functions as shown in (\ref{fab(s,t)}).  
The generating operator (\ref{g}) of the tau function 
satisfies the algebraic relations (\ref{Jg=gJ}) 
that characterize the bi-graded Toda hierarchy of bi-degree $(a,b)$.  
\end{theorem}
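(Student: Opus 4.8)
Since this statement is in essence a summary of results already established, the plan is to assemble it from the two expressions (\ref{Zab(s,t)-tau1}) and (\ref{Zab(s,t)-tau2}) and to reinterpret each as a specialisation of the 2D Toda tau function (\ref{tau(g)}). First I would observe that setting $\bstbar=\bszero$ in (\ref{tau(g)}) kills the rightmost exponential, and that feeding in for $\bst$ the sparse vector $\bsT$, whose only nonzero components are $T_k$ in the slot of index $ak$, turns $\sum_{k\ge 1}t_kJ_k$ into $\sum_{k\ge 1}T_kJ_{ak}$; hence $\tau(s,\bsT,\bszero)=\langle s|\exp(\sum_kT_kJ_{ak})g|s\rangle$, which is exactly the matrix element in (\ref{Zab(s,t)-tau1}). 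In the same way, putting $\bst=\bszero$ and $\bstbar=-\bsTbar$ turns $-\sum_{k\ge1}\tbar_kJ_{-k}$ into $\sum_{k\ge1}\Tbar_kJ_{-bk}$, so $\tau(s,\bszero,-\bsTbar)$ is the matrix element in (\ref{Zab(s,t)-tau2}). Multiplying through by the common nonzero prefactor $f_{a,b}(s,\bst)$ then gives both equalities of (\ref{Zab(s,t)-tau3}); the relations (\ref{T_k-t_k}) and (\ref{Tbar_k-t_k}) between the $T_k,\Tbar_k$ and the $t_k$, as well as the form (\ref{fab(s,t)}) of $f_{a,b}(s,\bst)$, have already been recorded in the two Propositions.

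It then remains to check that $g$ satisfies (\ref{Jg=gJ}). Equating the two expressions of (\ref{Zab(s,t)-tau3}) and cancelling $f_{a,b}(s,\bst)$ yields (\ref{<e^Tg>=<ge^T>}) for every $s\in\ZZ$ and every $\bst$; since each $T_k$ and $\Tbar_k$ is a fixed nonzero scalar multiple of $t_k$, this holds for arbitrary values of the flow parameters, and the standard passage from such restricted tau-function identities to operator identities for the generating element then gives (\ref{Jg=gJ}). I would, however, prefer a more self-contained route that commutes $J_{ak}$ straight through $g$. Use the variant (\ref{modSSiii}) of the third shift symmetry to rewrite $J_{ak}q^{W_0/2a}$ as $q^{W_0/2a}V^{(k)}_{ak}$, then drive $V^{(k)}_{ak}$ rightward through the product $\Gamma_{-}(q^{-\rho})\Gamma_{+}(q^{-\rho})P_1^{L_0}\cdots\Gamma_{-}(q^{-\rho})\Gamma_{+}(q^{-\rho})P_{a-1}^{L_0}\Gamma_{-}(q^{-\rho})\Gamma_{+}(q^{-\rho})$ by repeated use of the first shift symmetry (\ref{SSi}) --- this reverses the computation behind (\ref{e^H-to-left2}) --- where each crossing of a $\Gamma_{-}(q^{-\rho})\Gamma_{+}(q^{-\rho})$ pair lowers the subscript by $k$ and emits a factor $-1$, while each crossing of a $P_i^{L_0}$ emits a power of $P_i$ dictated by (\ref{V-scaling}); after the $a$ pairs one reaches $V^{(k)}_0=H_k$, the transient $q^k/(1-q^k)$ shift produced at the last crossing being annihilated at the first $\Gamma_{-}(q^{-\rho})\Gamma_{+}(q^{-\rho})$ pair of the $R$-string. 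Since $H_k$ commutes with $Q^{L_0}$, the same manipulation carried out along $\Gamma_{-}(q^{-\rho})\Gamma_{+}(q^{-\rho})R_{b-1}^{L_0}\cdots R_1^{L_0}\Gamma_{-}(q^{-\rho})\Gamma_{+}(q^{-\rho})$ converts $H_k=V^{(k)}_0$ into $V^{(k)}_{-bk}$, up to $(-1)^{bk}$ and a product of powers of the $R_j$; finally the companion of (\ref{modSSiii}), namely $q^{-W_0/2b}V^{(k)}_{-bk}q^{W_0/2b}=J_{-bk}$ proved just as (\ref{modSSiii}), turns $V^{(k)}_{-bk}q^{W_0/2b}$ into $q^{W_0/2b}J_{-bk}$. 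Collecting the emitted scalars reproduces exactly the coefficients in (\ref{Jg=gJ}).

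The one real difficulty is bookkeeping: one must verify that the accumulated signs $(-1)^{ak}$, $(-1)^{bk}$ and the triangular products $P_1^{-(a-1)k}\cdots P_{a-1}^{-k}$, $R_1^{-(b-1)k}\cdots R_{b-1}^{-k}$ come out exactly as in (\ref{Jg=gJ}) --- equivalently, in step with (\ref{T_k-t_k}) and (\ref{Tbar_k-t_k}) --- and, along the matrix-element route, that the $M$-functions and $\Gamma^{\pm1}$ operators produced in transit either cancel or are absorbed into $f_{a,b}(s,\bst)$ (in the Propositions they simply annihilate $\langle s|$ or $|s\rangle$, whereas the operator route must account for their disappearance structurally). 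Once the coefficients are confirmed, identifying (\ref{Jg=gJ}) with the defining constraint $L^a=D^{-1}\Lbar^{-b}$ of the bi-graded Toda hierarchy of bi-degree $(a,b)$ is the business of the Lax-formalism analysis carried out later (Theorem 3), so within the present proof it suffices to record the algebraic relations themselves.
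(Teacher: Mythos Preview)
Your proposal is correct and, for the two displayed equalities in (\ref{Zab(s,t)-tau3}), follows the paper's route exactly: Theorem~1 in the paper is stated as a summary, and its content is precisely the recognition that the matrix elements in (\ref{Zab(s,t)-tau1}) and (\ref{Zab(s,t)-tau2}) are the specialisations $\tau(s,\bsT,\bszero)$ and $\tau(s,\bszero,-\bsTbar)$ of (\ref{tau(g)}).

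Where you deviate is in the treatment of (\ref{Jg=gJ}). The paper simply equates the two Propositions to obtain (\ref{<e^Tg>=<ge^T>}) and then asserts, without further argument, that this is ``equivalent'' to the operator identities (\ref{Jg=gJ}); the passage from a family of matrix elements $\langle s|\cdots|s\rangle$ for all $s$ and all $\bst$ to a genuine operator identity is taken as understood. You mention this route but then carry out a direct commutation of $J_{ak}$ through $g$ using (\ref{modSSiii}), repeated applications of (\ref{SSi}) and (\ref{V-scaling}), and the companion relation $q^{-W_0/2b}V^{(k)}_{-bk}q^{W_0/2b}=J_{-bk}$. This is a genuinely different and more self-contained argument: it proves (\ref{Jg=gJ}) as an honest operator identity rather than inferring it from tau-function values, and it makes transparent that the c-number shift $q^k/(1-q^k)$ produced when the subscript passes through $0$ on the $P$-side is exactly cancelled when it passes through $0$ again on the $R$-side (this is just the statement that $V^{(k)}_0-\tfrac{q^k}{1-q^k}$, rather than $V^{(k)}_0$, commutes cleanly with each $\Gamma_{-}\Gamma_{+}$ pair). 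The paper's route is shorter and relies on the two heavy Propositions already in hand; your operator route is logically independent of (\ref{Zab(s,t)-tau2}) and would in fact allow one to \emph{derive} Proposition~4 from Proposition~3 rather than prove it from scratch. Your caveat about bookkeeping the triangular products of $P_i$'s and $R_j$'s and the signs $(-1)^{ak},(-1)^{bk}$ is well placed but not a gap: these are read off directly from (\ref{V-scaling}) and the count of $\Gamma_{-}\Gamma_{+}$ pairs, exactly as in the forward computation behind (\ref{e^H-to-left2}).
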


\begin{remark}
In the non-orbifold ($a = b = 1$) case, 
(\ref{<e^Tg>=<ge^T>}) reduces to the identity 
\beqnn
    \langle s|\exp\left(\sum_{k=1}^\infty t_kJ_{ak}\right)g|s\rangle 
  = \langle s|g\exp\left(\sum_{k=1}^\infty t_kJ_{-bk}\right)|s\rangle, 
\eeqnn
which implies that $\tau(s,\bst,\bstbar)$ is a function 
of $\bst-\bstbar$, 
\beqnn
  \tau(s,\bst,\bstbar) = \tau(s,\bst-\bstbar). 
\eeqnn
The reduced function $\tau(s,\bst)$ is a tau function 
of the 1D Toda hierarchy.  Moreover, (\ref{Zab(s,t)-tau3}) 
takes the simpler form
\beqnn
  Z_{1,1}(s,\bst) = f_{1,1}(s,\bst)\tau(s,\iota(\bst),\bszero) 
  = f_{1,1}(s,\bst)\tau(s,\bszero,-\iota(\bst)), 
\eeqnn
where 
\beqnn
  \iota(\bst) = (-t_1,t_2,-t_3,\ldots,(-1)^kt_k,\ldots),
\eeqnn
thus both $\bsT$ and $\bsTbar$ coincide with $\iota(\bst)$. 
\end{remark}

\begin{remark}
In the orbifold model, the reduced time variables 
$\bsT$ and $\bsTbar$ of the tau function are full of {\it gaps\/}. 
Namely, the coupling constants $t_k$, $k = 1,2,\ldots$ 
of the orbifold model show up only at every $a$-steps in $\bsT$ 
and every $b$-steps $\bsTbar$, and the other components 
of $\bsT$ and $\bsTbar$ are set to $0$.  This is related 
to the structure of the Lax operators in the Lax formalism. 
\end{remark}

\begin{remark}
The operators $q^{W_0/2a}$ and $q^{W_0/2b}$ in the definition 
(\ref{g}) of $g$ are avatars of fractional framing factors 
in the orbifold topological vertex construction \cite{BCY10}. 
\end{remark}

\begin{remark}
As mentioned in Remark \ref{remark-2q}, 
our previous melting crystal model 
with two $q$-parameters is a special case 
of the present orbifold model.  This explains 
why the same bi-graded Toda hierarchy emerges therein 
when the $q$-parameters are specialized as shown 
in (\ref{q1q2-special}) \cite{Takasaki09}. 
\end{remark}

\section{Partition function as tau function: Second orbifold model}

\subsection{Moving $e^{\Hbar(\bstbar)}$ towards the right}

To convert the fermionic expression (\ref{Z'ab(s,t,tbar)-fermion}) 
of $Z'_{a,b}(s,\bst,\bstbar)$ to a tau function, 
we split $e^{H(\bst,\bstbar)}$ as 
\beqnn
  e^{H(\bst,\bstbar)} = e^{H(\bst)}e^{\Hbar(\bstbar)}
\eeqnn
and move $e^{H(\bst)}$ and $e^{\Hbar(\bstbar)}$ 
towards the left and the right, respectively.  
Since the transfer procedure for $e^{H(\bst)}$ 
is exactly the same as in the case of $Z_{a,b}(s,\bst)$, 
let us explain how to move 
\beqnn
  e^{\Hbar(\bstbar)} 
  = \exp\left(\sum_{k=1}^\infty\tbar_kV^{(-k)}_0\right) 
\eeqnn
towards the right.  The whole procedure is parallel 
to the case of $e^{H(\bst)}$. 

The following algebraic relation amounts to (\ref{e^H-to-left1}). 

\begin{prop}
\begin{multline}
  \exp\left(\sum_{k=1}^\infty\tbar_kV^{(-k)}_0\right)
  \Gamma'_{-}(q^{-\rho})R_{b-1}^{L_0}\Gamma'_{-}(q^{-\rho})
  \cdots R_1^{L_0}\Gamma'_{-}(q^{-\rho}) \\
  = \exp\left(- \sum_{k=1}^\infty\frac{\tbar_k}{1-q^k}\right) 
    \prod_{i=1}^{a-1}M(R_1\cdots R_j,q)^{-1}\cdot 
    \Gamma'_{-}(q^{-\rho})\Gamma'_{+}(q^{-\rho})R_{b-1}^{L_0}\\
  \mbox{}\times
    \exp\left(\sum_{k=1}^\infty\tbar'_kV^{(-k)}_{-k}\right)
    \Gamma'_{-}(q^{-\rho})R_{b-2}^{L_0}\Gamma'_{-}(q^{-\rho})
    \cdots R_1^{L_0}\Gamma'_{-}(q^{-\rho})\\
  \mbox{}\times 
    \Gamma'_{+}(R_1\cdots R_{b-1}q^{-\rho})^{-1}, 
\label{e^Hbar-to-right1}
\end{multline}
where 
\beqnn
  \tbar'_k = R_{b-1}^{-k}\tbar_k. 
\eeqnn
\end{prop}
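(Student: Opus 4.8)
The plan is to follow, \emph{mutatis mutandis}, the three-step proof of the relation (\ref{e^H-to-left1}) for $e^{H(\bst)}$: replace $\Gamma_{\pm}$ by $\Gamma'_{\pm}$, the first shift symmetry (\ref{SSi}) by the second one (\ref{SSii}), the parameters $P_i$ by the parameters $R_j$, the integer $a$ by $b$, and reverse the direction of each transport, so that the exponential operator and the auxiliary factors move to the \emph{right} rather than to the left.

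First I would specialize the second shift symmetry (\ref{SSii}) to $m=-k$, obtaining
\beqnn
  \Gamma'_{+}(q^{-\rho})V^{(-k)}_{-k}\Gamma'_{+}(q^{-\rho})^{-1}
  = \Gamma'_{-}(q^{-\rho})^{-1}\left(V^{(-k)}_0+\frac{1}{1-q^k}\right)\Gamma'_{-}(q^{-\rho}).
\eeqnn
Solving this for $V^{(-k)}_0$ and exponentiating --- the scalars $\frac{1}{1-q^k}$ being central and mutually commuting --- yields an operator identity expressing $\exp\bigl(\sum_{k\ge1}\tbar_kV^{(-k)}_0\bigr)\Gamma'_{-}(q^{-\rho})$ as $\exp\bigl(-\sum_{k\ge1}\tbar_k/(1-q^k)\bigr)$ times $\Gamma'_{-}(q^{-\rho})\Gamma'_{+}(q^{-\rho})$ times $\exp\bigl(\sum_{k\ge1}\tbar_kV^{(-k)}_{-k}\bigr)$ times $\Gamma'_{+}(q^{-\rho})^{-1}$, where the $\Gamma'_{-}(q^{-\rho})^{-1}$ produced by the shift symmetry has cancelled the original leftmost $\Gamma'_{-}(q^{-\rho})$. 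This step already produces the scalar prefactor $\exp(-\sum\tbar_k/(1-q^k))$, turns $V^{(-k)}_0$ into $V^{(-k)}_{-k}$, and leaves one spurious $\Gamma'_{+}(q^{-\rho})^{-1}$ immediately to the right of the new exponential.

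Second, I would push $\Gamma'_{+}(q^{-\rho})^{-1}$ rightward through the remaining chain $R_{b-1}^{L_0}\Gamma'_{-}(q^{-\rho})\cdots R_1^{L_0}\Gamma'_{-}(q^{-\rho})$. Crossing an $R_j^{L_0}$ is handled by the scaling property (\ref{Gamma'-scaling}), which only rescales the argument of $\Gamma'_{+}$; crossing a $\Gamma'_{-}(q^{-\rho})$ is handled by the primed analogue of (\ref{Gamma-CR}), namely $\Gamma'_{+}(uq^{-\rho})^{-1}\Gamma'_{-}(vq^{-\rho})=M(uv,q)^{-1}\Gamma'_{-}(vq^{-\rho})\Gamma'_{+}(uq^{-\rho})^{-1}$, which follows from the same Baker--Campbell--Hausdorff computation as in the unprimed case, since the extra signs in the definition of $\Gamma'_{\pm}$ cancel in $[J_m,J_{-m}]$. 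Accumulating these steps generates precisely the product of MacMahon functions displayed on the right-hand side (a scalar, freely pulled to the front) and delivers the operator at the far right in the form $\Gamma'_{+}(R_1\cdots R_{b-1}q^{-\rho})^{-1}$. The one remaining bookkeeping point is that, as $\Gamma'_{+}(q^{-\rho})^{-1}$ first crosses $R_{b-1}^{L_0}$, I would use the scaling property (\ref{V-scaling}) of $V^{(-k)}_m$ to commute $R_{b-1}^{L_0}$ back to the left of the exponential, which rescales each $V^{(-k)}_{-k}$ by $R_{b-1}^{-k}$ and hence replaces $\tbar_k$ by $\tbar'_k=R_{b-1}^{-k}\tbar_k$; the other factors $R_j^{L_0}$ with $j\le b-2$ are left in place, so that the chain to the right of the exponential is precisely $\Gamma'_{-}(q^{-\rho})R_{b-2}^{L_0}\Gamma'_{-}(q^{-\rho})\cdots R_1^{L_0}\Gamma'_{-}(q^{-\rho})$, as in the claim.

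I do not anticipate a genuine obstacle here: the computation is entirely parallel to the unprimed one already carried out, the only novelties being the use of the second shift symmetry and the different $c$-number shift $-1/(1-q^k)$ it carries (in place of $q^k/(1-q^k)$). The point demanding the most care is the accounting of the scalar factors --- the shift coming from (\ref{SSii}) together with the MacMahon factors coming from the $\Gamma'_{+}$--$\Gamma'_{-}$ commutations --- and verifying that the primed commutation relation reproduces $M(uv,q)^{-1}$ with the same sign as in the unprimed case, which it does.
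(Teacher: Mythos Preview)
Your proposal is correct and follows essentially the same three-step argument as the paper: specialize the second shift symmetry (\ref{SSii}) at $m=-k$ to pass the exponential through the first $\Gamma'_{-}(q^{-\rho})$, then push the resulting $\Gamma'_{+}(q^{-\rho})^{-1}$ rightward through the chain via (\ref{Gamma'-scaling}) and the primed commutation relation (\ref{Gamma'-CR}), and finally use (\ref{V-scaling}) to move $R_{b-1}^{L_0}$ back to the left of the exponential. The only cosmetic difference is that the paper records the primed commutation relation in the equivalent form $\Gamma'_{-}(uq^{-\rho})^{-1}\Gamma'_{+}(vq^{-\rho}) = M(uv,q)^{-1}\Gamma'_{+}(vq^{-\rho})\Gamma'_{-}(uq^{-\rho})^{-1}$, whereas you state the version with $\Gamma'_{+}{}^{-1}$ on the left, which is the one directly applied.
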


\begin{proof}
The proof is parallel to that of (\ref{e^H-to-left1}). 
First, use the second shift symmetry (\ref{SSii}) 
in the specialized form
\beqnn
  \Gamma'_{+}(q^{-\rho})V^{(-k)}_{-k}\Gamma'_{+}(q^{-\rho})^{-1} 
  = \Gamma'_{-}(q^{-\rho})^{-1}V^{(-k)}_0\Gamma'_{-}(q^{-\rho}) 
    + \frac{1}{1-q^k}. 
\eeqnn
This implies the operator identity 
\begin{multline*}
  \exp\left(\sum_{k=1}^\infty\tbar_kV^{(-k)}_0\right)\Gamma'_{-}(q^{-\rho})
  = \exp\left(-\sum_{k=1}^\infty\frac{\tbar_k}{1-q^k}\right)\\
  \mbox{}\times 
    \Gamma'_{-}(q^{-\rho})\Gamma'_{+}(q^{-\rho})
    \exp\left(\sum_{k=1}^\infty\tbar_kV^{(-k)}_{-k}\right)
    \Gamma'_{+}(q^{-\rho})^{-1}, 
\end{multline*}
hence, 
\begin{multline*}
  \exp\left(\sum_{k=1}^\infty\tbar_kV^{(-k)}_0\right)
  \Gamma'_{-}(q^{-\rho})R_{b-1}^{L_0}\Gamma'_{-}(q^{-\rho})
  \cdots R_1^{L_0}\Gamma'_{-}(q^{-\rho}) \\
  = \exp\left(- \sum_{k=1}^\infty\frac{\tbar_k}{1-q^k}\right) 
    \Gamma'_{-}(q^{-\rho})\Gamma'_{+}(q^{-\rho})
    \exp\left(\sum_{k=1}^\infty\tbar_kV^{(-k)}_{-k}\right)
    \Gamma'_{+}(q^{-\rho})^{-1}\\
  \mbox{}\times
    R_{b-1}^{L_0}\Gamma'_{-}(q^{-\rho})\cdots 
    R_1^{L_0}\Gamma'_{-}(q^{-\rho}). 
\end{multline*}
Second, use the commutation relation 
\beq
  \Gamma'_{-}(uq^{-\rho})^{-1}\Gamma'_{+}(vq^{-\rho})
  = M(uv,q)^{-1}\Gamma'_{+}(vq^{-\rho})\Gamma'_{-}(uq^{-\rho})^{-1}
\label{Gamma'-CR}
\eeq
of the primed operators to move the newly generated 
operator $\Gamma'_{+}(q^{-\rho})^{-1}$ towards the right as 
\begin{multline*}
  \Gamma'_{+}(q^{-\rho})^{-1}R_{b-1}^{L_0}\Gamma'_{-}(q^{-\rho})
  \cdots R_1\Gamma'_{-}(q^{-\rho})\\
  = \prod_{i=1}^{b-1}M(R_i\cdots R_{b-1},q)^{-1}\cdot 
    R_{b-1}^{L_0}\Gamma'_{-}(q^{-\rho})\cdots 
    R_1^{L_0}\Gamma'_{-}(q^{-\rho})\\
  \mbox{}\times
    \Gamma'_{+}(R_1\cdots R_{b-1}q^{-\rho})^{-1}. 
\end{multline*}
Lastly, use the scaling property (\ref{V-scaling}) 
to move $R_{b-1}^{L_0}$ to the left of 
the exponential operator as 
\beqnn
  \exp\left(\sum_{k=1}^\infty\tbar_kV^{(-k)}_{-k}\right)R_{b-1}^{L_0} 
  = R_{b-1}^{L_0}
    \exp\left(\sum_{k=1}^\infty R_{b-1}^{-k}\tbar_kV^{(-k)}_{-k}\right). 
\eeqnn
\end{proof}

Thus as $\exp\left(\sum_{k=1}^\infty\tbar_kV^{(-k)}_{-k}\right)$ 
overtakes $\Gamma'_{-}(q^{-\rho})R_{b-1}^{L_0}$, 
$\tbar_kV^{(-k)}$ and $\Gamma'_{-}(q^{-\rho})$ 
turns into $\tbar'_kV^{(-k)}_{-k}$ and 
$\Gamma'_{-}(q^{-\rho})\Gamma'_{+}(q^{-\rho})$. 
Moreover, several $c$-number factors and 
an operator factor of the form $\Gamma'_{+}(\cdots)^{-1}$ 
are generated.  

Repeating this overtaking process, one obtains 
the following counterpart of (\ref{e^H-to-left2}). 

\begin{prop}
\begin{multline}
  \exp\left(\sum_{k=1}^\infty\tbar_kV^{(-k)}_0\right)
  \Gamma'_{-}(q^{-\rho})R_{b-1}^{L_0}\Gamma'_{-}(q^{-\rho})
  \cdots R_1^{L_0}\Gamma'_{-}(q^{-\rho}) \\
  = \exp\left(- \sum_{k=1}^\infty\frac{\tbar_k}{1-q^k}\right) 
    \prod_{1\le i\le j\le b-1}M(R_i\cdots R_j,q)^{-1}\\
  \mbox{}\times 
    \Gamma'_{-}(q^{-\rho})\Gamma'_{+}(q^{-\rho})
    R_{b-1}^{L_0}\Gamma'_{-}(q^{-\rho})\Gamma'_{+}(q^{-\rho})\cdots
    R_1^{L_0}\Gamma'_{-}(q^{-\rho})\Gamma'_{+}(q^{-\rho})\\
  \mbox{}\times 
    \exp\left(\sum_{k=1}^\infty\Tbar_kV^{(-k)}_{-bk}\right)
    \Gamma'_{+}(q^{-\rho})^{-1}
    \prod_{j=1}^{b-1}\Gamma'_{+}(R_1\cdots R_je^{-\rho})^{-1}, 
\label{e^Hbar-to-right2}
\end{multline}
where 
\beq
  \Tbar_k = R_1^{-(b-1)k}R_2^{-(b-2)k}\cdots R_{b-1}^{-k}\tbar_k. 
  \label{Tbar_k-tbar_k}
\eeq
\end{prop}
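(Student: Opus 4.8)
The plan is to obtain (\ref{e^Hbar-to-right2}) by iterating the one-step relation (\ref{e^Hbar-to-right1}), in complete parallel with the way (\ref{e^H-to-left2}) is deduced from (\ref{e^H-to-left1}). I think of the transfer as proceeding in stages $m = 1,2,\ldots,b$. At the start of stage $m$ the exponential operator sits immediately to the left of the $m$-th factor $\Gamma'_{-}(q^{-\rho})$ (counted from the left) in the remaining product $\Gamma'_{-}(q^{-\rho})R_{b-1}^{L_0}\Gamma'_{-}(q^{-\rho})\cdots R_1^{L_0}\Gamma'_{-}(q^{-\rho})$ and carries an exponent of the shape $\sum_{k\ge 1}c^{(m)}_kV^{(-k)}_{-(m-1)k}$, with $c^{(1)}_k = \tbar_k$. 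During stage $m$ I let it overtake that $\Gamma'_{-}(q^{-\rho})$ and, when $m \le b-1$, the subsequent $R^{L_0}_{b-m}$ as well.

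Each overtaking uses the three steps already seen in the proof of (\ref{e^Hbar-to-right1}). First, the specialization of the second shift symmetry (\ref{SSii}) to the lower index $-(m-1)k$,
\[
  \Gamma'_{-}(q^{-\rho})^{-1}V^{(-k)}_{-(m-1)k}\Gamma'_{-}(q^{-\rho})
  = \Gamma'_{+}(q^{-\rho})V^{(-k)}_{-mk}\Gamma'_{+}(q^{-\rho})^{-1}
    + \frac{1}{1-q^k}\bigl(\delta_{-mk,0} - \delta_{-(m-1)k,0}\bigr),
\]
converts $\Gamma'_{-}(q^{-\rho})$ into $\Gamma'_{-}(q^{-\rho})\Gamma'_{+}(q^{-\rho})$, lowers the lower index of each $V^{(-k)}$ by $k$, and creates a new $\Gamma'_{+}(q^{-\rho})^{-1}$ to the right of the exponential. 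The Kronecker-delta term is non-zero only for $m = 1$, where it produces the scalar $\exp(-\sum_{k\ge 1}\tbar_k/(1-q^k))$; for $m\ge 2$ no new scalar appears, which is precisely why that prefactor occurs only once in (\ref{e^Hbar-to-right2}). Second, the new $\Gamma'_{+}(q^{-\rho})^{-1}$ is pushed rightward past the remaining $R^{L_0}_i$'s and $\Gamma'_{-}(q^{-\rho})$'s by repeated use of the scaling (\ref{Gamma'-scaling}) and the commutation relation (\ref{Gamma'-CR}); passing the block $R^{L_0}_{b-m}\Gamma'_{-}(q^{-\rho})\cdots R^{L_0}_1\Gamma'_{-}(q^{-\rho})$ turns $\Gamma'_{+}(q^{-\rho})^{-1}$ into $\Gamma'_{+}(R_1\cdots R_{b-m}q^{-\rho})^{-1}$ and generates the factor $\prod_{i=1}^{b-m}M(R_i\cdots R_{b-m},q)^{-1}$ (for $m=b$ there is no such block, and the new $\Gamma'_{+}(q^{-\rho})^{-1}$ simply joins the previously accumulated ones as the $j=0$ term). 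Third, for $m\le b-1$, the factor $R^{L_0}_{b-m}$ is moved leftward across the exponential by the scaling (\ref{V-scaling}); since the lower index of $V^{(-k)}$ is now $-mk$, this multiplies $c^{(m)}_k$ by $R_{b-m}^{-mk}$, so $c^{(m+1)}_k = R_{b-m}^{-mk}c^{(m)}_k$.

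Running the recursion $c^{(1)}_k = \tbar_k$, $c^{(m+1)}_k = R_{b-m}^{-mk}c^{(m)}_k$ gives $c^{(b)}_k = R_{b-1}^{-k}R_{b-2}^{-2k}\cdots R_1^{-(b-1)k}\tbar_k$, which is exactly $\Tbar_k$ as in (\ref{Tbar_k-tbar_k}); note the absence of any sign factor, in contrast to $T_k$ in (\ref{T_k-t_k}), because (\ref{SSii}), unlike (\ref{SSi}), carries no $(-1)^k$. Summing the MacMahon contributions over the stages gives $\prod_{m=1}^{b-1}\prod_{i=1}^{b-m}M(R_i\cdots R_{b-m},q)^{-1} = \prod_{1\le i\le j\le b-1}M(R_i\cdots R_j,q)^{-1}$ upon setting $j = b-m$, the $b$ replacements $\Gamma'_{-}(q^{-\rho})\mapsto\Gamma'_{-}(q^{-\rho})\Gamma'_{+}(q^{-\rho})$ build the operator string displayed in (\ref{e^Hbar-to-right2}), and the inverse vertex operators accumulated over the stages assemble into $\Gamma'_{+}(q^{-\rho})^{-1}\prod_{j=1}^{b-1}\Gamma'_{+}(R_1\cdots R_jq^{-\rho})^{-1}$. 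Putting these pieces together yields the asserted identity. There is no new algebraic ingredient beyond the relations already established, so I anticipate no genuine obstacle; the only point requiring care is the bookkeeping — confirming that $\exp(-\sum_{k\ge 1}\tbar_k/(1-q^k))$ is produced at the first step only, and that the powers of the $R_j$ accumulate precisely into $\Tbar_k$ and into the MacMahon double product.
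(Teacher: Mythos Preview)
Your proposal is correct and follows essentially the same approach as the paper: the paper simply says that (\ref{e^Hbar-to-right2}) is obtained by ``repeating this overtaking process'' starting from (\ref{e^Hbar-to-right1}), and you have spelled out that iteration in detail, with the bookkeeping of the scalar prefactor, the MacMahon factors, the accumulated $\Gamma'_{+}(\cdots)^{-1}$'s, and the recursion for the coefficients all handled correctly.
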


\subsection{Converting $Z'_{a,b}(s,\bst,\bstbar)$ to a tau function}

We now proceed to converting (\ref{Z'ab(s,t,tbar)-fermion}) 
to a tau function. Since the left half of this expression 
is the same as that of (\ref{Zab(s,t)-fermion}), 
one can use (\ref{e^H-to-left4}) as it is.  

As regards the right half of (\ref{Z'ab(s,t,tbar)-fermion}), 
the forgoing operator identity (\ref{e^Hbar-to-right2})
implies that 
\begin{multline}
  e^{\Hbar(\bstbar)}\Gamma'_{-}(q^{-\rho})
  R_{b-1}^{L_0}\Gamma'_{-}(q^{-\rho})\cdots 
  R_1^{L_0}\Gamma'_{-}(q^{-\rho})|s\rangle\\
  = \exp\left(-\sum_{k=1}^\infty\frac{\tbar_k}{1-q^k}\right)
    \prod_{1\le i\le j\le b-1}M(R_i\cdots R_j,q)^{-1}\\
  \mbox{}\times
    \Gamma'_{-}(q^{-\rho})\Gamma'_{+}(q^{-\rho})
    R_{b-1}^{L_0}\Gamma'_{-}(q^{-\rho})\Gamma'_{+}(q^{-\rho})\cdots\\
  \mbox{}\times
    R_1^{L_0}\Gamma'_{-}(q^{-\rho})\Gamma'_{+}(q^{-\rho})
    \exp\left(\sum_{k=1}^\infty \Tbar_kV^{(-k)}_{-bk}\right)|s\rangle. 
\label{e^Hbar-to-right3}
\end{multline}
Moreover, one can use the identity 
\beq
  q^{W_0/2b}V^{(-k)}_{-bk}q^{-W_0/2b} = J_{-bk} 
\label{modSSiii'}
\eeq
in place of (\ref{modSSiii}) to rewrite the last part 
of (\ref{e^Hbar-to-right3}) as 
\begin{align*}
  \exp\left(\sum_{k=1}^\infty \Tbar_kV^{(-k)}_{-bk}\right)|s\rangle
  &= q^{-W_0/2b}\exp\left(\sum_{k=1}^\infty\Tbar_kJ_{-bk})\right)
     q^{W_0/2b}|s\rangle\\
  &= q^{s(s+1)(2s+1)/12b}q^{-W_0/2b}
     \exp\left(\sum_{k=1}^\infty\Tbar_kJ_{-bk})\right)|s\rangle. 
\end{align*}
Consequently, the right half of (\ref{Z'ab(s,t,tbar)-fermion}) 
can be expressed as 
\begin{multline}
  e^{\Hbar(\bstbar)}\Gamma'_{-}(q^{-\rho})
  R_{b-1}^{L_0}\Gamma'_{-}(q^{-\rho})\cdots 
  R_1^{L_0}\Gamma'_{-}(q^{-\rho})|s\rangle\\
  = \exp\left(-\sum_{k=1}^\infty\frac{\tbar_k}{1-q^k}\right)
    q^{s(s+1)(2s+1)/12b}\prod_{1\le i\le j\le b-1}M(R_i\cdots R_j,q)^{-1}\\
  \mbox{}\times
    \Gamma'_{-}(q^{-\rho})\Gamma'_{+}(q^{-\rho})
    R_{b-1}^{L_0}\Gamma'_{-}(q^{-\rho})\Gamma'_{+}(q^{-\rho})\cdots\\
  \mbox{}\times
    R_1^{L_0}\Gamma'_{-}(q^{-\rho})\Gamma'_{+}(q^{-\rho})q^{-W_0/2b}
    \exp\left(\sum_{k=1}^\infty \Tbar_kJ_{-bk}\right)|s\rangle. 
\label{e^Hbar-to-right4}
\end{multline}

The inner product of (\ref{e^H-to-left4}) and 
(\ref{e^Hbar-to-right4}) yields the following expression 
of $Z'_{a,b}(s,\bst,\bstbar)$.  

\begin{prop}
\beq
  Z'_{a,b}(s,\bst,\bstbar) 
  = f'_{a,b}(s,\bst,\bstbar)\langle s|
    \exp\left(\sum_{k=1}^\infty T_kJ_{ak}\right)
    g'\exp\left(\sum_{k=1}^\infty\Tbar_kJ_{-bk}\right)|s\rangle, 
\label{Z'ab(s,t,tbar)-tau1}
\eeq
where
\begin{multline}
  f'_{a,b}(s,\bst,\bstbar) 
  = \exp\left(\sum_{k=1}^\infty\frac{t_kq^k-\tbar_k}{1-q^k}\right)
    q^{-s(s+1)(2s+1)(1/12a-1/12b)}\\
  \mbox{}\times
    \prod_{1\le i\le j\le a-1}M(P_i\cdots P_j,q)^{-1}\cdot
    \prod_{1\le i\le j\le b-1}M(R_i\cdots R_j,q)^{-1}  
\label{f'ab(s,t,tbar)}
\end{multline}
and 
\begin{multline}
  g' = q^{W_0/2a}\Gamma_{-}(q^{-\rho})\Gamma_{+}(q^{-\rho})
       P_1^{L_0}\Gamma_{-}(q^{-\rho})\Gamma_{+}(q^{-\rho})\cdots 
       P_{a-1}^{L_0}\Gamma_{-}(q^{-\rho})\Gamma_{+}(q^{-\rho})Q^{L_0}\\
   \mbox{}\times 
       \Gamma'_{-}(q^{-\rho})\Gamma'_{+}(q^{-\rho})
       R_{b-1}^{L_0}\Gamma'_{-}(q^{-\rho})\Gamma'_{+}(q^{-\rho})\cdots
       R_1^{L_0}\Gamma'_{-}(q^{-\rho})\Gamma'_{+}(q^{-\rho})q^{-W_0/2b}. 
\label{g'}
\end{multline}
\end{prop}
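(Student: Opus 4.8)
The plan is to assemble the two transfer computations already performed in Sections 3 and 4. I would start from the fermionic expression (\ref{Z'ab(s,t,tbar)-fermion}) of $Z'_{a,b}(s,\bst,\bstbar)$ and split the time-evolution operator as $e^{H(\bst,\bstbar)}=e^{H(\bst)}e^{\Hbar(\bstbar)}$. This factorization is legitimate because all the fermion bilinears $H_k$, $k\in\ZZ$, mutually commute (both the $k+l=0$ and the $k+l\neq 0$ cases of the quantum torus relations give zero when $m=n=0$), and moreover $H_k=V^{(k)}_0$ and $H_{-k}=V^{(-k)}_0$ commute with $Q^{L_0}$ by the scaling property (\ref{V-scaling}). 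Hence $e^{H(\bst)}$ can be processed together with the left half of (\ref{Z'ab(s,t,tbar)-fermion}) and $e^{\Hbar(\bstbar)}$ together with the right half, independently of one another.

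Next I would invoke the two already-established half-computations. The left half $\langle s|\Gamma_{+}(q^{-\rho})P_1^{L_0}\cdots\Gamma_{+}(q^{-\rho})P_{a-1}^{L_0}\Gamma_{+}(q^{-\rho})Q^{L_0}e^{H(\bst)}$ is verbatim the one treated in Section 3, so (\ref{e^H-to-left4}) applies unchanged: moving $e^{H(\bst)}$ to the left turns the $t_kH_k$ into $T_kJ_{ak}$ (first into $T_kV^{(k)}_{ak}$ with $T_k$ as in (\ref{T_k-t_k}) via (\ref{e^H-to-left2}), then into $J_{ak}$ via the modified shift symmetry (\ref{modSSiii})), and produces the $c$-number factor $\exp(\sum_{k\ge 1} t_kq^k/(1-q^k))\,q^{-s(s+1)(2s+1)/12a}\prod_{1\le i\le j\le a-1}M(P_i\cdots P_j,q)^{-1}$ together with the operator $q^{W_0/2a}$ sitting just to the right of $\exp(\sum_{k\ge 1} T_kJ_{ak})$. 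Symmetrically, the right half $e^{\Hbar(\bstbar)}\Gamma'_{-}(q^{-\rho})R_{b-1}^{L_0}\cdots R_1^{L_0}\Gamma'_{-}(q^{-\rho})|s\rangle$ is exactly the object handled by (\ref{e^Hbar-to-right4}): moving $e^{\Hbar(\bstbar)}$ to the right turns the $\tbar_kH_{-k}$ into $\Tbar_kJ_{-bk}$ (first into $\Tbar_kV^{(-k)}_{-bk}$ with $\Tbar_k$ as in (\ref{Tbar_k-tbar_k}) via (\ref{e^Hbar-to-right2}), then via (\ref{modSSiii'})) and produces the $c$-number factor $\exp(-\sum_{k\ge 1}\tbar_k/(1-q^k))\,q^{s(s+1)(2s+1)/12b}\prod_{1\le i\le j\le b-1}M(R_i\cdots R_j,q)^{-1}$ together with the operator $q^{-W_0/2b}$ sitting just to the left of $\exp(\sum_{k\ge 1}\Tbar_kJ_{-bk})$.

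Finally I would glue the two pieces together: taking the $\langle s|\cdots|s\rangle$ inner product and matching the transformed left half of (\ref{e^H-to-left4}) to the transformed right half of (\ref{e^Hbar-to-right4}) along the $Q^{L_0}$--$\Gamma'_{-}(q^{-\rho})$ junction, the operator string sandwiched between $\exp(\sum_{k\ge 1}T_kJ_{ak})$ and $\exp(\sum_{k\ge 1}\Tbar_kJ_{-bk})$ becomes precisely the operator $g'$ of (\ref{g'}), while the product of the two $c$-number factors is precisely $f'_{a,b}(s,\bst,\bstbar)$ of (\ref{f'ab(s,t,tbar)}); this gives (\ref{Z'ab(s,t,tbar)-tau1}). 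I do not expect any genuine obstacle here: the operator bookkeeping is identical to that of Sections 3 and 4 and introduces nothing new. The one point that needs care is the collection of the $c$-number prefactors, and in particular the sign of the $q$-power of $s$. On the left one picks up $\langle s|q^{-W_0/2a}=q^{-s(s+1)(2s+1)/12a}\langle s|$, but on the right the conjugation by $q^{\mp W_0/2b}$ enters with the opposite orientation, so one picks up $q^{+s(s+1)(2s+1)/12b}$ rather than $q^{-s(s+1)(2s+1)/12b}$; this is the origin of the combination $1/12a-1/12b$ (with a minus sign) in $f'_{a,b}$, in contrast with the $1/12a+1/12b$ appearing in the first orbifold model.
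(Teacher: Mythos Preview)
Your proposal is correct and follows essentially the same approach as the paper: the paper's proof consists of the single observation that taking the inner product of (\ref{e^H-to-left4}) and (\ref{e^Hbar-to-right4}) yields (\ref{Z'ab(s,t,tbar)-tau1}), which is exactly the gluing you describe. Your added remarks on the commutativity of the $H_k$'s justifying the splitting and on the sign bookkeeping for the $q^{\pm s(s+1)(2s+1)/12b}$ factor are correct and make explicit what the paper leaves implicit.
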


(\ref{Z'ab(s,t,tbar)-tau1}) shows that the partition function 
$Z'_{a,b}(s,\bst,\bstbar)$ is related to the tau function 
\beq
  \tau'(s,\bst,\bstbar) 
  = \langle s|\exp\left(\sum_{k=1}^\infty t_kJ_k\right)
    g'\exp\left(- \sum_{k=1}^\infty\tbar_kJ_{-k}\right)|s\rangle
\label{tau(g')}
\eeq
of the 2D Toda hierarchy restricted to a subspace 
of the full time evolutions.  Unlike the first orbifold model, 
the generating operator (\ref{g'}) seems to satisfy 
no particular algebraic relation like (\ref{Jg=gJ}). 

In summary, we have proved the following relation 
to the 2D Toda hierarchy.  

\begin{theorem}
The partition function $Z'_{a,b}(s,\bst,\bstbar)$ 
is related to the tau function (\ref{tau(g')}) 
of the 2D Toda hierarchy as 
\beq
  Z'_{a,b}(s,\bst,\bstbar) 
  = f'_{a,b}(s,\bst,\bstbar)\tau'(\bsT,-\bsTbar), 
\label{Z'ab(s,t,tbar)-tau2}
\eeq
where 
\beqnn
\begin{gathered}
  \bsT = (\underbrace{0,\ldots,0}_{a-1},T_1,
          \underbrace{0,\ldots,0}_{a-1},T_2,\ldots
          \underbrace{0,\ldots,0}_{a-1},T_k,\ldots),\\
  \bsTbar = (\underbrace{0,\cdots,0}_{b-1},\Tbar_1,
             \underbrace{0,\cdots,0}_{b-1},\Tbar_2,\ldots,
             \underbrace{0,\cdots,0}_{b-1},\Tbar_k,\ldots).  
\end{gathered}
\eeqnn
$T_k$'s and $\Tbar_k$'s are obtained from $t_k$'s and $\tbar_k$'s 
as shown in (\ref{T_k-t_k}) and (\ref{Tbar_k-tbar_k}). 
The prefactor $f'_{a,b}(s,\bst)$ is built from exponential 
and MacMahon functions as shown in (\ref{f'ab(s,t,tbar)}).  
\end{theorem}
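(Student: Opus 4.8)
The plan is to read off Theorem 2 from the Proposition that establishes (\ref{Z'ab(s,t,tbar)-tau1}), by recognizing its right-hand side as a specialization of the 2D Toda tau function $\tau'(s,\bst,\bstbar)$ of (\ref{tau(g')}). The first thing I would check is that the operator $g'$ defined in (\ref{g'}) contains neither $\bst$ nor $\bstbar$: it is built only from $q^{\pm W_0/2a}$, $q^{\pm W_0/2b}$, the vertex operators $\Gamma_{\pm}(q^{-\rho})$, $\Gamma'_{\pm}(q^{-\rho})$, and the operators $P_i^{L_0}$, $R_j^{L_0}$, $Q^{L_0}$. Hence $\tau'(s,\bst,\bstbar)$, with $g'$ as its generating operator, is a bona fide tau function of the 2D Toda hierarchy restricted to the flows generated by $\{J_k\}_{k\ge1}$ and $\{J_{-k}\}_{k\ge1}$, exactly as $\tau(s,\bst,\bstbar)$ was in the first orbifold model.

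Next I would carry out the matching of time variables. Substituting $\bst\mapsto\bsT$ in the left exponential of (\ref{tau(g')}) gives $\sum_{j\ge1}(\bsT)_jJ_j=\sum_{k\ge1}T_kJ_{ak}$, since $\bsT$ has $T_k$ in position $ak$ and zeros in all other positions; substituting $\bstbar\mapsto-\bsTbar$ in the right exponential gives $-\sum_{j\ge1}(-\bsTbar)_jJ_{-j}=\sum_{k\ge1}\Tbar_kJ_{-bk}$, since $\bsTbar$ has $\Tbar_k$ in position $bk$ and zeros elsewhere. Therefore $\tau'(s,\bsT,-\bsTbar)$ equals the matrix element $\langle s|\exp(\sum_{k\ge1}T_kJ_{ak})\,g'\,\exp(\sum_{k\ge1}\Tbar_kJ_{-bk})|s\rangle$ appearing in (\ref{Z'ab(s,t,tbar)-tau1}). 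Multiplying through by the prefactor $f'_{a,b}(s,\bst,\bstbar)$ of (\ref{f'ab(s,t,tbar)}) then yields (\ref{Z'ab(s,t,tbar)-tau2}), with the changes of variables $t_k\mapsto T_k$, $\tbar_k\mapsto\Tbar_k$ given by (\ref{T_k-t_k}) and (\ref{Tbar_k-tbar_k}).

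So the bookkeeping above is routine, and the real work --- which I regard as the main obstacle --- lies in establishing (\ref{Z'ab(s,t,tbar)-tau1}) itself. For the left half of the fermionic expression (\ref{Z'ab(s,t,tbar)-fermion}) one reuses the first-model computation verbatim, through (\ref{e^H-to-left4}). For the right half one moves $e^{\Hbar(\bstbar)}$ to the right through the product of $R_j^{L_0}\Gamma'_{-}(q^{-\rho})$'s by iterating the second shift symmetry (\ref{SSii}), the primed commutation relation (\ref{Gamma'-CR}), and the $V$-scaling property, arriving at (\ref{e^Hbar-to-right2}) and then (\ref{e^Hbar-to-right4}); the conversions $V^{(k)}_{ak}\to J_{ak}$ and $V^{(-k)}_{-bk}\to J_{-bk}$ use the fractional-framing identities (\ref{modSSiii}) and (\ref{modSSiii'}). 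The delicate step is the accounting of the operators $q^{\pm W_0/2a}$ and $q^{\pm W_0/2b}$ that these conversions generate: one of them meets $\langle s|$ at the left end and the other meets $|s\rangle$ at the right end, producing the $s$-dependent c-number factor $q^{-s(s+1)(2s+1)(1/12a-1/12b)}$. The minus sign between the two contributions --- contrast the plus sign in $f_{a,b}(s,\bst)$ of (\ref{fab(s,t)}) --- is precisely where the replacement of $\Gamma_{-}$ by $\Gamma'_{-}$ (equivalently, of $s_\lambda$ by $s_{\tp{\lambda}}$) enters, and it must be tracked with care; I would double-check it against the sign in (\ref{modSSiii'}) and the evaluation $q^{W_0/2b}|s\rangle=q^{s(s+1)(2s+1)/12b}|s\rangle$.

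Finally I would emphasize what need \emph{not} be proven here: unlike Theorem 1, there is no analogue of the reduction identity (\ref{Jg=gJ}) to verify, because $g'$ links the $\Gamma_{\pm}$ block to the $\Gamma'_{\pm}$ block and no closed relation among $\{J_{ak}g',\,g'J_{-bk}\}_{k\ge1}$ is available. Theorem 2 therefore asserts only that $Z'_{a,b}(s,\bst,\bstbar)$ is, up to the explicit prefactor $f'_{a,b}$, a 2D Toda tau function on the gapped time locus; the sharper statement that the powers $L^a,\Lbar^{-b}$ of the associated Lax operators realize the rational reduction of bi-degree $(a,b)$ is deferred to the Lax-formalism analysis.
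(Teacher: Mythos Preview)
Your proposal is correct and follows essentially the same route as the paper: Theorem~2 is read off directly from the Proposition giving (\ref{Z'ab(s,t,tbar)-tau1}) by identifying the matrix element there with the specialization $\tau'(s,\bsT,-\bsTbar)$ of (\ref{tau(g')}), and the substantive work is the derivation of (\ref{Z'ab(s,t,tbar)-tau1}) via (\ref{e^H-to-left4}) on the left and (\ref{e^Hbar-to-right2})--(\ref{e^Hbar-to-right4}) on the right. Your accounting of the $q^{\pm W_0/2b}$ factors and the resulting sign in the exponent of the $s$-dependent prefactor, as well as your remark that no analogue of (\ref{Jg=gJ}) is claimed or needed here, match the paper exactly.
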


\begin{remark}
The reduced time variables $\bsT$ and $\bsTbar$ 
in (\ref{Z'ab(s,t,tbar)-tau2}) have the same ``gaps'' structure 
as in the case of (\ref{Zab(s,t)-tau3}).  
These gaps disappear in the non-orbifold ($a = b = 1$) case. 
\end{remark}

\begin{remark}
The structure of the generating operator (\ref{g'}) is reminiscent 
of that of the generating operator for open string amplitudes 
on generalized conifolds of the bubble type \cite{Takasaki14}.  
An essential difference is the emergence 
of the fractional framing factors $q^{W_0/2a},q^{-W_0/2b}$   
in place of the ordinary ones $q^{W_0/2},q^{-W_0/2}$.  
\end{remark}

\section{Initial values of dressing operators}

\subsection{Matrix representation of generating operators}

The generating operators of Toda tau functions 
are non-degenerate elements of the Clifford group 
$\widehat{\GL}(\infty)$ \cite{MJD-book}.
\footnote{To give regular solutions of the Toda hierarchy, 
they have to satisfy the extra condition 
$\langle s|g|s\rangle \not= 0$ as well for all $s\in\ZZ$.} 
Such a Clifford operator is given by the exponential $e^{\hat{X}}$ 
of a fermion bilinear 
\beqnn
  \hat{X} = \sum_{m,n\in\ZZ}x_{mn}{:}\psi_{-m}\psi^*_n{:}  
\eeqnn
or a product of such exponential operators.  
Fermion bilinears of this form are in-one-to-one correspondence 
with $\ZZ\times\ZZ$ matrices 
\beqnn
  X = \sum_{m,n\in\ZZ}x_{mn}E_{mn},\quad 
  E_{mn} = (\delta_{im}\delta_{jn})_{i,j\in\ZZ}, 
\eeqnn
and satisfy the commutation relation 
\beqnn
  [\hat{X},\hat{Y}] = \widehat{[X,Y]} + c(X,Y), 
\eeqnn
where $c(X,Y)$ is a c-number cocycle of $\gl(\infty)$.  
These fermion bilinears thus form a central extension 
$\widehat{\gl}(\infty)$ of $\gl(\infty)$.  
Clifford operators $g$ correspond to 
$\ZZ\times\ZZ$ matrices $A = (a_{ij})_{i,j\in\ZZ}$ 
by the Bogoliubov transformation 
\beqnn
  g\psi_ng^{-1} = \sum_{m\in\in\ZZ}\psi_ma_{mn} 
\eeqnn
on the linear span of $\psi_n$'s.  
If $g$ is given by the exponential $e^{\hat{X}}$ 
of a fermion bilinear $\hat{X}$, $A$ becomes 
the exponential $e^X$ of the associated matrix $X$. 

Building blocks of the generating operators 
(\ref{g}) and (\ref{g'}) can be translated 
to $\ZZ\times\ZZ$ matrices as follows. 
In matrix representation (for which we use the same 
notations as those of fermion bilinears), 
the fundamental fermion bilinears $L_0,W_0,J_k$ 
can be expressed as 
\beq
  L_0 = \Delta,\quad 
  W_0 = \Delta^2,\quad 
  J_k = \Lambda^k, 
\eeq
where 
\beqnn
  \Delta = \sum_{n\in\ZZ}nE_{nn},\quad 
  \Lambda = \sum_{n\in\ZZ}E_{n,n+1}.
\eeqnn
Note that $\Lambda$ and $\Delta$ amount 
to the shift operator $e^{\rd_s}$ 
and the multiplication operator $s$ 
on the spatial lattice $\ZZ$ of the Toda hierarchy: 
\beq
  \Lambda \longleftrightarrow e^{\rd_s},\quad 
  \Delta \longleftrightarrow s. 
\label{matrix-diffop}
\eeq
$Q^{L_0}$, $P_i^{L_l}$, $R_j^{L_0}$, $q^{W_0/2a}$ 
and $q^{\pm W_0/2b}$ are Clifford operators, 
and their matrix representation can be obtained 
from the matrix representation of $L_0,W_0$ as 
\beq
\begin{gathered}
  Q^{L_0} = \sum_{n\in\ZZ}Q^nE_{nn},\quad 
  P_i^{L_0} = \sum_{n\in\ZZ}P_i^nE_{nn},\quad 
  R_j^{L_0} = \sum_{n\in\ZZ}R_j^nE_{nn},\\
  q^{W_0/2a} = \sum_{n\in\ZZ}q^{n^2/2a}E_{nn},\quad 
  q^{\pm W_0/2b} = \sum_{n\in\ZZ}q^{\pm n^2/2b}E_{nn}. 
\end{gathered}
\eeq
In the same way, matrix representation 
of the one-variable vertex operators 
$\Gamma_{\pm}(z)$ and $\Gamma_{\pm}(z)$ 
can be calculated as 
\beq
\begin{gathered}
  \Gamma_{\pm}(z) 
  = \exp\left(\sum_{k=1}^\infty\frac{z^k}{k}\Lambda^{\pm k}\right)
  = (1 - z\Lambda^{\pm 1})^{-1},\\
  \Gamma_{\pm}(z)
  = \exp\left(- \sum_{k=1}^\infty\frac{(-z)^k}{k}\Lambda^{\pm k}\right)
  = 1 + z\Lambda^{\pm 1}.
\end{gathered}
\eeq
Consequently, $\Gamma_{\pm}(q^{-\rho})$ and $\Gamma'_{\pm}(q^{-\rho})$ 
become infinite products of the form 
\beq
\begin{gathered}
  \Gamma_{\pm}(q^{-\rho}) 
  = \prod_{i=1}^\infty(1 - q^{i-1/2}\Lambda^{\pm 1})^{-1},\\
  \Gamma'_{\pm}(q^{-\rho})
  = \prod_{i=1}^\infty(1 + q^{i-1/2}\Lambda^{\pm 1}),
\end{gathered}
\label{matGamma-infprod}
\eeq
which may be thought of as matrix-valued quantum dilogarithm 
\cite{FV93,FK93}.  

Thus the generating operators (\ref{g}) and (\ref{g'}) 
correspond to the following matrices: 
\begin{multline}
  U = q^{\Delta^2/2a}
      \Gamma_{-}(q^{-\rho})\Gamma_{+}(q^{-\rho})P_1^\Delta
      \Gamma_{-}(q^{-\rho})\Gamma_{+}(q^{-\rho})\cdots 
      P_{a-1}^\Delta\Gamma_{-}(q^{-\rho})\Gamma_{+}(q^{-\rho})Q^\Delta\\
  \mbox{}\times 
      \Gamma_{-}(q^{-\rho})\Gamma_{+}(q^{-\rho})R_{b-1}^\Delta
      \Gamma_{-}(q^{-\rho})\Gamma_{+}(q^{-\rho})\cdots 
      R_1^\Delta\Gamma_{-}(q^{-\rho})\Gamma_{+}(q^{-\rho})q^{\Delta^2/2b}, 
\label{U}
\end{multline}
\begin{multline}
  U' = q^{\Delta^2/2a}
       \Gamma_{-}(q^{-\rho})\Gamma_{+}(q^{-\rho})P_1^\Delta
       \Gamma_{-}(q^{-\rho})\Gamma_{+}(q^{-\rho})\cdots 
       P_{a-1}^\Delta\Gamma_{-}(q^{-\rho})\Gamma_{+}(q^{-\rho})Q^\Delta\\
   \mbox{}\times 
       \Gamma'_{-}(q^{-\rho})\Gamma'_{+}(q^{-\rho})R_{b-1}^\Delta
       \Gamma'_{-}(q^{-\rho})\Gamma'_{+}(q^{-\rho})\cdots
       R_1^\Delta\Gamma'_{-}(q^{-\rho})\Gamma'_{+}(q^{-\rho})q^{-\Delta^2/2b}. 
\label{U'}
\end{multline}

\subsection{Matrix factorization problem}

Given a generating operator $g$ of the tau function, 
the associated solution of the 2D Toda hierarchy 
in the Lax formalism can be captured 
by a matrix factorization problem 
\cite{Takasaki84,NTT95,Takasaki95} of the form 
\beq
  \exp\left(\sum_{k=1}^\infty t_k\Lambda^k\right)
  U\exp\left(- \sum_{k=1}^\infty\tbar_k\Lambda^{-k}\right) 
  = W^{-1}\Wbar, 
\label{factor-problem}
\eeq
where $U$ is a matrix representation of $g$, 
$W = W(\bst,\bstbar)$ is a lower triangular matrix 
with all diagonal elements being equal to $1$, 
and $\Wbar = \Wbar(\bst,\bstbar)$ is an upper triangular matrix 
with all diagonal elements being non-zero, namely, 
\beqnn
\begin{gathered}
  W = 1 + w_1\Lambda^{-1} + w_2\Lambda^{-2} + \cdots,\\
  \Wbar = \wbar_0 + \wbar_1\Lambda + \wbar_2\Lambda^2 + \cdots,\\
\end{gathered}
\eeqnn
where $w_n$'s and $\wbar_n$'s are diagonal matrices 
and $\wbar_0$ is invertible.

$W$ and $\Wbar$ play the role of the dressing operators. 
They satisfy the Sato equations 
\beq
\begin{gathered}
  \frac{\rd W}{\rd t_k} 
     = - \left(W\Lambda^kW^{-1}\right)_{<0}W,\quad 
  \frac{\rd W}{\rd\bar{t}_k} 
     = \left(\bar{W}\Lambda^{-k}\bar{W}^{-1}\right)_{<0}W,\\
  \frac{\rd \bar{W}}{\rd t_k} 
     = \left(W\Lambda^kW^{-1}\right)_{\ge 0}\bar{W},\quad
  \frac{\rd\bar{W}}{\rd\bar{t}_k} 
     = - \left(\bar{W}\Lambda^{-k}\bar{W}^{-1}\right)_{\ge 0}\bar{W},
\end{gathered}
\label{Sato-eq}
\eeq
where $(\quad)_{\ge 0}$ and $(\quad)_{<0}$ denote 
the projection to the upper and strictly lower triangular parts, i.e., 
\beqnn
  \left(\sum_{m,n}a_{mn}E_{mn}\right)_{\ge 0} = \sum_{m\le n}a_{mn}E_{mn},\; 
  \left(\sum_{m,n}a_{mn}E_{mn}\right)_{<0} = \sum_{m>n}a_{mn}E_{mn}. 
\eeqnn
The Lax operators 
\beqnn
\begin{gathered}
  L = \Lambda + u_1 + u_2\Lambda^{-1} + \cdots,\\
  \Lbar^{-1} = \ubar_0\Lambda^{-1} + \ubar_1 + \ubar_2\Lambda + \cdots, 
\end{gathered}
\eeqnn
where $u_n$'s and $\ubar_n$'s are diagonal matrices, 
are obtained by ``dressing'' the shift matrices $\Lambda^{\pm 1}$ as 
\beq
  L = W\Lambda W^{-1}, \quad 
  \Lbar^{-1} = \bar{W}\Lambda^{-1}\bar{W}, 
\label{LLbar-WWbar}
\eeq
and satisfy the Lax equations 
\beq
\begin{gathered}
  \frac{\rd L}{\rd t_k} = [B_k,L], \quad
  \frac{\rd L}{\rd\bar{t}_k} = [\Bbar_k,L],\\
  \frac{\rd\Lbar^{-1}}{\rd t_k} = [B_k,\Lbar^{-1}],\quad
  \frac{\rd\Lbar^{-1}}{\rd\bar{t}_k} = [\Bbar_k,\Lbar^{-1}], 
\end{gathered}
\label{Lax-eq}
\eeq
where 
\beqnn
  B_k = (L^k)_{\ge 0}, \quad 
  \Bbar_k = (\bar{L}^{-k})_{<0}. 
\eeqnn

Solving the factorization problem (\ref{factor-problem}) 
for bi-infinite matrices directly 
is an extremely tough problem unlike its analogue 
for finite or semi-infinite matrices \cite{Takasaki84}. 
In a sense, the fermionic construction of tau functions 
is an alternative approach to this issue \cite{Takebe91}.  
In a cynical view, this approach simply converts 
a tough problem to another one, namely, to evaluating 
the right hand side of (\ref{tau(g)}).  

It is therefore remarkable that the factorization problem 
for the matrices (\ref{U}) and (\ref{U'}) can be solved 
explicitly at least {\it at the initial time\/}
$\bst = \bstbar = \bszero$.  

\begin{prop}
The generating matrices (\ref{U}) and (\ref{U'}) 
can be factorized as 
\beq
  U = W_{(0)}{}^{-1}\Wbar_{(0)},\quad 
  U' = W'_{(0)}{}^{-1}\Wbar'_{(0)}, 
\label{UU'-factorized}
\eeq
where 
\beq
\begin{aligned}
  W_{(0)} &= q^{\Delta^2/2a}\prod_{k=1}^{a+b}\Gamma_{-}(Q^{(k)}q^{-\rho})^{-1}
     \cdot q^{-\Delta^2/2a},\\
  \Wbar_{(0)} &= q^{\Delta^2/2a}\prod_{k=1}^{a+b}\Gamma_{+}(Q^{(k)-1}q^{-\rho})
    \cdot (P_1\cdots P_{a-1}QR_{b-1}\cdots R_1)^\Delta q^{\Delta^2/2b},
\end{aligned}
\label{WWbar(0)}
\eeq
and 
\beq
\begin{aligned}
  W'_{(0)} &= q^{\Delta^2/2a}\prod_{i=1}^a\Gamma_{-}(Q^{(i)}q^{-\rho})^{-1}
    \cdot\prod_{j=1}^b\Gamma'_{-}(Q^{(a+j)}q^{-\rho})^{-1}
    \cdot q^{-\Delta^2/2a},\\
  \Wbar'_{(0)} &= q^{\Delta^2/2a}\prod_{i=1}^a\Gamma_{+}(Q^{(i)-1}q^{-\rho})
    \cdot\prod_{j=1}^b\Gamma'_{+}(Q^{(a+j)-1}q^{-\rho})\\ 
    &\qquad\qquad \mbox{}\times 
    (P_1\cdots P_{a-1}QR_{b-1}\cdots R_1)^\Delta q^{-\Delta^2/2b}. 
\end{aligned}
\label{WWbar'(0)}
\eeq
The new constants $Q^{(i)}$'s are defined as 
\beq
\begin{gathered}
  Q^{(1)} = 1, \quad
  Q^{(i)} = P_1\cdots P_{i-1},\quad i=2,\ldots,a,\\
  Q^{(a+1)} = P_1\cdots P_{a-1}Q,\\
  Q^{(a+j)} = Q^{(a+1)}R_{b-1}\cdots R_{b-j+1},\quad j=2,\ldots,b.
\end{gathered}
\eeq
\end{prop}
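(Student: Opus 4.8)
The plan is to bring the matrices (\ref{U}) and (\ref{U'}) into the stated triangular product form by three elementary moves: pushing the diagonal conjugation factors $P_i^\Delta,Q^\Delta,R_j^\Delta$ rightward through the vertex-operator matrices, separating the ``minus-type'' factors $\Gamma_-(\cdot),\Gamma'_-(\cdot)$ from the ``plus-type'' factors $\Gamma_+(\cdot),\Gamma'_+(\cdot)$, and inserting $q^{-\Delta^2/2a}q^{\Delta^2/2a}=1$ to peel off a lower-triangular dressing matrix. The point that makes this clean is that, \emph{as $\ZZ\times\ZZ$ matrices}, all the vertex operators commute: by (\ref{matGamma-infprod}) each of $\Gamma_+(wq^{-\rho})$ and $\Gamma'_+(wq^{-\rho})$ is a (formal) function of $\Lambda$ alone and each of $\Gamma_-(wq^{-\rho})$ and $\Gamma'_-(wq^{-\rho})$ is a function of $\Lambda^{-1}$ alone, and $\Lambda$ commutes with $\Lambda^{-1}$. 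The MacMahon-function factors that controlled the corresponding \emph{operator} commutations in Sections 3 and 4 (and that were absorbed into the prefactors $f_{a,b},f'_{a,b}$) are exactly the central cocycle of $\widehat{\gl}(\infty)$; they disappear on passing to matrices, so the factorization (\ref{factor-problem}) for $U,U'$ will carry no stray scalar.

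Concretely, I would first use the matrix scaling rules $u^\Delta\Lambda u^{-\Delta}=u^{-1}\Lambda$ and $u^\Delta\Lambda^{-1}u^{-\Delta}=u\Lambda^{-1}$ --- the matrix forms of (\ref{Gamma-scaling}) and (\ref{Gamma'-scaling}) --- to carry each diagonal block $D_\ell$ (reading $D_1=P_1^\Delta,\ldots,D_{a-1}=P_{a-1}^\Delta$, $D_a=Q^\Delta$, $D_{a+1}=R_{b-1}^\Delta,\ldots,D_{a+b-1}=R_1^\Delta$, with no diagonal after the last vertex block) past every vertex operator standing to its right. The $\ell$-th vertex block then acquires a minus-factor with argument rescaled to $Q^{(\ell)}q^{-\rho}$, where $Q^{(\ell)}$ is the product of the scalars of $D_1,\ldots,D_{\ell-1}$, and a plus-factor rescaled to $(Q^{(\ell)})^{-1}q^{-\rho}$, while all the diagonals amalgamate at the far right into $(P_1\cdots P_{a-1}QR_{b-1}\cdots R_1)^\Delta=(Q^{(a+b)})^\Delta$ --- exactly the constant appearing in $\Wbar_{(0)},\Wbar'_{(0)}$, the bookkeeping being precisely the recursive definition of the $Q^{(k)}$. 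Next, commutativity lets me move every $\Gamma_-(Q^{(\ell)}q^{-\rho})$ or $\Gamma'_-(Q^{(\ell)}q^{-\rho})$ to the left of every $\Gamma_+((Q^{(\ell)})^{-1}q^{-\rho})$ or $\Gamma'_+((Q^{(\ell)})^{-1}q^{-\rho})$ and to order the two resulting products freely. Finally, inserting $q^{-\Delta^2/2a}q^{\Delta^2/2a}$ between the two products splits off $W_{(0)}^{-1}$ --- the conjugate by $q^{\Delta^2/2a}$ of the product of the minus-type factors --- from the remainder $q^{\Delta^2/2a}(\text{product of plus-type factors})(Q^{(a+b)})^\Delta q^{\pm\Delta^2/2b}$, which is $\Wbar_{(0)}$; rewriting the inverse of a product of commuting $\Gamma_-$'s as the product of the inverses puts $W_{(0)},\Wbar_{(0)}$ into the exact form (\ref{WWbar(0)}), and the same steps carried out with $\Gamma'_\pm$ in the second half give (\ref{WWbar'(0)}).

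It then remains to verify the normalization demanded by (\ref{factor-problem}). Each $\Gamma_-(wq^{-\rho})^{\pm1}$ and $\Gamma'_-(wq^{-\rho})^{\pm1}$ is lower-triangular with unit diagonal, so $W_{(0)},W'_{(0)}$, being conjugates of products of such matrices by the diagonal $q^{\Delta^2/2a}$, are lower-triangular with all diagonal entries $1$; and $\Wbar_{(0)},\Wbar'_{(0)}$, being products of upper-unitriangular matrices and diagonals, are upper-triangular with row-$n$ diagonal entry $q^{n^2/2a}(Q^{(a+b)})^{n}q^{\pm n^2/2b}\neq0$. The only real obstacle I anticipate is notational: keeping the $a+b$ blocks, the split between the $P$'s and the $R$'s (and, for $U'$, between unprimed and primed vertex operators), and the two rescalings $q^{-\rho}\mapsto Q^{(\ell)}q^{-\rho}$ versus $q^{-\rho}\mapsto(Q^{(\ell)})^{-1}q^{-\rho}$ all straight. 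The entire conceptual content is the matrix-level commutativity noted above.
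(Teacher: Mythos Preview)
Your proposal is correct and follows essentially the same route as the paper's own proof: push the diagonal factors $P_i^\Delta,Q^\Delta,R_j^\Delta$ to the right via the matrix scaling rules (\ref{matGamma-scaling}), use the commutativity of the matrix vertex operators to separate the minus-type from the plus-type factors, and insert $q^{-\Delta^2/2a}q^{\Delta^2/2a}$ in between. Your added remarks on why the matrix-level commutation is exact (the cocycle vanishes) and on the explicit triangular normalization of $W_{(0)},\Wbar_{(0)}$ are helpful elaborations, but the argument is the same.
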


\begin{proof}
One can use the matrix version 
\beq
\begin{gathered}
  \Gamma_{+}(vq^{-\rho})u^\Delta = u^\Delta\Gamma_{+}(uvq^{-\rho}),\quad 
  u^\Delta\Gamma_{-}(vq^{-\rho}) = \Gamma_{-}(uvq^{-\rho})u^\Delta,\\
  \Gamma'_{+}(vq^{-\rho})u^\Delta = u^{L_0}\Gamma'_{+}(uvq^{-\rho}),\quad
  u^\Delta\Gamma'_{-}(vq^{-\rho}) = \Gamma'_{-}(uvq^{-\rho})u^\Delta
\end{gathered}
\label{matGamma-scaling}
\eeq
of (\ref{Gamma-scaling}) and (\ref{Gamma'-scaling}) 
to collect $P_i^\Delta$'s, $R_j^\Delta$'s and $Q^\Delta$ 
in $U$ and $U'$ to the right end of the matrix product as 
\begin{multline*}
  U = q^{\Delta^2/2a}
    \prod_{i=1}^a\Gamma_{-}(Q^{(i)}q^{-\rho})\Gamma_{+}(Q^{(i)-1}q^{-\rho})\\
  \mbox{}\times
    \prod_{j=1}^b\Gamma_{-}(Q^{(a+j)}q^{-\rho})\Gamma_{+}(Q^{(a+j)-1}q^{-\rho})
    \cdot (P_1\cdots P_{a-1}QR_{b-1}\cdots R_1)^\Delta q^{\Delta^2/2b}
\end{multline*}
and 
\begin{multline*}
  U' = q^{\Delta^2/2a}
     \prod_{i=1}^a\Gamma_{-}(Q^{(i)}q^{-\rho})\Gamma_{+}(Q^{(i)-1}q^{-\rho})\\
   \mbox{}\times
     \prod_{j=1}^b\Gamma'_{-}(Q^{(a+j)}q^{-\rho})\Gamma'_{+}(Q^{(a+j)-1}q^{-\rho})
     \cdot (P_1\cdots P_{a-1}QR_{b-1}\cdots R_1)^\Delta q^{-\Delta^2/2b}. 
\end{multline*}
Since $\Gamma_{\pm}$'s and $\Gamma'_{\pm}$'s are commutative, 
one can move $\Gamma_{-}$'s to the left side, 
$\Gamma_{+}$'s and $\Gamma'_{+}$'s to the right side, 
and insert $1 = q^{-\Delta^2/2a}\cdot q^{\Delta^2/2a}$ in the middle 
to achieve the factorization (\ref{UU'-factorized}). 
\end{proof}

This result implies that $W_{(0)},\Wbar_{(0)}$ and 
$W'_{(0)},\Wbar'_{(0)}$ are nothing  but the initial values 
$W|_{\bst=\bstbar=\bszero},\Wbar|_{\bst=\bstbar=\bszero}$ 
of the dressing operators for the special solutions 
of the 2D Toda hierarchy defined by the tau functions 
(\ref{tau(g)}) and (\ref{tau(g')}).  
These initial values, in turn, determine the initial values 
$L|_{\bst=\bstbar=\bszero},\Lbar|_{\bst=\bstbar=\bszero}$ 
of the Lax operators by the dressing relation (\ref{LLbar-WWbar}). 
This eventually leads to a precise characterization 
of these special solutions as we shall show in the next section.

\section{Structure of Lax operators}

\subsection{Initial values of Lax operators: First orbifold model}

Let $L_{(0)}$ and $\Lbar_{(0)}{}^{-1}$ denote the initial values 
of the Lax operators determined by $U$.  
As it turns out below, it is $L_{(0)}{}^a$ and $\Lbar_{(0)}{}^{-b}$ 
rather than $L_{(0)}$ and $\Lbar_{(0)}{}^{-1}$ themselves 
that play a fundamental role in the subsequent consideration.  
One can find their explicit form from (\ref{WWbar(0)}). 

The following are technical clues for these calculations. 

\begin{lemma}
\beq
\begin{gathered}
  q^{-\Delta^2/2a}\Lambda^a q^{\Delta^2/2a} = q^{a/2}q^\Delta \Lambda^a,\\
  u^\Delta q^{\Delta^2/2b}\Lambda^{-b}q^{-\Delta^2/2b}u^{-\Delta} 
    = u^bq^{-b/2}q^\Delta\Lambda^{-b}. 
\end{gathered}
\label{Lax(0)key1}
\eeq
\end{lemma}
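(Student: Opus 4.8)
The plan is to prove both identities by a direct computation in the $\ZZ\times\ZZ$ matrix representation, using only the explicit forms $\Delta=\sum_{n\in\ZZ}nE_{nn}$, $\Lambda=\sum_{n\in\ZZ}E_{n,n+1}$ and $q^{\Delta^2/2c}=\sum_{n\in\ZZ}q^{n^2/2c}E_{nn}$. The single bookkeeping fact I need is that conjugating a shift by an invertible diagonal matrix $D=\sum_{n\in\ZZ}d_nE_{nn}$ again produces a weighted shift,
\beq
  D\,\Lambda^k\,D^{-1}=\sum_{n\in\ZZ}\frac{d_n}{d_{n+k}}\,E_{n,n+k},\qquad k\in\ZZ,
\eeq
which follows at once from $E_{mm}E_{n,n+k}=\delta_{mn}E_{n,n+k}$ and $E_{n,n+k}E_{ll}=\delta_{n+k,l}E_{n,n+k}$. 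In particular the right-hand sides of (\ref{Lax(0)key1}) are exactly the normal forms $q^\Delta\Lambda^{\pm c}=\sum_{n\in\ZZ}q^nE_{n,n\pm c}$ multiplied by an overall constant, so the whole task is to compute the ratios $d_n/d_{n\pm c}$.

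For the first identity I would take $D=q^{-\Delta^2/2a}$, i.e.\ $d_n=q^{-n^2/2a}$, and $k=a$. Then $d_n/d_{n+a}=q^{\bigl((n+a)^2-n^2\bigr)/2a}=q^{(2an+a^2)/2a}=q^{\,n+a/2}$, so
\beq
  q^{-\Delta^2/2a}\Lambda^a q^{\Delta^2/2a}
  =\sum_{n\in\ZZ}q^{\,n+a/2}E_{n,n+a}
  =q^{a/2}\sum_{n\in\ZZ}q^nE_{n,n+a}
  =q^{a/2}q^\Delta\Lambda^a .
\eeq
For the second identity I would observe that $u^\Delta q^{\Delta^2/2b}$ is again diagonal, with $d_n=u^nq^{n^2/2b}$, and apply the same sub-lemma with $k=-b$:
\beq
  \frac{d_n}{d_{n-b}}=u^{\,n-(n-b)}\,q^{\bigl(n^2-(n-b)^2\bigr)/2b}=u^b q^{(2bn-b^2)/2b}=u^b q^{\,n-b/2},
\eeq
whence
\beq
  u^\Delta q^{\Delta^2/2b}\Lambda^{-b}q^{-\Delta^2/2b}u^{-\Delta}
  =\sum_{n\in\ZZ}u^b q^{\,n-b/2}E_{n,n-b}
  =u^b q^{-b/2}q^\Delta\Lambda^{-b},
\eeq
which is the second line of (\ref{Lax(0)key1}).

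I do not expect any genuine obstacle here: this is an elementary manipulation of diagonal and shift matrices, and the only point requiring attention is the non-commutativity $q^\Delta\Lambda^k=q^{-k}\Lambda^kq^\Delta$, so that the overall constant ($q^{\pm c/2}$, respectively the extra $u^b$ produced by conjugation with $u^\Delta$) and the diagonal factor $q^\Delta$ end up on the correct side of $\Lambda^{\pm c}$ — this matters because the two identities will be fed, in the exact order written, into the computation of $L_{(0)}{}^a$ and $\Lbar_{(0)}{}^{-b}$ from (\ref{WWbar(0)}). Under the correspondence (\ref{matrix-diffop}) the content is just that conjugation by $q^{s^2/2c}$ multiplies $e^{c\rd_s}$ by $q^{\,s+c/2}$, the lattice avatar of a fractional framing factor.
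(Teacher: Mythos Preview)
Your proof is correct and is exactly the ``straightforward calculation'' the paper alludes to; the paper's own proof consists of the single sentence ``Do straightforward calculations.'' You have simply supplied the details, and they are right.
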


\begin{proof}
Do straightforward calculations. 
\end{proof}

\begin{lemma}
\beq
\begin{gathered}
  \Gamma_{-}(uq^{-\rho})^{-1}q^\Delta\Gamma_{-}(uq^{-\rho}) 
  = q^\Delta(1 - uq^{-1/2}\Lambda^{-1}),\\
  \Gamma_{+}(u^{-1}q^{-\rho})q^\Delta\Gamma_{+}(u^{-1}q^{-\rho})^{-1}
  = q^\Delta(1 - u^{-1}q^{1/2}\Lambda).
\end{gathered}
\label{Lax(0)key2}
\eeq
\end{lemma}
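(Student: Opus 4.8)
The plan is to carry everything out in the matrix representation, reducing both identities to the single almost-commutativity rule between $q^{\Delta}$ and the shift matrices. From $\Delta=\sum_{n}nE_{nn}$ and $\Lambda^{\pm1}=\sum_{n}E_{n,n\pm1}$ one reads off at once
\[
  q^{-\Delta}\Lambda^{\pm1}q^{\Delta}=q^{\pm1}\Lambda^{\pm1},
\]
which underlies the scalings already used in (\ref{V-scaling}) and (\ref{matGamma-scaling}). Feeding this into the product form (\ref{matGamma-infprod}), in its scaled versions $\Gamma_{-}(uq^{-\rho})=\prod_{i\ge1}(1-uq^{i-1/2}\Lambda^{-1})^{-1}$ and $\Gamma_{+}(u^{-1}q^{-\rho})=\prod_{i\ge1}(1-u^{-1}q^{i-1/2}\Lambda)^{-1}$: since each is a product of \emph{mutually commuting} power series in one shift operator, the conjugation $q^{-\Delta}(\,\cdot\,)q^{\Delta}$ acts factor by factor and just rescales the coefficient $q^{i-1/2}$, i.e.\ it shifts the index of the semi-infinite product by one. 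The whole content of the lemma is this index shift, which leaves over exactly one boundary factor.

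Concretely, the two ``telescoping'' identities
\begin{align*}
  q^{-\Delta}\Gamma_{-}(uq^{-\rho})q^{\Delta}
    &=(1-uq^{-1/2}\Lambda^{-1})^{-1}\,\Gamma_{-}(uq^{-\rho}),\\
  q^{-\Delta}\Gamma_{+}(u^{-1}q^{-\rho})q^{\Delta}
    &=(1-u^{-1}q^{1/2}\Lambda)\,\Gamma_{+}(u^{-1}q^{-\rho})
\end{align*}
hold: in the first, the conjugated factors have exponents $q^{-1/2},q^{1/2},q^{3/2},\dots$, so a new $i=0$ factor $(1-uq^{-1/2}\Lambda^{-1})^{-1}$ is produced; in the second, they have exponents $q^{3/2},q^{5/2},\dots$, so the old $i=1$ factor drops out; commutativity of the factors lets one pull the surviving factor to the left (there is no convergence subtlety, since $|q|<1$ and all factors commute). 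A short rearrangement then finishes the proof: multiplying the first identity on the left by $q^{\Delta}$, then by $\Gamma_{-}(uq^{-\rho})^{-1}$, and commuting the leftover factor past $\Gamma_{-}(uq^{-\rho})$ yields the first line of (\ref{Lax(0)key2}); multiplying the second identity on the left by $q^{\Delta}$ and cancelling $\Gamma_{+}(u^{-1}q^{-\rho})$ on the right yields the second line. (As an independent check one may instead write $\Gamma_{-}(uq^{-\rho})=\exp\!\big(\sum_{k\ge1}\tfrac{(uq^{1/2})^{k}}{k(1-q^{k})}\Lambda^{-k}\big)$, apply $q^{-\Delta}\Lambda^{-k}q^{\Delta}=q^{-k}\Lambda^{-k}$ termwise, and read off $-\log(1-uq^{-1/2}\Lambda^{-1})$ added to the original exponent.)

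I do not expect a genuine obstacle: everything collapses to the one commutation rule plus bookkeeping of a semi-infinite product, in the same ``straightforward calculation'' spirit as the preceding lemma. The single point that needs care is the direction of the index shift --- whether conjugating by $q^{\Delta}$ rather than $q^{-\Delta}$ inserts or deletes the boundary factor, hence whether the surviving factor carries $q^{-1/2}$ or $q^{1/2}$ --- since that is exactly what pins down the exponents in (\ref{Lax(0)key2}), which then feed into the explicit computation of $L_{(0)}{}^{a}$ and $\Lbar_{(0)}{}^{-b}$ in the next subsection. The argument is the orbifold-setting counterpart of the vertex-operator manipulations used for the ordinary and modified melting crystal models in \cite{NT07,NT08,Takasaki12,Takasaki13}.
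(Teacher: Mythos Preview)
Your argument is correct and essentially identical to the paper's: both conjugate the infinite product (\ref{matGamma-infprod}) by $q^\Delta$, use that this shifts each factor's exponent $q^{i-1/2}$ by one unit (the paper phrases this as the scaling rule (\ref{matGamma-scaling}) with $u=q$, you phrase it as $q^{-\Delta}\Lambda^{\pm1}q^{\Delta}=q^{\pm1}\Lambda^{\pm1}$ applied factorwise), and then telescope to isolate the single surviving boundary factor. Your rearrangement description for the first line is slightly garbled --- after multiplying by $q^\Delta$ on the left you should multiply by $\Gamma_{-}(uq^{-\rho})$ on the \emph{right} (or invert first) rather than by $\Gamma_{-}(uq^{-\rho})^{-1}$ on the left --- but the intent is clear and the computation goes through.
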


\begin{proof}
To derive the first identity, one can use 
the scaling property (\ref{matGamma-scaling}) 
and the infinite product form (\ref{matGamma-infprod}) 
to rewrite the left hand side as 
\begin{align*}
  \Gamma_{-}(uq^{-\rho})^{-1}q^\Delta\Gamma_{-}(uq^{-\rho}) 
  &= q^\Delta\Gamma_{-}(uq^{-1}q^{-\rho})^{-1}\Gamma_{-}(uq^{-\rho})\nonumber\\
  &= q^\Delta\prod_{i=1}^\infty(1 - uq^{i-3/2}\Lambda^{-1})
     \cdot\prod_{i=1}^\infty(1 - uq^{i-1/2}\Lambda^{-1})^{-1}\nonumber\\
  &= q^\Delta(1 - uq^{-1/2}\Lambda^{-1}). 
\end{align*}
In the same way, the second identity can be derived as 
\begin{align*}
  \Gamma_{+}(u^{-1}q^{-\rho})q^\Delta\Gamma_{+}(u^{-1}q^{-\rho})^{-1}
  &= q^\Delta\Gamma_{+}(u^{-1}qq^{-\rho})\Gamma_{+}(u^{-1}q^{-\rho})^{-1}\nonumber\\
  &= q^\Delta\prod_{i=1}^\infty(1 - u^{-1}q^{i+1/2}\Lambda)^{-1}
     \cdot\prod_{i=1}^\infty(1 - u^{-1}q^{i-1/2}\Lambda)\nonumber\\
  &= q^\Delta(1 - u^{-1}q^{1/2}\Lambda). 
\end{align*}
\end{proof}

Plugging (\ref{WWbar(0)}) into the dressing relations 
\beqnn
  L_{(0)}{}^a = W_{(0)}\Lambda^a W_{(0)}{}^{-1},\quad 
  \Lbar_{(0)}{}^{-b} =\Wbar_{(0)}\Lambda^{-b}\Wbar_{(0)}{}^{-1}
\eeqnn
and applying (\ref{Lax(0)key1}), one can express 
$L_{(0)}{}^a$ and $\Lbar_{(0)}{}^{-b}$ as 
\begin{align*}
  L_{(0)}{}^a 
  &= q^{a/2}q^{\Delta^2/2a}\prod_{k=1}^{a+b}\Gamma_{-}(Q^{(k)}q^{-\rho})^{-1}
    \cdot q^\Delta\Lambda^a\prod_{k=1}^{a+b}\Gamma_{-}(Q^{(k)}q^{-\rho})
    \cdot q^{-\Delta^2/2a}\nonumber\\
  &= q^{a/2}q^{\Delta^2/2a}\prod_{k=1}^{a+b}\Gamma_{-}(Q^{(k)}q^{-\rho})^{-1}
    \cdot q^\Delta\prod_{k=1}^{a+b}\Gamma_{-}(Q^{(k)}q^{-\rho})
    \cdot \Lambda^a q^{-\Delta^2/2a}
\end{align*}
and 
\begin{align*}
  \Lbar_{(0)}{}^{-b}
  &= (P_1\cdots P_{a-1}QR_{b-1}\cdots R_1)^b\nonumber\\
  &\quad\mbox{}\times
    q^{-b/2}q^{\Delta^2/2a}\prod_{k=1}^{a+b}\Gamma_{+}(Q^{(k)-1}q^{-\rho})
    \cdot q^\Delta\Lambda^{-b}
    \prod_{k=1}^{a+b}\Gamma_{+}(Q^{(k)-1}q^{-\rho})^{-1}
    \cdot q^{-\Delta^2/2a}\nonumber\\
  &= (P_1\cdots P_{a-1}QR_{b-1}\cdots R_1)^b\nonumber\\
  &\quad\mbox{}\times
    q^{-b/2}q^{\Delta^2/2a}\prod_{k=1}^{a+b}\Gamma_{+}(Q^{(k)-1}q^{-\rho})
    \cdot q^\Delta\prod_{k=1}^{a+b}\Gamma_{+}(Q^{(k)-1}q^{-\rho})^{-1}
    \cdot \Lambda^{-b}q^{-\Delta^2/2a}. 
\end{align*}
By repeated use of (\ref{Lax(0)key2}), 
one can rewrite these expressions as 
\beq
  L_{(0)}{}^a 
  = q^{a/2}q^{\Delta^2/2a}q^\Delta
    \prod_{k=1}^{a+b}(1 - Q^{(k)}q^{-1/2}\Lambda^{-1})
    \cdot \Lambda^a q^{-\Delta^2/2a}
\label{L(0)-factorized}
\eeq
and 
\begin{align}
  \Lbar_{(0)}{}^{-b} 
  &= (P_1\cdots P_{a-1}QR_{b-1}\cdots R_1)^b\nonumber\\
  &\quad\mbox{}\times
    q^{-b/2}q^{\Delta^2/2a}q^\Delta
    \prod_{k=1}^{a+b}(1 - Q^{(k)-1}q^{1/2}\Lambda)
    \cdot \Lambda^{-b}q^{-\Delta^2/2a}. 
\label{Lbar(0)-factorized}
\end{align}

Because of the identity 
\beqnn
  \prod_{k=1}^{a+b}(1 - Q^{(k)-1}q^{1/2}\Lambda)\cdot\Lambda^{-b}
  = \prod_{k=1}^{a+b}(-Q^{(k)-1}q^{1/2})
    \cdot\prod_{k=1}^{a+b}(1 - Q^{(k)}q^{-1/2}\Lambda^{-1})
    \cdot\Lambda^a, 
\eeqnn
$L_{(0)}{}^a$ and $\Lbar_{(0)}{}^{-b}$ turn out to coincide 
with each other up to a constant factor: 
\beqnn
  \Lbar_{(0)}{}^{-b}
  = (P_1\cdots P_{a-1}QR_{b-1}\cdots R_1)^b
    \prod_{k=1}^{a+b}(-Q^{(k)-1})
    \cdot L_{(0)}{}^a.
\eeqnn
Moreover, since 
\begin{multline*}
   \prod_{k=1}^{a+b}(1 - Q^{(k)}q^{-1/2}\Lambda^{-1})\cdot \Lambda^a 
  = \prod_{i=1}^a(\Lambda - Q^{(i)}q^{-1/2})
    \cdot\prod_{j=1}^b(1 - Q^{(a+j)}q^{-1/2}\Lambda^{-1}),
\end{multline*}
the foregoing results (\ref{L(0)-factorized}) 
and (\ref{Lbar(0)-factorized}) can be restated 
in the following form. 

\begin{prop}
$L_{(0)}^a$ and $\Lbar_{(0)}^{-b}$ can be factorized as 
\beq
  L_{(0)}{}^a = D^{-1}\Lbar_{(0)}{}^{-b} = B_{(0)}C_{(0)}, 
\label{LLbar(0)-BC(0)}
\eeq
where 
\begin{gather}
  B_{(0)} = q^{a/2}q^{\Delta^2/2a}q^\Delta
      \prod_{i=1}^a(\Lambda - Q^{(i)}q^{-1/2})\cdot q^{-\Delta^2/2a},\\
  C_{(0)} = q^{\Delta^2/2a}\prod_{j=1}^b(1 - Q^{(a+j)}q^{-1/2}\Lambda^{-1})
      \cdot q^{-\Delta^2/2a},\\
  D = (P_1\cdots P_{a-1}QR_{b-1}\cdots R_1)^b\prod_{k=1}^{a+b}(-Q^{(k)-1}). 
\end{gather}
$B_{(0)}$ and $C_{(0)}$ are polynomials in $\Lambda^{\pm 1}$ 
of the form 
\beq
\begin{gathered}
  B_{(0)} = \Lambda^a + \beta_{1(0)}\Lambda^{a-1} 
    + \cdots \beta_{a(0)},\\
  C_{(0)} = 1 + \gamma_{1(0)}\Lambda^{-1} 
    + \cdots + \gamma_{b(0)}\Lambda^{-b},
\end{gathered}
\eeq
where $\beta_{i(0)}$'s and $\gamma_{j(0)}$'s are diagonal matrices.  
\end{prop}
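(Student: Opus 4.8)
The plan is to read the claimed factorization off the computations already carried out in the running text above, since the proposition is essentially a repackaging of equations (\ref{L(0)-factorized}) and (\ref{Lbar(0)-factorized}) together with the two scalar-coefficient product identities displayed just before the statement. Concretely I would start from the dressing relations $L_{(0)}{}^a = W_{(0)}\Lambda^a W_{(0)}{}^{-1}$ and $\Lbar_{(0)}{}^{-b} = \Wbar_{(0)}\Lambda^{-b}\Wbar_{(0)}{}^{-1}$, insert the explicit forms (\ref{WWbar(0)}) of $W_{(0)},\Wbar_{(0)}$, move the shift matrices $\Lambda^{\pm a},\Lambda^{\mp b}$ through the conjugating factors $q^{\pm\Delta^2/2a}$ using Lemma (\ref{Lax(0)key1}), and then commute the resulting $q^\Delta$ through the products $\prod_k\Gamma_{-}(Q^{(k)}q^{-\rho})^{-1}$ and $\prod_k\Gamma_{+}(Q^{(k)-1}q^{-\rho})$ using Lemma (\ref{Lax(0)key2}). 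This is exactly the chain of equalities that terminates at (\ref{L(0)-factorized}) and (\ref{Lbar(0)-factorized}); no new computation is needed at this stage beyond citing those identities.

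The identification $L_{(0)}{}^a = D^{-1}\Lbar_{(0)}{}^{-b}$ then follows by pure algebra of commuting polynomials in $\Lambda^{\pm1}$. Since each $Q^{(k)}$ is a scalar, $(1-Q^{(k)-1}q^{1/2}\Lambda) = -Q^{(k)-1}q^{1/2}\Lambda\,(1-Q^{(k)}q^{-1/2}\Lambda^{-1})$, and multiplying over $k=1,\dots,a+b$ and then by $\Lambda^{-b}$ gives the identity
\beqnn
  \prod_{k=1}^{a+b}(1-Q^{(k)-1}q^{1/2}\Lambda)\cdot\Lambda^{-b}
  = \prod_{k=1}^{a+b}(-Q^{(k)-1}q^{1/2})\cdot
    \prod_{k=1}^{a+b}(1-Q^{(k)}q^{-1/2}\Lambda^{-1})\cdot\Lambda^a
\eeqnn
quoted before the statement. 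Substituting this into (\ref{Lbar(0)-factorized}), pulling the constant $\prod_k(-Q^{(k)-1}q^{1/2}) = q^{(a+b)/2}\prod_k(-Q^{(k)-1})$ past everything, and combining powers of $q$ — the front factor $q^{-b/2}$ together with $q^{(a+b)/2}$ yields exactly the $q^{a/2}$ occurring in (\ref{L(0)-factorized}) — gives $\Lbar_{(0)}{}^{-b} = D\,L_{(0)}{}^a$ with $D = (P_1\cdots P_{a-1}QR_{b-1}\cdots R_1)^b\prod_{k=1}^{a+b}(-Q^{(k)-1})$.

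For the factorization $L_{(0)}{}^a = B_{(0)}C_{(0)}$ I would split the $(a+b)$-fold product in (\ref{L(0)-factorized}) into its first $a$ and last $b$ factors, which is legitimate because all factors commute; pushing $\Lambda^a$ into the first block turns $\prod_{i=1}^a(1-Q^{(i)}q^{-1/2}\Lambda^{-1})\cdot\Lambda^a$ into the monic polynomial $\prod_{i=1}^a(\Lambda-Q^{(i)}q^{-1/2})$, while the second block $\prod_{j=1}^b(1-Q^{(a+j)}q^{-1/2}\Lambda^{-1})$ is a polynomial in $\Lambda^{-1}$ with constant term $1$. Inserting $1 = q^{-\Delta^2/2a}\cdot q^{\Delta^2/2a}$ between the two blocks and grouping the conjugating factors exhibits $L_{(0)}{}^a$ as $B_{(0)}C_{(0)}$ with $B_{(0)},C_{(0)}$ as in the statement. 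It then remains to check the normal forms: conjugation of a monomial $\Lambda^m$ by the diagonal matrix $q^{\Delta^2/2a}q^\Delta$ produces $\Lambda^m$ times a diagonal matrix, so $B_{(0)}$ is automatically supported on $\Lambda^a,\dots,\Lambda^0$ and $C_{(0)}$ on $\Lambda^0,\dots,\Lambda^{-b}$ with diagonal coefficients; a one-line shift computation gives $q^{\Delta^2/2a}q^\Delta\Lambda^a q^{-\Delta^2/2a} = q^{-a/2}\Lambda^a$, so the leading coefficient of $B_{(0)}$ is $q^{a/2}\cdot q^{-a/2}\Lambda^a = \Lambda^a$, and the constant term of $C_{(0)}$ is $q^{\Delta^2/2a}\cdot1\cdot q^{-\Delta^2/2a}=1$. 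I expect no genuine obstacle: the whole argument is bookkeeping, the only point needing a little care being the tracking of powers of $q$ emitted when $\Lambda^{\pm a},\Lambda^{\mp b}$ are pushed past $q^{\pm\Delta^2/2a}$ and $q^\Delta$, which is precisely what Lemmas (\ref{Lax(0)key1}) and (\ref{Lax(0)key2}) isolate.
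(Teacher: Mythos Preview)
Your proposal is correct and follows essentially the same route as the paper: the proposition is indeed just a repackaging of (\ref{L(0)-factorized}) and (\ref{Lbar(0)-factorized}) via the two displayed product identities, and you reproduce exactly that argument. Your extra verification that the leading coefficient of $B_{(0)}$ is $\Lambda^a$ (via $q^{\Delta^2/2a}q^\Delta\Lambda^a q^{-\Delta^2/2a} = q^{-a/2}\Lambda^a$) is a detail the paper leaves implicit but is a welcome sanity check.
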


\subsection{Initial values of Lax operators: Second orbifold model}

Let $L'_{(0)}$ and $\Lbar'_{(0)}{}^{-1}$ denote 
the initial values of the Lax operators determined by $U'$. 
In this case, too, $L_{(0)}{}^a$ and $\Lbar_{(0)}{}^{-b}$ 
play a fundamental role.  Let us derive their explicit form 
from (\ref{WWbar'(0)}). 

Technical clues are the identities (\ref{Lax(0)key1}), 
(\ref{Lax(0)key2}) and their variants 
\begin{gather}
  u^\Delta q^{-\Delta^2/2b}\Lambda^{-b}q^{\Delta^2/2b}u^{-\Delta} 
  = u^bq^{b/2}q^{-\Delta}\Lambda^{-b}, 
  \label{Lax(0)key3}\\
  \Gamma'_{-}(uq^{-\rho})^{-1}q^\Delta\Gamma'_{-}(uq^{-\rho})
    = q^\Delta(1 + uq^{-1/2}\Lambda^{-1})^{-1} 
  \label{Lax(0)key4}
\end{gather}
and 
\beq
\begin{gathered}
  \Gamma_{+}(u^{-1}q^{-\rho})q^{-\Delta}\Gamma_{+}(u^{-1}q^{-\rho})^{-1}
    = q^{-\Delta}(1 - u^{-1}q^{-1/2}\Lambda)^{-1},\\
  \Gamma'_{+}(u^{-1}q^{-\rho})q^{-\Delta}\Gamma'_{+}(u^{-1}q^{-\rho})^{-1}
    = q^{-\Delta}(1 + u^{-1}q^{-1/2}\Lambda). 
\end{gathered}
\label{Lax(0)key5}
\eeq
By the identities (\ref{Lax(0)key1}) and (\ref{Lax(0)key3})  
and the dressing relations 
\beqnn
  L'_{(0)}{}^a = W'_{(0)}\Lambda^a W'_{(0)}{}^{-1},\quad 
  \Lbar'_{(0)}{}^{-b} =\Wbar'_{(0)}\Lambda^{-b}\Wbar'_{(0)}{}^{-1}, 
\eeqnn
$L'_{(0)}{}^a$ and $\Lbar'_{(0)}{}^{-b}$ can be expressed as 
\begin{align*}
  L'_{(0)}{}^a 
  &= q^{a/2}q^{\Delta^2/2a}\prod_{i=1}^a\Gamma_{-}(Q^{(i)}q^{-\rho})^{-1}
     \cdot\prod_{j=1}^b\Gamma'_{-}(Q^{(a+j)}q^{-\rho})^{-1}\nonumber\\
  &\quad\mbox{}\times
     q^\Delta\prod_{j=1}^b\Gamma'_{-}(Q^{(a+j)}q^{-\rho})
     \cdot\prod_{i=1}^a\Gamma_{-}(Q^{(i)}q^{-\rho})
     \cdot\Lambda^aq^{-\Delta^2/2a}
\end{align*}
and 
\begin{align*}
  \Lbar'_{(0)}{}^{-b}
  &= (P_1\cdots P_{a-1}QR_{b-1}\cdots R_1)^b\nonumber\\
  &\quad\mbox{}\times
     q^{b/2}q^{\Delta^2/2a}\prod_{i=1}^a\Gamma_{+}(Q^{(i)-1}q^{-\rho})
     \cdot\prod_{j=1}^b\Gamma'_{+}(Q^{(a+j)-1}q^{-\rho})\nonumber\\
  &\quad\mbox{}\times
     q^{-\Delta}\prod_{j=1}^b\Gamma'_{+}(Q^{(a+j)-1}q^{-\rho})^{-1}
     \cdot\prod_{i=1}^a\Gamma_{+}(Q^{(i)-1}q^{-\rho})^{-1}
     \cdot\Lambda^{-b}q^{\Delta^2/2a}. 
\end{align*}
By (\ref{Lax(0)key2}), (\ref{Lax(0)key4}) and (\ref{Lax(0)key5}), 
the right hand side can be simplified as 
\begin{align}
  L'_{(0)}{}^a 
  &= q^{a/2}q^{\Delta^2/2a}q^\Delta 
    \prod_{i=1}^a(1 - Q^{(i)}q^{-1/2}\Lambda^{-1})\nonumber\\
  &\quad\mbox{}\times 
     \prod_{j=1}^b(1 + Q^{(a+j)}q^{-1/2}\Lambda^{-1})^{-1}
     \cdot\Lambda^a q^{-\Delta^2/2a}
\label{L'(0)-factorized}
\end{align}
and 
\begin{align}
  \Lbar'_{(0)}{}^{-b} 
  &= (P_1\cdots P_{a-1}QR_{b-1}\cdots R_1)^b\nonumber\\
  &\quad\mbox{}\times
     q^{b/2}q^{\Delta^2/2a}q^{-\Delta}
     \prod_{i=1}^a(1 - Q^{(i)-1}q^{-1/2}\Lambda)^{-1}\nonumber\\
  &\quad\mbox{}\times
     \prod_{j=1}^b(1 + Q^{(a+j)-1}q^{-1/2}\Lambda)
     \cdot\Lambda^{-b}q^{-\Delta^2/2a}. 
\label{Lbar'(0)-factorized}
\end{align}

Since 
\begin{align*}
  &q^{a/2}q^\Delta\prod_{i=1}^a(1 - Q^{(i)}q^{-1/2}\Lambda^{-1})
   \cdot\prod_{j=1}^b(1 + Q^{(a+j)}q^{-1/2}\Lambda^{-1})^{-1}\cdot\Lambda^a\\
  &= q^{a/2}q^\Delta\prod_{i=1}^a(\Lambda - Q^{(i)}q^{-1/2})
    \cdot\prod_{j=1}^b(1 + Q^{(a+j)}q^{-1/2}\Lambda^{-1})^{-1} 
\end{align*}
and 
\begin{align*}
  &q^{b/2}q^{-\Delta}\prod_{i=1}^a(1 - Q^{(i)-1}q^{-1/2}\Lambda)^{-1}
   \cdot\prod_{j=1}^b(1 + Q^{(a+j)-1}q^{-1/2}\Lambda)\cdot\Lambda^{-b}\\
  &= q^{b/2}\prod_{i=1}^a(1 - Q^{(i)-1}q^{1/2}\Lambda)^{-1}
     \cdot\prod_{j=1}^b(1 + Q^{(a+j)-1}q^{1/2}\Lambda)
     \cdot \Lambda^{-b}q^{-b}q^{-\Delta}\\
  &= \prod_{i=1}^a(-Q^{(i)})\cdot\prod_{j=1}^bQ^{(a+j)-1}
     \cdot\prod_{j=1}^b(1 + Q^{(a+j)}q^{-1/2}\Lambda^{-1})\\
  &\qquad\mbox{}\times 
     \left(q^{a/2}q^\Delta\prod_{i=1}^a(\Lambda - Q^{(i)}q^{-1/2})\right)^{-1}, 
\end{align*}
the foregoing results (\ref{L'(0)-factorized}) 
and (\ref{Lbar'(0)-factorized}) can be restated as follows. 

\begin{prop}
$L'_{(0)}{}^a$ and $\Lbar'_{(0)}{}^{-b}$ can be factorized as 
\beq
  L'_{(0)}{}^a = B'_{(0)}C'_{(0)}{}^{-1},\quad 
  \Lbar'_{(0)}{}^{-b} = D'C'_{(0)}B'_{(0)}{}^{-1}, 
\label{LLbar'(0)-BC'(0)}
\eeq
where 
\begin{gather}
  B'_{(0)} = q^{a/2}q^{\Delta^2/2a}q^\Delta
      \prod_{i=1}^a(\Lambda - Q^{(i)}q^{-1/2})\cdot q^{-\Delta^2/2a},\\
  C'_{(0)} = q^{\Delta^2/2a}\prod_{j=1}^b(1 + Q^{(a+j)}q^{-1/2}\Lambda^{-1})
      \cdot q^{-\Delta^2/2a},\\
  D' = (P_1\cdots P_{a-1}QR_{b-1}\cdots R_1)^b 
       \prod_{i=1}^a(-Q^{(i)})\cdot\prod_{j=1}^bQ^{(a+j)-1}. 
\end{gather}
$B'_{(0)}$ and $C'_{(0)}$ are polynomials in $\Lambda^{\pm 1}$ 
of the form 
\beq
\begin{gathered}
  B'_{(0)} = \Lambda^a + \beta'_{1(0)}\Lambda^{a-1} 
    + \cdots \beta'_{a(0)},\\
  C'_{(0)} = 1 + \gamma'_{1(0)}\Lambda^{-1} 
    + \cdots + \gamma'_{b(0)}\Lambda^{-b},
\end{gathered}
\eeq
where $\beta'_{i(0)}$'s and $\gamma'_{j(0)}$'s are diagonal matrices.  
The inverse matrices of $B'_{(0)}$ and $C'_{(0)}$ are understood 
to be power series of $\Lambda^{\pm 1}$ of the form 
\beq
\begin{gathered}
  B'_{(0)}{}^{-1} = \beta'_{a(0)}{}^{-1} 
    - \beta'_{a-1(0)}\beta'_{a(0)}{}^{-1}\Lambda\beta'_{a(0)}{}^{-1} 
    + \cdots,\\
  C'_{(0)}{}^{-1} = 1 - \gamma'_{1(0)}\Lambda^{-1} + \cdots. 
\end{gathered}
\eeq
\end{prop}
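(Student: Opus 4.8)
The plan is to read the claim off the expressions (\ref{L'(0)-factorized}) and (\ref{Lbar'(0)-factorized}) that have already been derived, together with the two rewriting identities displayed just before the proposition, and then check the structural assertions by a direct expansion. First I would treat $L'_{(0)}{}^a$. In (\ref{L'(0)-factorized}) the factor $\prod_{j=1}^b(1+Q^{(a+j)}q^{-1/2}\Lambda^{-1})^{-1}$ has \emph{scalar} coefficients, so it commutes with $\Lambda^a$; moving $\Lambda^a$ to the left and distributing it among the $a$ factors of $\prod_{i=1}^a(1-Q^{(i)}q^{-1/2}\Lambda^{-1})$ turns that product into $\prod_{i=1}^a(\Lambda-Q^{(i)}q^{-1/2})$, exactly as in the display immediately preceding the proposition. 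This leaves $L'_{(0)}{}^a=q^{a/2}q^{\Delta^2/2a}q^\Delta\prod_{i=1}^a(\Lambda-Q^{(i)}q^{-1/2})\cdot\prod_{j=1}^b(1+Q^{(a+j)}q^{-1/2}\Lambda^{-1})^{-1}\cdot q^{-\Delta^2/2a}$. Inserting $1=q^{-\Delta^2/2a}q^{\Delta^2/2a}$ between the $\prod_i$-block and the $\prod_j$-block exhibits the left block as $B'_{(0)}$ by definition and the right block as $q^{\Delta^2/2a}\prod_{j=1}^b(1+Q^{(a+j)}q^{-1/2}\Lambda^{-1})^{-1}q^{-\Delta^2/2a}=C'_{(0)}{}^{-1}$, because conjugation by $q^{\Delta^2/2a}$ commutes with inversion; hence $L'_{(0)}{}^a=B'_{(0)}C'_{(0)}{}^{-1}$.

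For $\Lbar'_{(0)}{}^{-b}$ I would proceed identically, now feeding (\ref{Lbar'(0)-factorized}) into the second rewriting identity displayed above the proposition, which expresses $q^{b/2}q^{-\Delta}\prod_{i=1}^a(1-Q^{(i)-1}q^{-1/2}\Lambda)^{-1}\prod_{j=1}^b(1+Q^{(a+j)-1}q^{-1/2}\Lambda)\Lambda^{-b}$ as $\prod_{i=1}^a(-Q^{(i)})\prod_{j=1}^bQ^{(a+j)-1}$ times $\prod_{j=1}^b(1+Q^{(a+j)}q^{-1/2}\Lambda^{-1})\bigl(q^{a/2}q^\Delta\prod_{i=1}^a(\Lambda-Q^{(i)}q^{-1/2})\bigr)^{-1}$. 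Collecting the scalar prefactor $(P_1\cdots P_{a-1}QR_{b-1}\cdots R_1)^b$ of (\ref{Lbar'(0)-factorized}) with these constants produces exactly $D'$, and inserting $1=q^{-\Delta^2/2a}q^{\Delta^2/2a}$ between the $\prod_j$-block and the inverse block identifies them, after conjugation by the outer $q^{\pm\Delta^2/2a}$, with $C'_{(0)}$ and $B'_{(0)}{}^{-1}$ respectively, giving $\Lbar'_{(0)}{}^{-b}=D'C'_{(0)}B'_{(0)}{}^{-1}$.

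It remains to confirm the shapes of $B'_{(0)}$ and $C'_{(0)}$. Expanding the finite product $\prod_{i=1}^a(\Lambda-Q^{(i)}q^{-1/2})=\Lambda^a+c_1\Lambda^{a-1}+\cdots+c_a$ with scalar $c_m$, a direct computation as in (\ref{Lax(0)key1}) shows that conjugation by $q^{\Delta^2/2a}$ sends each $\Lambda^{a-m}$ to $\Lambda^{a-m}$ times a diagonal matrix, and prepending $q^{a/2}q^\Delta$ makes the top term exactly $\Lambda^a$; hence $B'_{(0)}=\Lambda^a+\beta'_{1(0)}\Lambda^{a-1}+\cdots+\beta'_{a(0)}$ with diagonal $\beta'_{m(0)}$, and the same argument gives $C'_{(0)}=1+\gamma'_{1(0)}\Lambda^{-1}+\cdots+\gamma'_{b(0)}\Lambda^{-b}$. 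Since the scalar part $\prod_{i=1}^a(-Q^{(i)}q^{-1/2})$ of the bottom coefficient is non-zero ($Q^{(i)}$ being products of the non-zero parameters $P_i,Q$), $\beta'_{a(0)}$ is an invertible diagonal matrix, so $B'_{(0)}$ and $C'_{(0)}$ are invertible as formal series in $\Lambda^{\pm1}$; expanding the inverses as geometric series in their strictly triangular parts yields the displayed formulas for $B'_{(0)}{}^{-1}$ and $C'_{(0)}{}^{-1}$.

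The only point that demands care — the ``hard part,'' such as it is — is the bookkeeping of the diagonal conjugating factors $q^{\pm\Delta^2/2a}$ (the avatars of the fractional framing factors) and of the scalar prefactors: one must check that every $q^{\Delta^2/2a}$ is paired with a $q^{-\Delta^2/2a}$ so that $B'_{(0)}$ and $C'_{(0)}$ genuinely appear as two-sided conjugates, and that the accumulated constants from (\ref{Lbar'(0)-factorized}) and the rewriting identity collapse to precisely $D'=(P_1\cdots P_{a-1}QR_{b-1}\cdots R_1)^b\prod_{i=1}^a(-Q^{(i)})\prod_{j=1}^bQ^{(a+j)-1}$. There is no analytic obstacle, since all products are finite and all inverses are formal power series in $\Lambda^{\pm1}$; the argument is purely algebraic rearrangement of material already in hand.
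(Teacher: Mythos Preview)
Your proposal is correct and follows essentially the same route as the paper: the paper's argument consists of the derivations (\ref{L'(0)-factorized}) and (\ref{Lbar'(0)-factorized}) and the two ``Since\dots'' rewriting identities displayed immediately before the proposition, after which the factorized form is simply read off by inserting $q^{-\Delta^2/2a}q^{\Delta^2/2a}$ between the two product blocks --- exactly as you do. Your explicit verification of the polynomial shapes of $B'_{(0)}$ and $C'_{(0)}$ and of the invertibility of $\beta'_{a(0)}$ is a small elaboration beyond what the paper spells out, but not a different method.
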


\subsection{Reductions of 2D Toda hierarchy}

The factorized expressions (\ref{LLbar(0)-BC(0)}) 
and (\ref{LLbar'(0)-BC'(0)}) of the initial values 
of the Lax operators imply that these special solutions 
of the 2D Toda hierarchy belong to the following reductions. 

\begin{itemize}
\item[(i)] 
{\it Bi-graded Toda hierarchy of bi-degree $(a,b)$\/} 
\cite{Kuperschmidt85,Carlet06}: 
This reduction is characterized by the algebraic relation 
\beq
  L^a = D^{-1}\Lbar^{-b},
\label{L^a=Lbar^(-b)}
\eeq
where $D$ is a non-zero constant.  
$D$ can be normalized to $D = 1$ by rescaling $\tbar_k$'s.  
Under the reduction condition (\ref{L^a=Lbar^(-b)}), 
both sides become a Laurent polynomial 
\beq
  \frakL = \Lambda^a + \alpha_1\Lambda^{a-1} + \cdots 
           + \alpha_{a+b}\Lambda^{-b} 
\eeq
of bi-degree $(a,b)$ in $\Lambda$ 
with $\alpha_k$'s being diagonal matrices. 
The Lax equations (\ref{Lax-eq}) of the 2D Toda hierarchy 
thereby reduce to the Lax equations 
\beq
  \frac{\rd\frakL}{\rd t_k} = [B_k,\frakL],\quad 
  \frac{\rd\frakL}{\rd\tbar_k} = [\Bbar_k,\frakL]
\eeq
for the reduced Lax matrix $\frakL$. 
$L$ and $\Lbar^{-1}$ can be reconstructed 
from $\frakL$ as fractional powers 
\beqnn
\begin{gathered}
  \frakL^{1/a} = \Lambda + u_1 + u_2\Lambda^{-1} + \cdots,\\
  \frakL^{1/b} = \ubar_0\Lambda^{-1} + \ubar_1 + \ubar_2\Lambda + \cdots
\end{gathered}
\eeqnn
of two different (descending and ascending) types.  
A special case of this reduction is obtained 
by assuming the factorization 
\beq
  L^a = D^{-1}\Lbar^{-b} = BC 
\label{LLbar-BC(i)}
\eeq
of $L^a$ and $\Lbar^{-b}$, where $B$ and $C$ 
are polynomials in $\Lambda^{\pm 1}$ of the form 
\beq
\begin{gathered}
  B = \Lambda^a + \beta_1\Lambda^{a-1} + \beta_a,\\
  C = 1 + \gamma_1\Lambda^{-1} + \cdots + \gamma_b\Lambda^{-b} 
\end{gathered}
\label{BC}
\eeq
with $\beta_i$'s and $\gamma_j$'s being diagonal matrices.  
It is easy to see that the factorized form  
(\ref{LLbar-BC(i)}) of $L^a$ and $\Lbar^{-b}$ 
is preserved by the time evolutions of the 2D Toda hierarchy.  
The proof is fully parallel to the case of 
the rational reductions \cite{BCRR14}.  
Thus the initial values $L_{(0)}$ and $\Lbar_{(0)}{}^{-1}$ 
determine a special solution of the bi-graded Toda hierarchy 
of bi-degree $(a,b)$.  

\item[(ii)] 
{\it Rational reduction of bi-degree $(a,b)$\/} \cite{BCRR14}: 
This reduction is characterized by the factorization 
\beq
  L^a = BC^{-1}, \quad \Lbar^{-b} = DCB^{-1}, 
\label{LLbar-BC(ii)}
\eeq
where $B$ and $C$ are matrices of the same form as (\ref{BC}) 
and $D$ is a non-zero constant.  $D$ can be normalized 
to $D = 1$ by rescaling $\tbar_k$'s.  
The inverse matrices $B^{-1}$ and $C^{-1}$ are understood 
to be power series of $\Lambda^{\pm 1}$ of the form 
\beq
\begin{gathered}
  B^{-1} = \beta_a^{-1} - \beta_{a-1}\beta_a^{-1}\Lambda\beta_a^{-1}
    + \cdots,\\
  C^{-1} = 1 - \gamma_1\Lambda^{-1} + \cdots. 
\end{gathered}
\eeq
$L$ and $\Lbar^{-1}$ can be reconstructed from $B$ and $C$ as 
\beqnn
\begin{gathered}
  L = (BC^{-1})^{1/a} = \Lambda + u_1 + u_2\Lambda^{-1} + \cdots,\\
  \Lbar^{-1} = (DCB^{-1})^{1/b} = \ubar_0\Lambda^{-1} + \ubar_1 
    + \ubar_2\Lambda + \cdots. 
\end{gathered}
\eeqnn
As shown by Brini et al. \cite{BCRR14}, the factorized form 
(\ref{LLbar-BC(ii)}) of $L^a$ and $\Lbar^{-b}$ is preserved 
by the time evolutions of the 2D Toda hierarchy.  
The reduced integrable hierarchy is a generalization 
of the Ablowitz-Ladik hierarchy \cite{AL75,Vekslerchik97} 
or, equivalently, the relativistic Toda hierarchy 
\cite{Ruijsenaars90,KMZ96,Suris97}, which amounts 
to the case where $a = b = 1$ \cite{BCR11}. 
Thus the initial values $L'_{(0)}$ and $\Lbar'_{(0)}{}^{-1}$ 
determines a special solution of the rational reduction 
of bi-degree $(a,b)$.  
\end{itemize}

We have thus arrived at the following conclusion. 

\begin{theorem}
\quad
\begin{itemize}
\item[\rm (i)] The generating matrix (\ref{U}) 
determines a special solution of the 2D Toda hierarchy 
for which the Lax operators have the factorized form 
(\ref{LLbar-BC(i)}) at all time.  The first orbifold model 
thus corresponds to a special solution 
of the bi-graded Toda hierarchy of bi-degree $(a,b)$.  
\item[\rm (ii)] The generating matrix (\ref{U'}) 
determines a special solution of the 2D Toda hierarchy 
for which the Lax operators have the factorized form 
(\ref{LLbar-BC(ii)}) at all time.  The second orbifold model 
thus corresponds to a special solution 
of the rational reduction of bi-degree $(a,b)$.  
\end{itemize}
\end{theorem}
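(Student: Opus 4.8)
The plan is to read off the statement from the two ingredients already assembled in Sections 5 and 6: the explicit solution of the factorization problem (\ref{factor-problem}) at the initial time $\bst=\bstbar=\bszero$, and the persistence of the resulting factorized structure of $L^a,\Lbar^{-b}$ under all flows of the 2D Toda hierarchy.

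First I would recall the chain of identifications. By Theorems 1 and 2 the deformed partition functions $Z_{a,b}(s,\bst)$ and $Z'_{a,b}(s,\bst,\bstbar)$ are, up to the elementary prefactors $f_{a,b}$ and $f'_{a,b}$, restrictions of the 2D Toda tau functions (\ref{tau(g)}) and (\ref{tau(g')}) generated by $g$ and $g'$; in matrix language these are the matrices $U$ and $U'$ of (\ref{U})--(\ref{U'}). The corresponding solution in the Lax formalism is governed by the factorization problem (\ref{factor-problem}), whose solution $(W,\Wbar)$ (resp.\ $(W',\Wbar')$) dresses $\Lambda^{\pm1}$ into $(L,\Lbar)$ (resp.\ $(L',\Lbar')$) via (\ref{LLbar-WWbar}). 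The Proposition of Section 5.2 solves this problem explicitly at $\bst=\bstbar=\bszero$, giving $U=W_{(0)}{}^{-1}\Wbar_{(0)}$ and $U'=W'_{(0)}{}^{-1}\Wbar'_{(0)}$; in particular $\langle s|g|s\rangle$ and $\langle s|g'|s\rangle$ are non-zero (so that the solutions are regular), and $W_{(0)},\Wbar_{(0)}$, resp.\ $W'_{(0)},\Wbar'_{(0)}$, are the initial values of the dressing operators.

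Next I would feed these initial dressing operators into $L^a=W\Lambda^aW^{-1}$ and $\Lbar^{-b}=\Wbar\Lambda^{-b}\Wbar^{-1}$; this is exactly the computation carried out in Sections 6.1 and 6.2 by means of the scaling and commutation identities (\ref{Lax(0)key1})--(\ref{Lax(0)key5}). The outcome is that, for $U$, $L_{(0)}{}^a=D^{-1}\Lbar_{(0)}{}^{-b}=B_{(0)}C_{(0)}$ with $B_{(0)},C_{(0)}$ of the form (\ref{BC}), while for $U'$, $L'_{(0)}{}^a=B'_{(0)}C'_{(0)}{}^{-1}$ and $\Lbar'_{(0)}{}^{-b}=D'C'_{(0)}B'_{(0)}{}^{-1}$. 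Hence the initial data already lie on the locus (\ref{LLbar-BC(i)}) of the bi-graded Toda hierarchy of bi-degree $(a,b)$, respectively on the locus (\ref{LLbar-BC(ii)}) of the rational reduction of bi-degree $(a,b)$, as described in Section 6.3; for the first model this is consistent with the algebraic relations (\ref{Jg=gJ}) already known to be satisfied by $g$.

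It then remains to invoke invariance of these loci under the 2D Toda flows. For (\ref{LLbar-BC(ii)}) this is the result of Brini et al.\ \cite{BCRR14}. For (\ref{LLbar-BC(i)}) the argument is fully parallel: one first observes that $M:=L^a-D^{-1}\Lbar^{-b}$ satisfies the linear equations $\rd M/\rd t_k=[B_k,M]$ and $\rd M/\rd\tbar_k=[\Bbar_k,M]$, so $M\equiv0$ follows from $M|_{\bst=\bstbar=\bszero}=0$, and then that the finer splitting $\frakL=BC$ with $B,C$ of the form (\ref{BC}) is itself preserved, since the projections onto the upper- and strictly-lower-triangular parts entering $B_k,\Bbar_k$ act compatibly with this factorization and force $\rd B/\rd t_k$, $\rd C/\rd t_k$ and their $\tbar_k$-analogues to keep the prescribed shapes. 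Combining persistence with the initial-time statement yields the factorized forms (\ref{LLbar-BC(i)}) and (\ref{LLbar-BC(ii)}) at all times, which are assertions (i) and (ii). I expect the only genuinely non-routine point to be this last persistence argument --- in particular the uniqueness of the intermediate factorization $\frakL=BC$ and the verification that the flows do not mix its positive and negative blocks --- but since it runs entirely parallel to \cite{BCRR14} I would transcribe that computation rather than reconstruct it.
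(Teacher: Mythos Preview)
Your proposal is correct and follows essentially the same approach as the paper: establish the factorized form of $L^a,\Lbar^{-b}$ at the initial time via the explicit solution of the factorization problem (Sections 5.2, 6.1--6.2), and then invoke the preservation of this form under the 2D Toda flows, citing \cite{BCRR14} for the rational reduction and noting that the bi-graded case is entirely parallel. Your sketch of the persistence argument for (i) (via the linear evolution of $M=L^a-D^{-1}\Lbar^{-b}$ and the stability of the $BC$ splitting) actually supplies a bit more detail than the paper, which simply asserts that the argument is ``fully parallel'' to \cite{BCRR14}.
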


\begin{remark}
In the case of open string amplitudes 
on a generalized conifold of the bubble type \cite{Takasaki14}, 
the Lax operators take yet another factorized form 
\beq
  L = B\Lambda^{1-N}C^{-1}, \quad 
  \Lbar^{-1} = DC\Lambda^{N-1}B^{-1}, 
\label{LLbar-BC(iii)}
\eeq
where and $B$ and $C$ are polynomials in $\Lambda^{\pm 1}$ 
of the form 
\beq
\begin{gathered}
  B = \Lambda^N + \beta_1\Lambda^{N-1} + \cdots + \beta_N,\\
  C = 1 + \gamma_1\Lambda^{-1} + \cdots + \gamma_N\Lambda^{-N} 
\end{gathered}
\eeq
with $\beta_i$'s and $\gamma_i$'s being diagonal matrices.  
An obvious difference between (\ref{LLbar-BC(ii)}) 
and (\ref{LLbar-BC(iii)}) is that the former, unlike the latter, 
contains the powers $L^a,\Lbar^{-b}$ of the Lax operators. 
This difference stems from the difference of the framing factors, 
namely, $q^{\Delta^2/2a},q^{\Delta^2/2b}$ in the orbifold case 
and $q^{\Delta^2/2}$ in the ordinary ($a = b = 1$) case.  
The fractional framing factors $q^{\Delta^2/2a},q^{\Delta^2/2b}$ 
yield the terms $\Lambda^a,\Lambda^{-b}$ 
in (\ref{Lax(0)key1}) and (\ref{Lax(0)key3}).  
These powers of $\Lambda^{\pm 1}$ eventually turn into 
the powers $L^a,\Lbar^{-b}$ of $L,\Lbar^{-1}$. 
\end{remark}

\subsection*{Acknowledgements}

This work is partly supported by JSPS KAKENHI Grant 
No. 24540223 and No. 25400111.

\end{document}